\let\oldrestriction\restriction
\renewcommand{\restriction}{\!\oldrestriction}
\newcommand{\M}{\mathcal{M}}
\renewcommand{\paragraph}[1]{\smallskip
        \noindent\textbf{#1}.}
\newcommand{\strucA}{\mathcal A}
\newcommand{\strucB}{\mathcal B}
\newcommand{\strucD}{\mathcal D}
\newcommand{\graphG}{\mathcal G}
\newcommand{\graphH}{\mathcal{H}}
\newcommand{\graphK}{\mathcal K}
\newcommand{\treeT}{\mathcal T}
\newcommand{\vertexG}{V_{\graphG}}
\newcommand{\vertexH}{V_{\graphH}}
\newcommand{\edgeG}{E_{\graphG}}
\newcommand{\edgeH}{E_{\graphH}}
\newcommand{\vertexT}{V_{\treeT}}
\newcommand{\classA}{\mathsf A}
\newcommand{\classB}{\mathsf B}
\newcommand{\classT}{\mathsf T}
\newcommand{\Hom}{\ensuremath{\operatorname{Hom}}}
\newcommand{\classall}{\textbf{\_}}
\newcommand{\dom}{\operatorname{\mathsf{dom}}}
\newcommand{\weight}{\operatorname{\mathsf{weight}}}
\newcommand{\CSP}{\operatorname{\mathsf{HOM}}}%
\newcommand{\poly}{\operatorname{\mathsf{poly}}}
\newcommand{\tw}{\operatorname{\mathsf{tw}}}
\newcommand{\fhtw}{\operatorname{\mathsf{fhtw}}}
\newcommand{\subw}{\operatorname{\mathsf{subw}}}
\newcommand{\DNNF}{\textsf{DNNF}}
\newcommand{\var}{\operatorname{\mathsf{var}}}
\newcommand{\fec}{\rho^{\ast}}
\newcommand{\C}{\mathtt{C}}
\newcommand{\ctc}{$\{\cup,\times\}$-circuit}
\newcommand{\R}{\mathcal{R}}
\newcommand{\strucX}{\mathcal X}
\newcommand{\strucY}{\mathcal Y}
\title{Factorised Representations of Join Queries: Tight Bounds and a New Dichotomy} 
\titlerunning{Factorised Representations of Join Queries}
\author{Christoph Berkholz}{Technische Universit\"at Ilmenau, Germany}{christoph.berkholz@tu-ilmenau.de}{}{}
\author{Harry Vinall-Smeeth}{Technische Universit\"at Ilmenau, Germany}{harry.vinall-smeeth@tu-ilmenau.de}{}{}
\authorrunning{Christoph Berkholz and Harry Vinall-Smeeth}
\keywords{join queries,
homomorphisms,
factorised databases,
succinct representation,
knowledge compilation,
lower bounds}
\newcommand{\confORfull}[2]{#2} 
\begin{document}
        
        \maketitle

\begin{abstract}
A common theme in \emph{factorised databases} and \emph{knowledge
  compilation} is the representation of solution sets in a
useful yet succinct data structure. In this paper, we study the
representation of the result of join queries (or,
equivalently, the set of homomorphisms between two relational structures).
We focus on the very general format of $\{\cup,
\times\}$-circuits---also known as \emph{d-representations} or {\DNNF}
circuits---and aim to find the limits of this approach.

In prior work, it has been shown that there always exists a $\{\cup, \times\}$-circuit of
size $N^{O(\subw)}$ representing the query result, where $N$ is the size of the
database and $\subw$ the submodular width of the query.
If the arity of all relations is bounded by a constant, then $\subw$ is linear in the treewidth $\tw$ of the query. In this setting, \confORfull{Berkholz and Vinall-Smeeth}{the authors of this paper} proved a lower bound of
$N^{\Omega(\tw^\varepsilon)}$ on the circuit
size (ICALP 2023), where $\varepsilon>0$ depends on the excluded grid theorem.

Our first main contribution is to improve this lower bound to
$N^{\Omega(\tw)}$, which is tight up to a constant factor
in the exponent. Our second contribution is a
$N^{\Omega(\subw^{1/4})}$ lower bound on the circuit size for
join queries
over relations of unbounded arity.
Both lower bounds are unconditional lower bounds on the circuit size
for well-chosen database instances.
Their proofs use a combination of structural (hyper)graph theory with communication complexity in a simple yet novel way. While the second lower bound is asymptotically equivalent to
Marx's conditional bound on the decision complexity (JACM 2013),
our $N^{\Theta(\tw)}$ bound in the bounded-arity setting is tight, while 
the best conditional bound on the decision complexity is
$N^{\Omega(\tw/\log \tw)}$. Note that, removing this logarithmic factor in the decision setting is a major open problem.
\end{abstract}

        \section{Introduction}\label{sec:introduction}

A central topic in \emph{factorised databases} is the succinct
representation of query results in a data structure that enables
efficient access, ranging from counting, sampling or enumerating the result
tuples to complex analytical tasks. We study the question: \emph{how
  succinctly can the result of a join query be represented?} Of course, the answer depends on the
representation format under consideration. For the flat
representation (listing all result tuples), the answer is given
by the\confORfull{}{ well-known} AGM-bound~\cite{DBLP:conf/focs/AtseriasGM08}: %
 in the worst case a join query produces
$\Theta(N^{\rho^*})$ tuples, where $\rho^\ast$ is the fractional edge
cover number of the query and $N$ the size of the database.
However, using suitable data structures, much more efficient representations are possible.

In this paper, we consider \emph{$\{\cup,  \times\}$-circuits}
(originally called \emph{d-representations}), which are the most general representation format
studied so far and lead to the most succinct
representations.
In a nutshell, the idea is to decompose the result relation using
union and Cartesian product, where representations of intermediate relations may also be
re-used. The representation can be viewed
as a circuit consisting of $\cup$- and $\times$-gates,
where each gate `computes' a relation and the output gate produces
the query result. 
These $\{\cup,  \times\}$-circuits are
closely related to \DNNF{} circuits from the field of \emph{knowledge
  compilation}, which primarily deals with succinct representations of
Boolean functions. The interaction between database theory and
knowledge compilation has already been quite fruitful and we invite
the reader to consult the recent exposition of Amarilli and Capelli
\cite{DBLP:journals/sigmod/AmarilliC24} for an overview. Our main results
emerge from this connection as we 
use structural
properties of join queries as well as methods from knowledge compilation to
obtain bounds on the representation size.

\begin{example}\label{ex:DB}
  Evaluating $\varphi(x,y,z,w)=R(x,y,z)\land E(x,w)\land
  E(w,y)$ on a database $D$ with
\begin{align*}
         R^D :=& \{a,b,c,d\} \times \{r\}\times \{a,b,c,d\} \; \cup \;
        \{s,t\}\times\{a,b,c,d\}\times \{s,t\} \\
        E^D :=&     \{a,b,c,d\}\times\{r,t\}
        \;\cup\; \{r,s\}\times\{a,b,c,d\}
        \;\cup\; \{t\}\times\{e,f,g,h\}
        \;\cup\; \{e,f,g,h\}\times\{s\}.
\end{align*}
$D$ consists of $N=56$ tuples and the query result $Q(D)$ consists of $48$ tuples, as follows.
\begin{center}
  \addtolength{\tabcolsep}{-0.5em}  \begin{tabular}[h]{cccc|cccc|cccc|cccc||cccc|cccc|cccc|cccc||cccc|cccc|cccc|cccc}
    $x$ & $y$ & $z$ & $w$ &
    $x$ & $y$ & $z$ & $w$ &
    $x$ & $y$ & $z$ & $w$ &
    $x$ & $y$ & $z$ & $w$ &
    $x$ & $y$ & $z$ & $w$ &
    $x$ & $y$ & $z$ & $w$ &
    $x$ & $y$ & $z$ & $w$ &
    $x$ & $y$ & $z$ & $w$ &
    $x$ & $y$ & $z$ & $w$ &
    $x$ & $y$ & $z$ & $w$ &
    $x$ & $y$ & $z$ & $w$ &
    $x$ & $y$ & $z$ & $w$ \\ \hline 
    $a$ & $r$ & $a$ & $r$ &
    $b$ & $r$ & $a$ & $r$ &
    $c$ & $r$ & $a$ & $r$ &
    $d$ & $r$ & $a$ & $r$ &
    $s$ & $a$ & $t$ & $a$ &
    $s$ & $b$ & $t$ & $a$ &
    $s$ & $c$ & $t$ & $a$ &
    $s$ & $d$ & $t$ & $a$ &
    $t$ & $a$ & $s$ & $e$ &
    $t$ & $b$ & $s$ & $e$ &
    $t$ & $c$ & $s$ & $e$ &
    $t$ & $d$ & $s$ & $e$ \\
    $a$ & $r$ & $b$ & $r$ &
    $b$ & $r$ & $b$ & $r$ &
    $c$ & $r$ & $b$ & $r$ &
    $d$ & $r$ & $b$ & $r$ &
    $s$ & $a$ & $t$ & $b$ &
    $s$ & $b$ & $t$ & $b$ &
    $s$ & $c$ & $t$ & $b$ &
    $s$ & $d$ & $t$ & $b$ &
    $t$ & $a$ & $s$ & $f$ &
    $t$ & $b$ & $s$ & $f$ &
    $t$ & $c$ & $s$ & $f$ &
    $t$ & $d$ & $s$ & $f$ \\
    $a$ & $r$ & $c$ & $r$ &
    $b$ & $r$ & $c$ & $r$ &
    $c$ & $r$ & $c$ & $r$ &
    $d$ & $r$ & $c$ & $r$ &
    $s$ & $a$ & $t$ & $c$ &
    $s$ & $b$ & $t$ & $c$ &
    $s$ & $c$ & $t$ & $c$ &
    $s$ & $d$ & $t$ & $c$ &
    $t$ & $a$ & $s$ & $g$ &
    $t$ & $b$ & $s$ & $g$ &
    $t$ & $c$ & $s$ & $g$ &
    $t$ & $d$ & $s$ & $g$ \\
    $a$ & $r$ & $d$ & $r$ &
    $b$ & $r$ & $d$ & $r$ &
    $c$ & $r$ & $d$ & $r$ &
    $d$ & $r$ & $d$ & $r$ &
    $s$ & $a$ & $t$ & $d$ &
    $s$ & $b$ & $t$ & $d$ &
    $s$ & $c$ & $t$ & $d$ &
    $s$ & $d$ & $t$ & $d$ &
    $t$ & $a$ & $s$ & $h$ &
    $t$ & $b$ & $s$ & $h$ &
    $t$ & $c$ & $s$ & $h$ &
    $t$ & $d$ & $s$ & $h$
  \end{tabular}
\end{center}

  $Q(D)$ can be  succinctly represented by the
  following $\{\cup,\times\}$-circuit
  with 9 internal gates and 26
  inputs.\footnote{For simplicity the gates have fan-in larger than 2
  in this example, see Remark~\ref{rem:fan_in}.} 
  The three $\times$-gates correspond to the three blocks from
  the table.
  \begin{center}
    \begin{tikzpicture}
  [wire/.style={thick, -stealth, shorten <= 0.3mm, shorten >= 0.5mm},
  gwire/.style={thick, ->, shorten <= 0.3mm, shorten >= 0.5mm, color=green},
   edge/.style={ -stealth, shorten <= 0.3mm, shorten >= 0.5mm},
   inputgate/.style={inner sep=1pt,minimum size=1mm},
   gate/.style={draw,circle,inner sep=1pt,minimum size=1mm},
   vertex/.style={draw,circle,fill=white,inner sep=2pt,minimum
   size=5.5mm}]

\node[gate]      (g1) at (0,0) {\small $\cup$};
\node[inputgate] (g2) at (1,0) {\scriptsize$y\!\mapsto\! r$};
\node[gate]      (g3) at (2,0) {\small $\cup$};
\node[inputgate] (g4) at (3,0) {\scriptsize$w\!\mapsto\! r$};

\node[inputgate] (g5) at (4,0) {\scriptsize$x\!\mapsto\! s$};
\node[gate]      (g6) at (5,0) {\small $\cup$};
\node[inputgate] (g7) at (6,0) {\scriptsize$z\!\mapsto\! t$};

\node[gate]      (g8) at (7,0) {\small $\cup$};

\node[inputgate] (g9) at (8,0) {\scriptsize$x\!\mapsto\! t$};
\node[inputgate] (g10) at (9,0) {\scriptsize$z\!\mapsto\! s$};
\node[gate]      (g11) at (10,0) {\small $\cup$};

\node[gate]      (x1) at (3,0.75) {\small $\times$};
\node[gate]      (x2) at (6,0.75) {\small $\times$};
\node[gate]      (x3) at (8,0.75) {\small $\times$};
\node[gate]      (u) at (5.5,1.5) {\small $\cup$};

\draw[wire] (g1) -- (x1);
\draw[wire] (g2) -- (x1);
\draw[wire] (g3) -- (x1);
\draw[wire] (g4) -- (x1);
\draw[wire] (g5) -- (x2);
\draw[wire] (g6) -- (x2);
\draw[wire] (g7) -- (x2);
\draw[wire] (g8) -- (x2);
\draw[wire] (g8) -- (x3);
\draw[wire] (g9) -- (x3);
\draw[wire] (g10) -- (x3);
\draw[wire] (g11) -- (x3);

\draw[wire] (x1) -- (u);
\draw[wire] (x2) -- (u);
\draw[wire] (x3) -- (u);

\node[inputgate] (xa) at (-0.5,-0.75) {\scriptsize$x\!\mapsto\! a$};
\node[inputgate] (xb) at (-0.5,-1.25) {\scriptsize$x\!\mapsto\! b$};
\node[inputgate] (xc) at ( 0.5,-0.75) {\scriptsize$x\!\mapsto\! c$};
\node[inputgate] (xd) at ( 0.5,-1.25) {\scriptsize$x\!\mapsto\! d$};

\draw[wire] (xa) -- (g1);
\draw[wire] (xb) .. controls (-0.15,-1) .. (g1);
\draw[wire] (xc) -- (g1);
\draw[wire] (xd) .. controls ( 0.15,-1) .. (g1);

\node[inputgate] (za) at (2-0.5,-0.75) {\scriptsize$z\!\mapsto\! a$};
\node[inputgate] (zb) at (2-0.5,-1.25) {\scriptsize$z\!\mapsto\! b$};
\node[inputgate] (zc) at (2+0.5,-0.75) {\scriptsize$z\!\mapsto\! c$};
\node[inputgate] (zd) at (2+0.5,-1.25) {\scriptsize$z\!\mapsto\! d$};

\draw[wire] (za) -- (g3);
\draw[wire] (zb) .. controls (2-0.15,-1) .. (g3);
\draw[wire] (zc) -- (g3);
\draw[wire] (zd) .. controls (2+0.15,-1) .. (g3);

\node[inputgate] (wa) at (5-0.5,-0.75) {\scriptsize$w\!\mapsto\! a$};
\node[inputgate] (wb) at (5-0.5,-1.25) {\scriptsize$w\!\mapsto\! b$};
\node[inputgate] (wc) at (5+0.5,-0.75) {\scriptsize$w\!\mapsto\! c$};
\node[inputgate] (wd) at (5+0.5,-1.25) {\scriptsize$w\!\mapsto\! d$};

\draw[wire] (wa) -- (g6);
\draw[wire] (wb) .. controls (5-0.15,-1) .. (g6);
\draw[wire] (wc) -- (g6);
\draw[wire] (wd) .. controls (5+0.15,-1) .. (g6);

\node[inputgate] (ya) at (7-0.5,-0.75) {\scriptsize$y\!\mapsto\! a$};
\node[inputgate] (yb) at (7-0.5,-1.25) {\scriptsize$y\!\mapsto\! b$};
\node[inputgate] (yc) at (7+0.5,-0.75) {\scriptsize$y\!\mapsto\! c$};
\node[inputgate] (yd) at (7+0.5,-1.25) {\scriptsize$y\!\mapsto\! d$};

\draw[wire] (ya) -- (g8);
\draw[wire] (yb) .. controls (7-0.15,-1) .. (g8);
\draw[wire] (yc) -- (g8);
\draw[wire] (yd) .. controls (7+0.15,-1) .. (g8);

\node[inputgate] (we) at (10-0.5,-0.75) {\scriptsize$w\!\mapsto\! e$};
\node[inputgate] (wf) at (10-0.5,-1.25) {\scriptsize$w\!\mapsto\! f$};
\node[inputgate] (wg) at (10+0.5,-0.75) {\scriptsize$w\!\mapsto\! g$};
\node[inputgate] (wh) at (10+0.5,-1.25) {\scriptsize$w\!\mapsto\! h$};

\draw[wire] (we) -- (g11);
\draw[wire] (wf) .. controls (10-0.15,-1) .. (g11);
\draw[wire] (wg) -- (g11);
\draw[wire] (wh) .. controls (10+0.15,-1) .. (g11);
  \end{tikzpicture}
  \end{center}
\end{example}

It will be convenient to view join query evaluation as
a homomorphism problem, where the goal is to compute/represent the set
$\Hom(\strucA, \strucB)$ of homomorphisms between a left-hand-side
structure $\strucA$, corresponding to a join query, and a
right-hand-side structure $\strucB$, modeling a relational
database. Moreover, we are interested in the classical data complexity setting,
where $\strucA$ is considered to be fixed and we measure the
complexity in terms of the size $N$ of $\strucB$.
The investigation of this setting was initiated by Olteanu and Z{\'{a}}vodn{\'{y}} \cite{DBLP:journals/tods/OlteanuZ15}, who showed that $\Hom(\strucA, \strucB)$ admits $\{\cup,
\times\}$-circuits of size $O(N^{\fhtw(\strucA)})$, where
$\fhtw(\strucA)$ is the fractional hypertree width of
$\strucA$. Moreover, the
obtained circuits have the additional property of being
\emph{structured} and \emph{deterministic},\footnote{We use
  established terminology from knowledge compilation here, the original name is \emph{deterministic
    d-representations respecting a d-tree}; these are a special case of deterministic, structured $\{\cup,
\times\}$-circuits.}
which in particular allows us to exactly count the number of result
tuples.
It also follows from
\cite{DBLP:journals/jacm/Marx13,DBLP:conf/mfcs/BerkholzS19}, that for
arbitrary small $\delta>0$, there always exist 
general $\{\cup,\times\}$-circuits of size
$O(N^{(1+\delta)\subw(\strucA)})$  that represent all
homomorphisms, where $\subw(\strucA)$ is the \emph{submodular width}
of $\strucA$. These representations can in particular be used for
constant delay enumeration \cite{DBLP:conf/mfcs/BerkholzS19} and
approximate counting \cite{meel2024cfg}. 

\confORfull{Recent progress on lower bounds has been made by Berkholz and
Vinall-Smeeth~\cite{DBLP:conf/icalp/BerkholzV23}, who showed that}{The authors of this paper recently made progress on lower bounds~\cite{DBLP:conf/icalp/BerkholzV23}, by showing that}
if the arity of all relations is bounded by some global constant, then there
is an $\varepsilon>0$ such that for
every $\strucA$ and $N$ there are structures $\strucB$ of size $N$
such that any $\{\cup,\times\}$-circuit representing $\Hom(\strucA,
\strucB)$ has size $N^{\Omega(\tw(\strucA)^\varepsilon)}$, where
$\tw(\strucA)$ is the \emph{treewidth} of $\strucA$. Note that in this
bounded-arity-setting (one may also just think of graphs) we have
$\tw(\strucA)\approx \fhtw(\strucA)\approx\subw(\strucA)$, whereas in
 general one may have $\tw(\strucA)\gg
 \fhtw(\strucA)\gg\subw(\strucA)$.\footnote{ More precisely, they
   disagree only by a factor of $r$ on $r$-ary structures. In
   general, structures can have constant $\subw$ ($\fhtw$) and
   unbounded $\fhtw$ ($\tw$) \cite{DBLP:journals/mst/Marx11}.}

\begin{theorem} \label{thm: tw_struc}
Let $r$ be an integer. Then there is an $\varepsilon>0$ such that for every structure $\strucA$ of arity at most $r$
there exist arbitrarily large structures $\strucB$  such
that any $\{\cup,  \times\}$-circuit for $\Hom(\strucA, \strucB)$ has
size at least $\|B\|^{\varepsilon \tw(\strucA)}$.
\end{theorem}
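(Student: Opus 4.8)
The plan is to keep the overall strategy of \cite{DBLP:conf/icalp/BerkholzV23} --- compress a small $\{\cup,\times\}$-circuit into a cover of $\Hom(\strucA,\strucB)$ by combinatorial rectangles and then hit it with a communication lower bound --- but to feed it a structural fact that controls treewidth \emph{without} any polynomial loss, so that the grid-theorem exponent disappears. Since $\tw(\strucA)$ is the treewidth of the Gaifman graph $G$ of $\strucA$, write $k=\tw(\strucA)=\tw(G)$; the arity bound will enter only in the construction of the hard instance $\strucB$. The goal is: for every $N$, build $\strucB$ with $\|B\|\ge N$ such that $\Hom(\strucA,\strucB)$, viewed inside $B^{V(\strucA)}$, cannot be written as a union of fewer than $N^{\Omega(k)}$ rectangles, where the rectangles are allowed to be taken with respect to \emph{any} bipartition of $V(\strucA)$ that a $\{\cup,\times\}$-circuit might induce. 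By the known correspondence between \DNNF{} size and rectangle covers (see~\cite{DBLP:journals/sigmod/AmarilliC24}), this yields the claimed circuit-size bound.

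\textbf{Structural step.} The exponent $k^{\varepsilon}$ of \cite{DBLP:conf/icalp/BerkholzV23} comes from first extracting a $t\times t$ grid minor, which forces $t=k^{\Theta(\varepsilon)}$. I would avoid minors entirely and work with brambles / well-linked sets: $\tw(G)\ge k$ is equivalent, up to constant factors, to the existence of a bramble of order $\Omega(k)$, equivalently a well-linked set of size $\Omega(k)$, equivalently the statement that every branch/v-tree-style decomposition of $V(\strucA)$ has a node whose induced bipartition $(X,Y)$ of $V(\strucA)$ admits a family of $\Omega(k)$ pairwise vertex-disjoint $X$--$Y$ paths (Menger). The upshot: whatever bipartition the circuit's product structure hands us, there are $\Omega(k)$ disjoint paths of $\strucA$ crossing the cut, and these serve as $\Omega(k)$ parallel communication channels. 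This is the step where linearity in $k$ (rather than $k^{\varepsilon}$) is bought.

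\textbf{Hard instance and communication argument.} For a bipartition $(X,Y)$ with disjoint crossing paths $P_1,\dots,P_\ell$, $\ell=\Omega(k)$, I would define $\strucB$ of any prescribed size $N$ by installing on the edges of $G$ a small \emph{propagation gadget} of domain size $\poly(N)$, engineered so that along each $P_j$ the images under any homomorphism $\phi$ of the two endpoints obey a two-party predicate whose set of $1$-inputs has rectangle-cover number $N^{\Omega(1)}$ --- a \textsc{Disjointness}- or inner-product-type relation on a universe of size $\Theta(\log N)$ --- while $\Hom(\strucA,\strucB)$ stays rich enough to contain a fooling-type set $F$ of $N^{\Omega(\ell)}$ homomorphisms that is spread out across every such cut. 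Then a single homomorphism packs $\ell$ essentially independent hard instances, so any rectangle with respect to $(X,Y)$ contains at most $N^{o(\ell)}$ elements of $F$; since a circuit of size $s$ gives $s$ rectangles covering $F$, we get $s\ge N^{\Omega(\ell)}=N^{\Omega(k)}=\|B\|^{\varepsilon\,\tw(\strucA)}$.

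\textbf{Main obstacle.} The crux is that the bipartition is chosen \emph{adversarially} by the circuit, so the one instance $\strucB$ must be hard against \emph{all} bipartitions of $V(\strucA)$ with separators of size $\Omega(k)$ at once --- in particular against unbalanced ones, where the naive count of $N^{o(\ell)}$ points per rectangle has to be upgraded to a genuine spread-out argument, and against the fact that a general (non-structured) $\{\cup,\times\}$-circuit need not respect any single v-tree. Building a propagation gadget that is (i) realisable by relations of arity $\le r$, (ii) not over-constraining $\Hom$ along the many paths that different cuts share, and (iii) still leaves a fooling set of size $N^{\Omega(k)}$, is the technical heart of the proof; it is here that the \emph{simple but novel} coupling of connectivity (brambles / well-linked sets) with communication complexity does its work, and it is what lets the exponent reach $\Theta(\tw)$, beating the $\tw/\log\tw$ barrier \cite{DBLP:journals/jacm/Marx13} that is still open for the decision problem.
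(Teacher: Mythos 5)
Your overall framework matches the paper's: convert a small $\{\cup,\times\}$-circuit into a rectangle cover and use a separator-free characterisation of treewidth (well-linked/highly connected sets rather than grid minors) to force many constraints across every cut. However, there are two genuine gaps. First, your central structural claim --- that \emph{whatever} bipartition $(X,Y)$ the circuit hands you admits $\Omega(k)$ vertex-disjoint crossing paths --- is false for bipartitions that are balanced only in the number of variables: take a $k$-clique with a long pendant path; a variable-balanced cut can split only the path, and a single edge crosses. You name this as the "main obstacle" (unbalanced cuts, adversarial choice of partition) but do not resolve it. The paper resolves it by proving a \emph{weighted} rectangle-cover lemma (Lemma~\ref{lem:weightedrect}): rectangles can be extracted balanced with respect to any weight function subadditive on disjoint sets, and one takes $f(S)=|S\cap W|$ where $W$ is the $(2k{+}1)$-element set with no balanced $k$-separator from Lemma~\ref{lem: con}. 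For a $W$-balanced partition, the absence of a balanced $k$-separator yields (Lemma~\ref{lem: matching}) a matching of $\lfloor k/3\rfloor$ \emph{edges} of $\strucA$ crossing the cut --- no Menger-style path system or propagation along paths is needed.

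Second, your hard instance is not constructed: the propagation gadgets, the per-path two-party predicate, and the fooling set of size $N^{\Omega(k)}$ that must survive simultaneously against all cuts are exactly the content you defer, and composing a hard predicate along a path of relations (while keeping $\Hom$ large and the arity bounded) is where such a plan typically breaks. The paper sidesteps all of this with a single random graph $\graphH$ (Lemma~\ref{lem: bad_graph}): it has no $\graphK_{a,a}$ with $a\ge 3\log n_{\graphH}$ and contains $\Omega(n_{\graphH}^{|V_\graphG|})$ cliques. Then each crossing matching edge $\{x,y\}$ forces, for any rectangle $\mathcal{R}\subseteq\Hom(\graphG,\graphH)$, that $\min(|\mathcal{R}\restriction_{\{x\}}|,|\mathcal{R}\restriction_{\{y\}}|)\le 3\log n_{\graphH}$ (otherwise the rectangle structure produces a forbidden biclique), so every $W$-balanced rectangle has size at most $n_{\graphH}^{n_{\graphG}-\lfloor k/3\rfloor}(3\log n_{\graphH})^{\lfloor k/3\rfloor}$, while the clique count makes $|\!\Hom(\graphG,\graphH)|=\Omega(n_{\graphH}^{n_{\graphG}})$; dividing gives the $n_{\graphH}^{\Omega(k)}$ bound. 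Finally, the statement is for bounded-arity structures, not graphs; the paper lifts the graph bound via a black-box reduction (Lemma~\ref{lem:reduct}), losing only the constant factor $2/r$ in the exponent --- your sketch does not address this step. So while your instincts about replacing the grid theorem with a well-linkedness-type characterisation are exactly right, the missing weighted-balance mechanism and the missing concrete hard instance are the two ideas that actually make the proof go through.
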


\noindent Our second main result provides the first lower bound when the arity
is unrestricted. Here we show that the submodular width bound is
almost optimal.

\begin{theorem} \label{thm: sub_lb}
There is an $\varepsilon>0$ such that for every structure $\strucA$
there exist arbitrarily large structures $\strucB$  such
that any $\{\cup,  \times\}$-circuit for $\Hom(\strucA, \strucB)$ has
size at least $\|B\|^{\varepsilon \subw(\strucA)^{1/4}}$.
\end{theorem}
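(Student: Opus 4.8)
The plan is to deduce Theorem~\ref{thm: sub_lb} from Theorem~\ref{thm: tw_struc} by combining it with the embedding machinery behind Marx's submodular-width lower bounds~\cite{DBLP:journals/jacm/Marx13}. The underlying idea is that large submodular width lets us plant, inside the fixed left-hand structure $\strucA$, a copy of a \emph{bounded-arity} structure of treewidth $\Omega(\subw(\strucA)^{1/4})$; Theorem~\ref{thm: tw_struc} then supplies a hard database for that planted structure, and I would transfer the bound back to $\strucA$ along the embedding.

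\textbf{The structural embedding.} First I would invoke Marx's analysis of the submodular width: if $\subw(\strucA)\ge w$ then there is a structure $\mathcal H$ of arity at most $2$ -- concretely a $k\times k$ grid, so that $\tw(\mathcal H)=k$ -- with $k=\Omega(w^{1/4})$, together with a reduction that turns any database $\strucB_0$ for $\mathcal H$ into a database $\strucB=\strucB(\strucB_0)$ for $\strucA$ of size $\|\strucB\|=\|\strucB_0\|^{O(1)}$. This is the usual gadget replacement: Marx's highly-connected-set machinery guarantees that the atoms of $\strucA$ can be used to route the grid-tiling constraints of $\mathcal H$, at the price of a polynomial blow-up in the universe and the polynomial loss from $w$ to $w^{1/4}$ in the width of the planted structure -- exactly the loss that already appears in his conditional bound on the decision complexity.

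\textbf{Transfer of circuit bounds.} Next I would verify that this reduction behaves well with respect to $\{\cup,\times\}$-circuits, not just with respect to decision. Since $\strucB$ is obtained from $\strucB_0$ by local gadget replacement, $\Hom(\strucA,\strucB)$ is a projection of a join of $\Hom(\mathcal H,\strucB_0)$ with a family of fixed gadget relations whose internal choices factorise as a Cartesian product over the parts of $\mathcal H$; projection and product-join with such gadget relations can both be simulated on a $\{\cup,\times\}$-circuit with only polynomial overhead. Hence a $\{\cup,\times\}$-circuit of size $s$ for $\Hom(\strucA,\strucB)$ should yield one of size $\poly(s)$ for $\Hom(\mathcal H,\strucB_0)$ -- the circuit-level analogue of the fact that an FPT gadget reduction also transfers representation-size bounds.

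\textbf{Conclusion, and the main obstacle.} Finally I would apply Theorem~\ref{thm: tw_struc} with $r=2$ to $\mathcal H$, obtaining some $\varepsilon'>0$ and arbitrarily large $\strucB_0$ for which every $\{\cup,\times\}$-circuit for $\Hom(\mathcal H,\strucB_0)$ has size at least $\|\strucB_0\|^{\varepsilon'\tw(\mathcal H)}=\|\strucB_0\|^{\varepsilon' k}$; pushing such a $\strucB_0$ through the embedding produces arbitrarily large $\strucB$ with $\|\strucB\|=\|\strucB_0\|^{O(1)}$, and combining with the transfer step gives $\poly(s)\ge\|\strucB_0\|^{\varepsilon' k}$, hence $s\ge\|\strucB\|^{\varepsilon\subw(\strucA)^{1/4}}$ for a suitable $\varepsilon>0$ depending only on the absolute constants above, which is Theorem~\ref{thm: sub_lb}. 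The substance of the proof is the embedding step: Marx's construction was built to preserve solvability, so one must re-examine it to check that the gadgets attach to $\strucA$'s atoms in a way that keeps the fibres of the reduction product-structured, so that \emph{small $\{\cup,\times\}$-circuits} -- and not merely fast algorithms -- pull back along it. Everything communication-theoretic is then inherited as a black box from Theorem~\ref{thm: tw_struc}, and the exponent $\subw^{1/4}$ is not an artefact of the circuit model but comes entirely from the embedding, which is why it coincides with Marx's conditional bound on the decision complexity.
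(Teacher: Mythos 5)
Your route is genuinely different from the paper's, but it has a real gap at its core: the reduction you assume from Marx's machinery does not exist off the shelf in the form you need. Marx's construction is built to preserve the \emph{existence} of solutions, and what it embeds is not a grid but a blow-up of the line graph of a clique, obtained from a uniform concurrent flow on $k=\Omega(\sqrt{w})$ cliques of value $\Omega(k^{-3/2})$; the edge-depth of that embedding is not an absolute constant, and the quantitative bookkeeping that turns this into an exponent $\subw(\strucA)^{1/4}$ is exactly the content you defer to "the usual gadget replacement". Two things in particular are unproven and are where the difficulty lies. First, your transfer needs $\|\strucB\|\le\|\strucB_0\|^{c}$ with $c$ an \emph{absolute} constant (up to an $f(\strucA)$ factor); if $c$ depends on $\strucA$ (as the naive gadget replacement gives, since the domain of an $\strucA$-vertex must encode all planted vertices mapped onto it), the final exponent is $\varepsilon' k/c$ and the theorem is lost. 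Keeping $c$ universal forces you to size the domains according to the flow so that every hyperedge's relation stays polynomial --- i.e.\ to redo the capacity-respecting construction, which is essentially the work the paper does anyway. Second, you need the reduction to preserve the whole solution \emph{set} with a coordinate-wise decoding (so that a $\{\cup,\times\}$-circuit for $\Hom(\strucA,\strucB)$ can be relabelled/projected into one for $\Hom(\mathcal H,\strucB_0)$ with polynomial overhead); this is plausible for a carefully designed gadget construction, but it is not a property Marx states or needs, and you give no argument for it. So the proposal is a plan whose central step --- a representation-size-preserving, universally-polynomial embedding-based reduction of width $\Omega(\subw^{1/4})$ --- is asserted rather than established.

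For contrast, the paper does not pass through an embedding or through Theorem~\ref{thm: tw_struc} at all. It takes the flow characterisation (Theorem~\ref{thm: sub_con}), uses the flow directly to size the domains $\dom(v)$ of a \emph{random} right-hand structure whose relations are $N$-scattered (Lemma~\ref{lem: bad_hypergraph}), and then bounds $\alpha$-balanced rectangle covers via the weighted cover lemma (Lemma~\ref{lem:weightedrect}), where the weight $\alpha$ is again derived from the flow: either many cliques $K_i$ are split by the partition (Lemma~\ref{lem: easy_part}), or many flow paths must cross it (Lemma~\ref{lem:hardpart}). In other words, the flow is used analytically on the rectangle side rather than combinatorially to plant a hard bounded-arity structure; this avoids precisely the transfer and blow-up issues that your outline leaves open. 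If you want to pursue your route, the thing to actually prove is the reduction lemma in the first paragraph --- at which point you will find yourself rebuilding the flow-based instance construction, and the detour through Theorem~\ref{thm: tw_struc} buys little.
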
 

\subparagraph*{Techniques}
We deploy a similar `recipe' for proving both of our main results. We begin with a known characterisation of what it means for a structure $\strucA$ to have high $x$-width, for $x \in \{\textrm{tree, submodular}\}$. In both cases, the characterisation is of the form `there is a subset of the universe of $\strucA$ which is highly connected'. What highly connected means depends on the notion of width but the idea is that this subset is hard to disconnect, for instance by  removing vertices or edges. Next we have to define some bad instance: that is some structure $\strucB$ such that there is no small $\{\cup,  \times\}$-circuit for $\Hom(\strucA, \strucB)$. We do this by using randomness. This is relatively straightforward for Theorem~\ref{thm: tw_struc}; we first describe this proof on a high level.

For this we use the characterisation of treewidth in terms of a highly connected set from \cite{reed1997tree, robertson1995graph}. This characterisation is also used by Marx to prove a conditional lower bound in the decision setting \cite{DBLP:journals/toc/Marx10}. He\confORfull{}{ uses this characterisation to show} showed that every graph with high treewidth has high `embedding power' and can therefore simulate a $3$-SAT instance\confORfull{}{ in an efficient way}. Thus\confORfull{}{, under an assumption about how quickly $3$-SAT can be solved,} 
Marx obtains conditional lower bounds on the decision problem which are tight up to a logarithmic factor in the exponent. 
\confORfull{}{Moreover, Marx conjectured that this logarithmic factor can be removed\confORfull{.}{; proving this is a major open question.}}
 As we do not need to use embeddings we obtain lower bounds in our setting which are tight up to a constant in the exponent. 
Since the arity is bounded, it turns out that the main step reduces to the special
case where $\strucA$ is a graph.
Here we can set $\strucB$ to be a random graph where each edge is included independently with probability a half. There are two key ideas. First, the set $\Hom(\strucA, \strucB)$ is big; this is essential otherwise we can get a small representation with a trivial circuit which lists every homomorphism. Second, $\strucB$ contains (with high probability) no big bicliques. Intuitively, this makes it hard for a $\{\cup,  \times\}$-circuit to significantly `compress' $\Hom(\strucA, \strucB)$. 

\confORfull{A similar idea is deployed}{We deployed a similar idea} in
\cite{DBLP:conf/icalp/BerkholzV23} to prove a lower bound where the
left-hand structure is a complete graph. The reason we can in general
improve the dependence on the treewidth compared with
\cite{DBLP:conf/icalp/BerkholzV23} is that we use a different method
to lower bound the size of circuits; we deploy a framework
based on tools from \emph{communication complexity} that has been
successfully deployed in knowledge compilation, see
e.g.~\cite{DBLP:conf/uai/BeameL15, DBLP:conf/lics/Capelli17,
  DBLP:journals/mst/BolligF21, DBLP:journals/jair/ColnetM23,
  DBLP:conf/ijcai/Vinall-Smeeth24}, since its introduction
in~\cite{DBLP:conf/ijcai/BovaCMS16}. Using this framework we do not
directly reason about circuits---as is done in
\cite{DBLP:conf/icalp/BerkholzV23}---but instead about sets of
\emph{combinatorial rectangles}  (see Section~\ref{s:rect}). This
allows us to
directly
reason about each left-hand-side instance $\strucA$; in contrast \confORfull{Berkholz and Vinall-Smeeth}{in our previous work we} only directly reason about the case where $\strucA$ is a $k$-clique and then perform a series of reductions which crucially rely on the excluded grid theorem of Robertson and Seymour \cite{DBLP:journals/jct/RobertsonS03a}. We show that if we partition the vertices of a high treewidth graph $\strucA$, such that the highly connected set is evenly split between the two sides of the partition, then there must be a large number of edges crossing the partition. This is the key to us proving our lower bound. 

The proof of Theorem~\ref{thm: sub_lb} follows a similar pattern
but we have to be more careful about defining our `bad instances', since we are now working with higher arity structures. In particular, we have to control the size of each relation in $\strucB$ since if they get too big compared to $|\!\Hom(\strucA, \strucB)|$ it will destroy the lower bound. To do this we exploit the existence of a highly connected set in $\strucA$ and use this explicitly in our construction of $\strucB$. 
Here we use the same characterisation of highly connectedness used in \cite{DBLP:journals/jacm/Marx13} to reason about the decision setting.

\subparagraph*{Further related work} 
Knowledge compilation originally emerged as a sub-field of
artificial intelligence concerned with the efficient representation of
Boolean functions \cite{DBLP:journals/jair/DarwicheM02}.
Within recent years, concepts from this area have been applied more
widely, particularly within database theory (again
see the survey \cite{DBLP:journals/sigmod/AmarilliC24}) and the related
area of constraint satisfaction \cite{DBLP:conf/icalp/BerkholzV23, DBLP:conf/stacs/BerkholzMW24, DBLP:conf/ijcai/KoricheLMT15, DBLP:conf/cp/MateescuD06, DBLP:journals/ijait/AmilhastreFNP14}. An advantage of this approach is that
  if we can represent our query answer as a (particular type of) $\{\cup, \times\}$-circuit, then we can leverage known results to perform 
e.g.
enumeration 
\cite{DBLP:journals/tods/OlteanuZ15, DBLP:conf/icalp/AmarilliBJM17}, 
sampling \cite{DBLP:conf/lpar/SharmaGRM18}, direct access 
\cite{DBLP:conf/icdt/CapelliI24} or (approximate) (weighted) 
counting \cite{DBLP:journals/japll/KimmigBR17, DBLP:conf/stoc/ArenasCJR21a, meel2024cfg} efficiently.\footnote{Note that while some of these results are 
only shown for circuits computing sets of Boolean functions, as is 
observed in the survey \cite{DBLP:journals/sigmod/AmarilliC24}, it is 
straightforward to adapt them to our more general setting.} 
These circuits (and variants) have proved useful for many other aspects of database theory: for instance for MSO queries \cite{DBLP:conf/icalp/AmarilliBJM17}, incremental view maintenance \cite{DBLP:conf/pods/BerkholzKS17, DBLP:conf/pods/Olteanu24}, probabilistic databases \cite{DBLP:conf/icdt/JhaS11} and representing query provenance \cite{DBLP:conf/icdt/OlteanuZ12}.

Recent work \cite{DBLP:conf/icalp/Komarath0R22,DBLP:journals/pacmmod/FanKZ24,DBLP:journals/pacmmod/FanKZ25,DBLP:conf/mfcs/BhargavCC025} studies the algebraic
circuit size for sum-product polynomials over some semiring. While
algebraic circuits are generally incomparable to $\{\cup,
\times\}$-circuits, it turns out that multilinear 
circuits over the Boolean semiring considered in \cite{DBLP:journals/pacmmod/FanKZ24,DBLP:journals/pacmmod/FanKZ25}  can be translated to $\{\cup,\times\}$-circuits with only a polynomial blow-up, but not vice versa. In this sense,
our lower bounds hold in a strictly more general model and also imply
lower bounds for these algebraic circuits, see Appendix~\ref{a:algebraic}.

As observed by Feder and Vardi \cite{DBLP:conf/stoc/FederV93} the problem of deciding if there is a homomorphism between two relational structures $\strucA$ and $\strucB$ is a generalisation of the constraint satisfaction problem. Given the generality of this problem it is unsurprisingly NP-complete. Therefore, research has focused on finding tractable restrictions, where we demand $\strucA \in \classA$ and $\strucB \in \classB$ for classes of structures $\classA, \classB$; we denote this problem by $\CSP(\classA, \classB)$. If we set $\classA$ to be the class of all structures $\classall$ then we have a complete (conditional) understanding of when we have polynomial time tractability for $\CSP(\classall, \classB)$ \cite{DBLP:conf/focs/Bulatov17, DBLP:journals/jacm/Zhuk20}. 

However, for \emph{left-hand-side restrictions,} where we set $\classB = \classall$ and ask for what $\classA$ we have polynomial time solvability, no dichotomy is known. 
Still, a lot of work has been done on such restrictions.  
If $\classA$ has \emph{bounded arity} we know that $\CSP(\classA, \classall)$ is in polynomial time iff there is a constant bounding the treewidth of the \emph{homomorphic core} of every structure in $\classA$ \cite{DBLP:journals/jacm/Grohe07, DBLP:journals/toc/Marx10}. Moreover, for the counting variant the criteria is that $\classA$ has bounded treewidth \cite{DBLP:journals/tcs/DalmauJ04}. In the general case, we do have a conditional dichotomy in the parametrised setting: there the correct criteria is that $\classA$ has bounded submodular width \cite{DBLP:journals/jacm/Marx13}. Our main result imply comparable dichotomy results for the problem of representing $\Hom(\strucA, \strucB)$ by a $\{\cup, \times\}$-circuit; see Section~\ref{ss:rep} for more details.

        \section{Preliminaries}\label{sec:preliminaries}

\paragraph{Functions and Sets}
For $n \in \mathbb{N}$, we write $[n]:= \{1, \dots, n\}$. Given a set $S$
we write $\mathcal{P}(S)$ to denote the power set of $S$.\confORfull{}{ Moreover, $(X,Y)$---a pair of disjoint subsets of $S$---is a \emph{partition} of $S$ if $X \cup Y = S$.} Whenever writing $\overline{a}$ to denote
a $k$-tuple, we write $a_i$ to denote its $i$-th component;
i.e., $\overline{a}=(a_1,\dots, a_k)$. Moreover, we write $\tilde{a}$
to denote the set $\{a_1, \dots, a_k\}$. For a function $f \colon X
\to Y$ and $X' \subset X$ we write $f\restriction_{X'}$
to denote the restriction of $f$ to $X'$, i.e. $f\restriction_{X'} \colon X' \to Y$ is defined by $x \mapsto f(x)$ for all $x \in X'$. We extend this operation to a set of such functions $F$ by setting
$F\restriction_{X'} \, := \{f\restriction_{X'} \, \mid \, f \in F\}$. Given disjoint sets $X, X'$ and two functions $f \colon X \to Y, g\colon X' \to Y'$ we define $f \times g \colon X \cup X' \to Y \cup Y'$ by:
\[
(f \times g) (x) = \begin{cases}
f(x) & x \in X \\
g(x) & x \in X'.
\end{cases}
\]
We will often combine two sets of functions pairwise in this way. Formally, for a set $S$ of functions with domain $X$ and a set $T$ of functions with domain $X'$ we define $S \times T := \{f \times g \, \mid \, f \in S,\, g \in T\}$. We also use $\times$ to denote Cartesian product in the normal sense; we can overcome this seeming mismatch by viewing a function $f \colon X \to D$ as a tuple with coordinates in bijection with $X$ such that the $x$-th coordinate is $f(x)$.

\subparagraph*{Graphs and Hypergraphs}
A \emph{hypergraph} $\graphH = (\vertexH, \edgeH)$ is a pair, consisting of a finite set of vertices $\vertexH$ and a finite set of edges $\edgeH \subseteq 2^{\vertexH}$. If $|e| = 2$, for every $e \in \edgeH$, we say that $\graphH$ is a \emph{graph}. Via this definition we are implicitly assuming that all graphs are undirected and \emph{simple}, i.e., that they do not contain self-loops. We write $\graphK_k$ for
the complete graph on $k$ vertices, which we also call the \emph{$k$-clique}. 
Given a hypergraph $\graphH = (\vertexH, \edgeH)$ and a subset $X \subseteq \vertexH$,
an \emph{edge cover} of $X$ is a set $S \subseteq \edgeH$ such that for all $ v\in X$ there is some $e \in S$ with $v \in e$. A \emph{fractional edge cover} of $X$ is the natural fractional relaxation, i.e.
 a function $s \colon \edgeH \to [0,1]$, such that for every $v \in X$, we have that
$\sum_{e \ni v} s(e) \ge 1$. We write $\weight(s):= \sum_{e \in
  \edgeH} s(e)$.
  Then $\fec(X)$---the \emph{fractional edge cover
  number of $X$}---is defined to be the minimum of $\weight(s)$ over
all fractional edge covers of $X$. %

We extend functions $f \colon \vertexH \to \mathbb{R}^{\ge 0}$ to subsets of
$X \subseteq \vertexH$ in the natural way, i.e.\! by setting $f(X) =
\sum_{v \in X} f(v)$. A \emph{fractional independent set} of $\graphH$
is a map $\mu \colon \vertexH \to [0,1]$, such that for every $e  \in
\edgeH$, $\sum_{v \in e} \mu(v) \le 1$; or equivalently $\mu(e) \le 1$
for every $e \in \edgeH$. Note that by LP duality the weight
$\mu(\vertexH)$ of minimal fractional independent set $\mu$ agrees with
the weight of a maximal fractional edge cover. 
Now let $\mathbf{S}(\graphH)$ be the set of all functions $f \colon \mathcal{P}(\vertexH) \to \mathbb{R}^{\ge 0}$ with $f(\emptyset) = 0$ that satisfy the following properties:
\begin{enumerate}
\item \textbf{Monotonicity:} $f(S) \le f(T)$ for all $S \subseteq T \subseteq \vertexH$,
\item \textbf{Edge dominatedness:} $f(e) \le 1$ for every $e \in \edgeH$, and 
\item \textbf{Submodularity:} $f(S) + f(T) \ge f(S \cap T) + f(S \cup T)$ for every $S, T \subseteq \vertexH$. 
\end{enumerate}
Observe that for every fractional independent set $\mu$ of $\graphH$ we have
$\mu\in \mathbf{S}(\graphH)$.

\subparagraph*{Width Measures}
A \emph{tree decomposition} of a hypergraph $\graphH$ is a pair
$(\treeT,\beta)$ where $\treeT$ is a tree and $\beta\colon \vertexT \to \mathcal{P}(\vertexH)$ associates to every node $t\in \vertexT$ \emph{a bag} $\beta(t)$ such
that:
\begin{enumerate}
\item for every $v\in \vertexH$ the set $\{t \in \vertexT \mid v\in
  \beta(t)\}$ is a non-empty connected subset of $\vertexT$ and
\item for every $e \in \edgeH$ there is some $t\in \vertexT$ such that $ e \subseteq
  \beta(t)$.
\end{enumerate}
We write $\classT(\graphH)$ for the set of all tree decompositions of $\graphH$. The width of a tree decomposition is $\max_{t\in \vertexT}|\beta(t)|-1$ and the 
\emph{treewidth} of $\graphH$ is the minimum width of all
 $(\treeT,\beta) \in \classT(\graphH)$. \confORfull{}{Note that the -1 in the definition of treewidth is there to ensure that trees have treewidth one.}

\confORfull{}{From an algorithmic point of view, the idea of treewidth is that we can perform various bottom-up computations on the tree decomposition. Since we only consider one bag at a time, if the bags are small this can allow for efficient algorithms on small treewidth hypergraphs for problems which are in general intractable. For this, and other reasons, treewidth has become an incredibly influential concept and has inspired many more sophisticated notions of width, see~\cite{DBLP:journals/cj/HlinenyOSG08} for a survey. } 

\confORfull{Treewidth is an incredibly influential concept and has inspired many more sophisticated notions of width, see~\cite{DBLP:journals/cj/HlinenyOSG08} for a survey.}{} Two such measures are relevant in this paper: fractional hypertree width and submodular width. The \emph{fractional hypertree width} of a tree decomposition $(\treeT,\beta)$ is defined to be the maximum value of $\fec(\beta(t))$ over all $t \in \vertexT$. The fractional hypertree width of a hypergraph $\graphH$ is the minimum fractional hypertree width over all tree decomposition of $\graphH$, in symbols
\[
\fhtw(\graphH) := \min_{(\treeT,\beta) \in \classT(\graphH)} \, \max_{t \in \vertexT} \fec(\beta(t)).
\]
So if $\graphH$ has low fractional hypertree width then there is a
tree decomposition where every bag has small fractional edge cover
number.

The definition of submodular width is slightly more involved: if $\graphH$ has low
submodular width then for \emph{every} $ f \in \mathbf{S}(\graphH)$
 there is a
tree decomposition where every bag evaluates to a small number
under $f$.
To be
precise, the submodular width of $\graphH$ is defined to be
\[
\subw(\graphH) := \sup_{f \in \mathbf{S}(\graphH)} \min_{(\treeT,\beta) \in \classT(\graphH)} \, \max_{t \in \vertexT} f(\beta(t)).
\]
From the definitions, one can see that $\tw(\graphH) + 1 \ge \fhtw(\graphH)$. Moreover, Marx proved that $\fhtw(\graphH) \ge \subw(\graphH)$~\cite[Proposition 3.7.]{DBLP:journals/jacm/Marx13}.

\subparagraph*{Structures and Homomorphisms} \label{ss:struc_hom}

A \emph{(relational) signature} $\sigma$ is a set of relation symbols $R$, each  equipped with an arity $r =r(R)$. We assume all signatures are finite. A \emph{(finite, relational) $\sigma$-structure} $\strucA$
consists of a finite set $A$ (the \emph{universe} of $\strucA$) and  relations
$R^{\strucA}\subseteq A^r$ for every $r$-ary relation
symbol $R\in \sigma$. The \emph{size} of $\strucA$ is defined to be $\|\strucA\|:=
\sum_{R \in \sigma} |R^{\strucA}|$ and the \emph{arity} of $\strucA$ is the maximum arity of all relations in $\sigma$.

The \emph{Gaifman graph} of 
$\strucA$ is the graph with vertex set $A$ and edges $\{u,v\}$ for any
distinct $u,v$ that occur together in a tuple of a relation in
$\strucA$. We say $\strucA$ is \emph{connected} if its Gaifman graph
is connected and we will henceforth assume that all structures are
connected. \confORfull{}{All of our results can be easily extended to
  disconnected structures.} The \emph{hypergraph} of a
$\sigma$-structure $\strucA$ is a hypergraph with vertex set $A$ such
that $e \subseteq A$ is an edge iff there is a tuple $\overline{e} \in
R^{\strucA}$ for some $R \in \sigma$, such that $\tilde{e} = e$.

For $x \in \{\textrm{tree, fractional hypertree, submodular}\}$, the \emph{$x$-width} of a structure is the $x$-width of its
hypergraph.\confORfull{}{ Note that the tree-width of a structure is equal to the tree-width of its Gaifman graph. This is easy to see, using the fact that if $(\treeT, \beta)$ is a tree decomposition of a graph $\graphG$ and $X$ induces a clique in $\graphG$, then $X$ is contained in some bag of the tree decomposition, i.e.\! $X \subseteq \beta(t)$ for some $t \in \vertexT$.} For a class of structures $\classA$, we write $\tw(\classA) := \sup_{\strucA \in \classA} \tw(\strucA)$, which may be infinite. If $\tw(\classA)$ is finite we say that $\classA$ has \emph{bounded treewidth}, otherwise we say it has unbounded treewidth. We define the analogous notions for fractional hypertree and submodular width. It is known that there are classes of structures with bounded fractional hypertree width but unbounded treewidth \cite{DBLP:journals/talg/GroheM14} and a class of structures with bounded submodular width but unbounded fractional hypertree width \cite{DBLP:journals/mst/Marx11}. Finally, if there is some integer $r$ such that every structure in $\classA$ has arity at most $r$, we say that $\classA$ has \emph{bounded arity}. In this case, $\classA$ has bounded treewidth iff $\classA$ has bounded fractional hypertree width iff $\classA$ has bounded submodular width. 

\confORfull{Finally, a}{Finally, given two $\sigma$-structures
  $\strucA, \strucB$ a} \emph{homomorphism from $\strucA$ to
  $\strucB$} is a function $h \colon A \to B$ that preserves all
relations, i.e., for every $r$-ary $R \in \sigma$ and $\bar{a} =(a_1,
\dots, a_r) \in R^{\strucA}$ it holds that $h(\bar{a}) := (h(a_1),
\dots, h(a_r)) \in R^{\strucB}$. We write $\Hom(\strucA, \strucB)$ for
the set of all homomorphisms from $\strucA$ to $\strucB$.

        \section{Homomorphisms and \texorpdfstring{$
                        \{\cup, \times\}$-}{Union-Product }Circuits}
        \label{s:circuit}
In this section we define $\{\cup, \times\}$-circuits and explain how they can be used to represent sets of homomorphisms. We roughly follow the presentation of Amarilli and Capelli \cite{DBLP:journals/sigmod/AmarilliC24}. 
Let $(X,D)$ be a pair of finite sets. Then a \emph{$\{\cup,
  \times\}$-labelled {\upshape DAG} over $(X,D)$}
 is a vertex-labelled directed acyclic graph with a unique sink such that every internal (i.e.\! non-source) node has fan-in two and is labelled by either $\times$ or $\cup$ and every source is labelled with an expression of the form $x \mapsto d$, where $x \in X$ and $d \in D$. We call the set $X$ the \emph{variables} and the set $D$ the \emph{domain}. We call nodes labelled by $\circ \in \{\cup, \times\}$ $\circ$-gates and sources input-gates or just inputs. 
 
 For $\C$
 a $\{\cup, \times\}$-labelled DAG over $(X,D)$ and $g$ a node of
 $\C$, we say that a variable $x \in X$ \emph{occurs below} $g$ if
 there is an input node $g'$ labelled with $x
 \mapsto d$ for some $d \in D$, such that there is a directed path from $g'$ to $g$ in
 $\C$. We write $\var(g)$ for the set of variables occurring below $g$
 and $\var(\C)$ for the set of all variables occurring in the DAG. With these notions in hand we can now define the main representation format studied in this paper.

\begin{definition}[$\{\cup, \times\}$-circuit]\label{def:circuit}
A $\{\cup, \times\}$-circuit over $(X,D)$ is a $\{\cup,
\times\}$-labelled \textup{DAG} $\C$ over $(X,D)$ with $\var(\C) = X$ such that
\begin{enumerate}
\item for every $\times$-gate $g$ in $\C$ with children $g_1, g_2$ we have $\var(g_1) \cap \var(g_2) = \emptyset$, and
\item for every $\cup$-gate $g$ in $\C$ with children $g_1, g_2$ we have $\var(g_1) = \var(g_2)$.
\end{enumerate} 
\end{definition}

\begin{figure}  
  \centering
  \begin{tikzpicture}
  [wire/.style={thick, ->, shorten <= 0.3mm, shorten >= 0.5mm},
  gwire/.style={thick, ->, shorten <= 0.3mm, shorten >= 0.5mm, color=green},
   edge/.style={ -stealth, shorten <= 0.3mm, shorten >= 0.5mm},
   inputgate/.style={inner sep=1pt,minimum size=1mm},
   gate/.style={draw,circle,inner sep=1pt,minimum size=1mm},
   vertex/.style={draw,circle,fill=white,inner sep=2pt,minimum
   size=5.5mm}]

     \node at (0.5,-2) {$\mathcal{G}$};
    \node at (3.5,-2) {$\mathcal{H}$};
    
    \node at (9,-2) {A $\{\cup, \times\}$-circuit computing $\Hom(\mathcal{G},\mathcal{H})$};
 
        \def\y{0.5}
 
    \node[vertex,] (x1) at (0,0-\y) {$x$};
    \node[vertex,] (x2) at (1,.577-\y) {$y$};
    \node[vertex,] (x3) at (1,-.577-\y) {$z$};
    \draw[edge] (x1) -- (x2);
    \draw[edge] (x1) -- (x3);
    \draw[edge] (x2) -- (x3);
    
        \node[vertex,] (a1) at (2.5,0+.6-\y) {\small $a_1$};
    \node[vertex,] (a2) at (2.5,0-\y) {\small $a_2$};

    \node[vertex,] (b1) at (3.35,1.4-\y) {\small $b_1$};
    \node[vertex,] (b2) at (4.15,1.4-\y) {\small $b_2$};

    \node[vertex] (d1) at (3.35,-0.8-\y) {\small $d_1$};
    \node[vertex,] (d2) at (4.15,-0.8-\y) {\small $d_2$};

    \node[vertex,] (c) at (2.5+2.5,0.3-\y) {$c$};

    \draw[edge] (a1) -- (b1);
    \draw[edge] (a2) -- (b1);

    \draw[edge] (a1) -- (b2);
    \draw[edge] (a2) -- (b2);

    \draw[edge] (a1) -- (d1);
    \draw[edge] (a2) -- (d1);

    \draw[edge] (a1) -- (d2);
    \draw[edge] (a2) -- (d2);

    \draw[edge] (b1) -- (c);
    \draw[edge] (b2) -- (c);

    \draw[edge] (c) -- (d1);
    \draw[edge] (c) -- (d2);

    \draw[edge] (a1) -- (c);
    \draw[edge] (a2) -- (c);

    \begin{scope}[xshift=9cm,yshift=-1.5cm]
      \node[inputgate] (xa1) at (-.5,0) {\scriptsize$x\!\mapsto\! a_1$};
      \node[inputgate] (xa2) at (+.5,0) {\scriptsize$x\!\mapsto\! a_2$};

      \node[inputgate] (yb1) at (-2.5,0) {\scriptsize$y\!\mapsto\! b_1$};
      \node[inputgate] (yb2) at (-2.5+1,0) {\scriptsize$y\!\mapsto\! b_2$};

      \node[inputgate] (zd1) at (2.5-1,0) {\scriptsize$z\!\mapsto\! d_1$};
      \node[inputgate] (zd2) at (2.5,0) {\scriptsize$z\!\mapsto\! d_2$};

      \node[inputgate] (zc) at (-3.5,0) {\scriptsize$z\!\mapsto\! c$};
      \node[inputgate] (yc) at (3.5,0) {\scriptsize$y\!\mapsto\! c$};

      \node[gate] (g1) at (-1.5,0.75) {\small $\cup$};
      \node[gate] (g2) at (0,0.75) {\small $\cup$};
      \node[gate] (g3) at (1.5,0.75) {\small $\cup$};
      
                \node[gate](e1) at (-0.75,1.5) {\small $\times$};
                \node[gate](e2) at (0.75,1.5) {\small $\times$};

      \node[gate] (g4) at (-1,2.25) {\small $\times$};
      \node[gate] (g5) at (1,2.25) {\small $\times$};
      
      \node[gate] (g6) at (0,3) {\small $\cup$};

      \draw[wire] (yb1) -- (g1);
      \draw[wire] (yb2) -- (g1);
      
      \draw[wire] (xa1) -- (g2);
      \draw[wire] (xa2) -- (g2);
      
      \draw[wire] (zd1) -- (g3);
      \draw[wire] (zd2) -- (g3);

      \draw[wire] (zc) -- (g4);
      \draw[wire] (yc) -- (g5);

      \draw[wire] (g1) -- (e1);
      \draw[wire] (g2) -- (e1);
      
      \draw[wire] (e1) -- (g4);
      \draw[wire] (e2) -- (g5);

      \draw[wire] (g3) -- (e2);
      \draw[wire] (g2) -- (e2);

      \draw[wire] (g4) -- (g6);
      \draw[wire] (g5) -- (g6);

    \end{scope}
\end{tikzpicture}
 \caption{A deterministic $\{\cup, \times\}$-circuit computing the set of all homomorphisms from
  $\graphG$ to $\graphH$.}
\label{fig:representation}
\end{figure}

\noindent 
Such circuits compute a set of mappings\confORfull{}{ from $X \to D$} bottom up by taking unions at $\cup$-gates and Cartesian products at {$\times$}-gates. Conditions (1) and (2), called \emph{decomposability} \cite{DBLP:journals/jacm/Darwiche01} and \emph{smoothness} respectively in the knowledge compilation literature, ensure this process is well-defined. 
Formally, we define the semantics of  $\{\cup, \times\}$-circuit $\C$ by associating with each gate $g$ the set of functions computed by the subcircuit of $\C$ rooted at $g$. We define this as follows.
\[
S(g) := \begin{cases}
\{x \mapsto d\} & g \textit{ is an input gate labelled with } x \mapsto d. \\
S(g_1) \times S(g_2) & g \textit{ is a $\times$-gate with children } g_1, g_2. \\
S(g_1) \cup S(g_2) & g \textit{ is a $\cup$-gate with children } g_1, g_2.
\end{cases}
\]
Note that in the $\times$-gate case, $S(g_1) \times S(g_2) = \{f \times g \, \mid \, f \in S(g_1), \, g \in S(g_2)\}$ is well-defined by (1). Moreover, by (2) we obtain that every function in $S(g)$ has domain $\var(g)$. We write $S(\C):= S(s)$ where $s$ is the unique sink of $\C$. This leads to the following.

\begin{definition}
Let $\strucA, \strucB$ be structures and $\C$ be a $\{\cup, \times\}$-circuit over $(A,B)$. Then we say that $\C$ \emph{computes} $\Hom(\strucA, \strucB)$ if $S(\C) = \Hom(\strucA, \strucB)$. 
\end{definition}

\noindent As in the above definition, 
if  a $\{\cup, \times\}$-circuit computes $\Hom(\strucA, \strucB)$ then we always assume that $X = A$ and $D = B$. We define the size of $\C$, a $\{\cup, \times\}$-circuit, to be the number of vertices in the underlying graph and denote this by $|\C|$. 

\confORfull{}{Before explaining why such representations are useful we introduce two further properties: determinism and structuredness.}
A $\{\cup, \times\}$-circuit $\C$ is \emph{deterministic} if the subcircuits rooted at the children of every $\cup$-gate compute disjoint sets of functions \cite{DBLP:journals/jancl/Darwiche01}. Formally, we require that for every $\cup$-gate $g \in \C$, with children $g_1, g_2$, we have that $S(g_{1}) \cap S(g_{2}) = \emptyset$. Given a deterministic $\{\cup, \times\}$-circuit $\C$ one can compute $|S(\C)|$ in linear time by going bottom up in the circuit, adding at every $\cup$-gate and multiplying at every $\times$-gate.

\confORfull{}{To define structuredness we first need the notion of a \emph{v-tree} over a set of variables $X$. This is is a full, rooted, binary tree $\treeT$ whose leaves are in 1-1 correspondence with the elements of $X$. For $t \in \vertexT$ we write $\var(t)$ to denote the set of variables occurring as leaves in the subtree rooted at $t$. A $\{\cup, \times\}$-circuit $\C$ \emph{respects} a v-tree $T$, if for every $\times$-gate $g \in \C$ with children $g_1, g_2$, there is some $t \in \vertexT$ with children $t_1, t_2$ such that $\var(g_1) \subseteq \var(t_{1})$ and $\var(g_{2}) \subseteq \var(t_{2})$. A $\{\cup, \times\}$-circuit is \emph{structured} if it respects some v-tree \cite{DBLP:conf/aaai/PipatsrisawatD08}. The idea is that decomposability always ensures every $\times$-gate partitions the set of variables occurring below it and structuredness ensures that this partition is done in a uniform way across the circuit.}

\begin{remark} \label{rem:fan_in}
Our stipulation that $\{\cup,\times\}$-circuits are fan-in two is mainly a matter of taste since these circuits can simulate arbitrary fan-in circuits with only a quadratic size blow up. We make this choice since it is more convenient for proving lower bounds (see Section~\ref{s:rect}) but Theorems~\ref{thm: tw_struc} and \ref{thm: sub_lb} also hold---via this quadratic simulation---for unbounded fan-in $\{\cup, \times\}$-circuits. Similarly smoothness\confORfull{}{---condition (2) in our definition of $\{\cup, \times\}$-circuits---}is only enforced for convenience; one can get a sensible representation format without this condition, see e.g. \cite{DBLP:journals/sigmod/AmarilliC24}. Moreover, one can turn an `unsmooth' $\{\cup, \times\}$-circuit into an equivalent smooth one in quadratic time \cite{DBLP:journals/jair/DarwicheM02}. 
\end{remark}

\subsection{Upper Bounds On Representation Size} \label{ss:rep}

We now formally define the problem studied in this paper. Given a class of structures $\classA$ the problem $\{\cup,  \times\}$-$\CSP(\classA, \classall)$ is, given $\strucA \in \classA$ and a second structure $\strucB$ as input, to produce a $\{\cup, \times\}$-circuit computing $\Hom(\strucA, \strucB)$. In the rest of this section we lay out the state of the art for this problem starting with the following result by Olteanu and Z{\'{a}}vodn{\'{y}}.

\begin{theorem}[{\cite{DBLP:journals/tods/OlteanuZ15}}]
If $\classA$ has bounded fractional hypertree width then $\{\cup,  \times\}$-$\CSP(\classA, \classall)$ is polynomial time solvable. More precisely, given structures $\strucA, \strucB$ and $\treeT \in \classT(\strucA)$ with fractional hypertree width $w$, we can construct a\confORfull{}{ structured and} deterministic $\{\cup,  \times\}$-circuit computing $\Hom(\strucA, \strucB)$ of size $O(\|\strucA\|^2 \|\strucB\|^{w} )$ in time $O(\poly(\|\strucA\|) \|\strucB\|^{w} \log(\|B\|))$.
\end{theorem}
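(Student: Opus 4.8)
The plan is to imitate the classical Yannakakis-style dynamic programming along the given tree decomposition, but to record the intermediate relations symbolically as circuit gates instead of materialising them, so that common subresults get re-used. First I would root $\treeT$ at an arbitrary node and, after the standard preprocessing (merging a bag into a neighbour that contains it, binarising high-degree nodes), assume it has polynomially-in-$\|\strucA\|$ many nodes, each with at most two children; for a node $t$ write $\gamma(t)$ for the union of the bags in the subtree below $t$. Call $a\colon\beta(t)\to B$ a \emph{local solution at $t$} if it is the restriction to $\beta(t)$ of some homomorphism in $\Hom(\strucA,\strucB)$. Because $\fec(\beta(t))\le w$, the AGM inequality (in the fractional-cover form of Friedgut--Kahn, as used by Grohe and Marx) bounds the number of local solutions at $t$ by $\|\strucB\|^{w}$, and all of them, for all bags simultaneously, can be computed in time $\poly(\|\strucA\|)\cdot\|\strucB\|^{w}\cdot\log\|B\|$ by the usual semijoin passes driven by a worst-case-optimal join for each bag (the $\log$ accounting for an index on $B$).

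Next I would construct, for every node $t$ and every local solution $a$ at $t$, a gate $g_{t,a}$ with variable set $\gamma(t)\setminus\beta(t)$ that computes exactly those $h$ for which $a\times h$ (the disjoint combination of maps from the preliminaries) is the restriction to $\gamma(t)$ of a homomorphism; equivalently, $a\times h$ preserves every relation-tuple whose support lies in $\gamma(t)$. Leaves and nodes whose bag is contained in the parent's bag are handled by a direct special case. For an internal node $t$ with children $t_1,t_2$, the connectedness axiom gives $\gamma(t_1)\cap\gamma(t_2)\subseteq\beta(t)$ and $\gamma(t_i)\cap\beta(t)=\beta(t_i)\cap\beta(t)$, so $\gamma(t_1)\setminus\beta(t)$ and $\gamma(t_2)\setminus\beta(t)$ are disjoint, disjoint from $\beta(t)$, and together with $\beta(t)$ cover $\gamma(t)$; moreover, again by connectedness together with the covering axiom, every relation-tuple whose support lies in $\gamma(t)$ has that support entirely inside $\beta(t)$ or entirely inside one $\gamma(t_i)$. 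I would therefore set $g_{t,a}:=U_{t,a,1}\times U_{t,a,2}$, where $U_{t,a,i}$ is a balanced binary tree of $\cup$-gates over the products $g_{t_i,b}\times\{\,b\restriction_{\beta(t_i)\setminus\beta(t)}\,\}$ ranging over all local solutions $b$ at $t_i$ that agree with $a$ on $\beta(t)\cap\beta(t_i)$ (writing $\{f\}$ also for a gate computing the singleton $\{f\}$; the product is legal because it reintroduces exactly the coordinates of $\beta(t_i)$ that were forgotten below $t_i$). The output gate would be the $\cup$ over all local solutions $a$ at the root $r$ of $\{a\}\times g_{r,a}$. A routine induction using the tuple-localisation fact (together with the separator property of $\beta(t)$, to see that any such glued map is indeed realisable) shows that the resulting $\{\cup,\times\}$-circuit $\C$ computes $\Hom(\strucA,\strucB)$. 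By construction every $\times$-gate has children with disjoint variable sets and every $\cup$-gate children with equal variable sets; $\C$ is deterministic because the maps combined at any $\cup$-gate are pairwise distinct on some variable (for fixed $a$, two distinct $b$'s differ somewhere on $\beta(t_i)\setminus\beta(t)$; at the root, two distinct $a$'s differ somewhere on $\beta(r)$); and $\C$ is structured with respect to the v-tree obtained by expanding $\treeT$, namely replacing each node $t$ by a small v-tree over $\beta(t)\setminus\beta(\mathrm{parent}(t))$ and hanging the children's v-trees beneath it --- this matches the variable sets of all $\times$-gates of $\C$.

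For the size and running time I would argue: each bag has at most $\|\strucB\|^{w}$ local solutions; for a fixed child $t_i$ the gates $U_{t,\cdot,i}$ can be shared across all parents' solutions $a$ that have the same restriction to $\beta(t)\cap\beta(t_i)$, so the pairs that actually index distinct such gates and their inputs number at most $\|\strucB\|^{w}$ in total, each contributing $O(1)$ product and union gates plus a gadget of size $\poly(\|\strucA\|)$ for the singleton factor; summing over the polynomially many nodes yields $|\C|=O(\|\strucA\|^{2}\|\strucB\|^{w})$. The time is then dominated by computing the local solutions of all bags, $O(\poly(\|\strucA\|)\cdot\|\strucB\|^{w}\cdot\log\|B\|)$, after which $\C$ is emitted in time linear in its size.

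The genuinely delicate point --- and the reason for the slightly awkward ``forget the current bag, then re-glue'' design --- is the interaction between \emph{decomposability} (children of a $\times$-gate must have disjoint variable sets) and the fact that adjacent bags of $\treeT$ overlap: were $g_{t,a}$ to retain all of $\gamma(t)$, combining the two children would require a product of gates both mentioning $\beta(t)$-variables. I also expect that the point where a naive construction blows up to $\|\strucB\|^{2w}$ is precisely the sharing of the gates $U_{t,a,i}$: one must key them on $a\restriction_{\beta(t)\cap\beta(t_i)}$, not on $a$. Everything else --- the inductive correctness proof (which uses both tree-decomposition axioms, via the tuple-localisation and separator facts), the determinism check, and the explicit v-tree witnessing structuredness --- should be routine book-keeping.
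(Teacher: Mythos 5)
The paper does not prove this statement---it is cited from Olteanu and Z\'avodn\'y \cite{DBLP:journals/tods/OlteanuZ15}---and your construction is essentially the standard one behind that citation: compute the (fully reduced) bag relations with a worst-case-optimal join within the AGM bound $\|\strucB\|^{w}$, build gates keyed on bag assignments bottom-up along the decomposition, and share the union gates via the projection onto the parent--child interface, which is exactly what keeps the size at $O(\|\strucA\|^{2}\|\strucB\|^{w})$ rather than $\|\strucB\|^{2w}$. Modulo the routine special cases you already flag (empty interfaces, leaves, fan-in-two binarisation, singleton gadgets), your argument---including the disjointness check for determinism and the v-tree obtained by expanding $\treeT$ for structuredness---is correct.
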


\noindent Since the above algorithm outputs a deterministic $\{\cup,  \times\}$-circuit this allows us to compute $|\!\Hom(\strucA, \strucB)|$ in polynomial time given that $\strucA$ has bounded fractional hypertree 
width. Moreover, the output circuit also admits efficient algorithms for enumeration \cite{DBLP:journals/tods/OlteanuZ15, DBLP:conf/icalp/AmarilliBJM17}, sampling \cite{DBLP:conf/lpar/SharmaGRM18} and direct access \cite{DBLP:conf/icdt/CapelliI24}.
In the bounded arity setting classes of bounded fractional hypertree width (= classes of bounded treewidth) are the only classes where $\{\cup,  \times\}$-$\CSP(\classA, \classall)$ is polynomial time solvable. This follows from a lower bound \confORfull{by Berkholz and Vinall-Smeeth.}{proved in our previous work.}

\begin{theorem}[{\cite{DBLP:conf/icalp/BerkholzV23}}] \label{thm:weak_tw_lb}
Let $r$ be an integer. Then there is an $\varepsilon>0$ such that for every structure $\strucA$ of arity at most $r$
there exist arbitrarily large structures $\strucB$  such
that any $\{\cup,  \times\}$-circuit for $\Hom(\strucA, \strucB)$ has
size at least $\|B\|^{\tw(\strucA)^{\varepsilon}}$.
\end{theorem}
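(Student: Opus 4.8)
I would follow the ``high width $\Rightarrow$ highly connected set $\Rightarrow$ incompressible instance'' recipe, but use the communication‑complexity (combinatorial rectangle) view of $\{\cup,\times\}$-circuits instead of reasoning about circuits directly; this already yields the stronger conclusion of Theorem~\ref{thm: tw_struc}, namely size $\|B\|^{\varepsilon\tw(\strucA)}$, and the stated bound follows since $\varepsilon\,t\ge t^{\varepsilon/2}$ once $t$ is above an absolute constant. \emph{Step 1 (reduce to graphs, build $\strucB$).} Since every clique of a graph sits inside a single bag of any tree decomposition, $\tw(\strucA)$ equals the treewidth $t$ of the Gaifman graph $\graphG$ of $\strucA$, irrespective of the arity, and $|\vertexG|,|\edgeG|$ are constants depending only on $\strucA$. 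Fix a large $N$, let $\graphH\sim G(N,1/2)$, and define a $\sigma$-structure $\strucB$ with universe $\vertexH$ by letting $R^{\strucB}$ consist of all tuples of pairwise distinct, pairwise $\graphH$-adjacent vertices, for every $R\in\sigma$. (By a routine preprocessing that changes neither $\graphG$ nor $\tw(\strucA)$ we may assume $\strucA$ has no tuple with a repeated coordinate.) Unwinding the definitions, a map $\vertexG\to\vertexH$ lies in $\Hom(\strucA,\strucB)$ iff it sends every edge of $\graphG$ to an edge of $\graphH$, so $\Hom(\strucA,\strucB)=\Hom(\graphG,\graphH)$ as sets of maps; moreover $\|\strucB\|=O(N^{r})$, hence $N=\|\strucB\|^{\Omega(1/r)}$ and $\|\strucB\|\to\infty$ with $N$. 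A first‑moment/union‑bound computation shows that for $N$ large there is an $\graphH$ with (i) no $K_{s,s}$ for $s:=\lceil 3\log_2 N\rceil$, and (ii) $|\Hom(\graphG,\graphH)|\ge\tfrac12\,N^{|\vertexG|}2^{-|\edgeG|}$ (count injective maps; each $\graphG$-edge is respected with probability at least $2^{-|\edgeG|}$); fix such an $\graphH$ and the corresponding $\strucB$.

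\emph{Step 2 (a highly connected set in $\graphG$).} By the characterisation of treewidth via well‑linked sets \cite{reed1997tree,robertson1995graph}, $\graphG$ contains a set $Z$ with $|Z|\ge c_1 t$ ($c_1>0$ absolute) such that any two disjoint equal‑size subsets of $Z$ are joined by that many vertex‑disjoint paths. Consequently, for every partition $(X_1,X_2)$ of $\vertexG$ that splits $Z$ into two parts each of size at least $|Z|/3$, there are $q:=\lfloor|Z|/3\rfloor\ge c_1 t/4$ pairwise \emph{vertex‑disjoint} edges of $\graphG$ crossing the partition: take $q$ vertex‑disjoint $(Z\cap X_1)$--$(Z\cap X_2)$ paths and keep one crossing edge of each.

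\emph{Step 3 (small circuit $\Rightarrow$ rectangle cover $\Rightarrow$ bicliques).} Let $\C$ be a $\{\cup,\times\}$-circuit of size $m$ computing $\Hom(\strucA,\strucB)=\Hom(\graphG,\graphH)$. Apply the rectangle‑cover method in its weighted‑balance form (Section~\ref{s:rect}, after \cite{DBLP:conf/ijcai/BovaCMS16}) with the variable weighting that is uniform on $Z$ and $0$ elsewhere: we obtain a partition $(X_1,X_2)$ of $\vertexG$ with $|Z\cap X_i|\ge|Z|/3$ and a cover $\Hom(\graphG,\graphH)=\bigcup_{i\le m'}R^i_1\times R^i_2$ by $m'\le m$ combinatorial rectangles over $(X_1,X_2)$. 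Fix one rectangle $R_1\times R_2$ and $q$ vertex‑disjoint crossing edges $\{a_j,b_j\}$ with $a_j\in X_1$, $b_j\in X_2$ from Step~2. For each $j$, since $f\times g\in\Hom(\graphG,\graphH)$ for all $f\in R_1,g\in R_2$, the vertex sets $\{f(a_j):f\in R_1\}$ and $\{g(b_j):g\in R_2\}$ are disjoint and span a complete bipartite subgraph of $\graphH$; as $\graphH$ has no $K_{s,s}$, one of them has size $<s$. For at least $q/2$ of the $j$'s the first does---say those with $a_j\in A_1$, where $A_1\subseteq X_1$, $|A_1|\ge q/2$, and the $a_j$ are distinct---so $|\{f\restriction_{A_1}:f\in R_1\}|<s^{|A_1|}$, hence $|R_1|<s^{|A_1|}N^{|X_1|-|A_1|}$ and trivially $|R_2|\le N^{|X_2|}$, giving $|R_1\times R_2|<N^{|\vertexG|}(s/N)^{q/2}$. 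Therefore
\[
m \;\ge\; m' \;\ge\; \frac{|\Hom(\graphG,\graphH)|}{\max_i |R^i_1\times R^i_2|} \;\ge\; \tfrac12\,2^{-|\edgeG|}\Bigl(\tfrac{N}{s}\Bigr)^{q/2}.
\]
Since $q\ge c_1 t/4$, $s=O(\log N)$, and the prefactor is a positive constant depending only on $\strucA$, for $N$ large this is at least $N^{\varepsilon' t}$ with $\varepsilon'$ depending only on $r$, hence at least $\|\strucB\|^{\varepsilon t}$ with $\varepsilon:=\varepsilon'/(2r)$, and a fortiori at least $\|\strucB\|^{\tw(\strucA)^{\varepsilon}}$ once $\tw(\strucA)$ is large enough for the bound to be non‑trivial.

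\emph{Expected main obstacle.} The crux is the interface in Step~3: the rectangle‑cover method only hands back a partition that is balanced with respect to a weighting \emph{we} choose, so to make the highly connected set split evenly we must put all the weight on $Z$, and then prove---this is the genuinely new combinatorial input---that an even split of a well‑linked set already forces $\Omega(t)$ vertex‑disjoint crossing edges, each imposing an independent biclique constraint on every rectangle. Pinning down the correct quantitative forms of the weighted rectangle‑cover statement and of the ``crossing edges from a well‑linked set'' lemma, and checking that the random $\strucB$ genuinely realises $\Hom(\strucA,\strucB)=\Hom(\graphG,\graphH)$ with $\|\strucB\|$ only polynomially larger than $N$, are the places needing care; notably the excluded‑grid theorem---responsible for the polynomial loss $\tw^{\varepsilon}$ in the earlier argument---is not used at all.
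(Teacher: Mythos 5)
Your proposal is correct and is essentially the paper's own argument: it proves the stronger bound $\|\strucB\|^{\varepsilon\tw(\strucA)}$ of Theorem~\ref{thm: tw_struc} via the route of Theorem~\ref{thm:graph_lb} (random biclique-free $\graphH$ with many homomorphisms, a rectangle cover balanced only with respect to the highly connected set via Lemma~\ref{lem:weightedrect}, and a crossing matching forcing one endpoint of each matched edge to have image of size $O(\log N)$, exactly as in Lemmas~\ref{lem: con}--\ref{lem: big_rect}), and the stated $\|\strucB\|^{\tw(\strucA)^{\varepsilon}}$ bound is an immediate weakening. The only deviations are cosmetic: you lift from graphs to bounded-arity structures by directly packing all pairwise-adjacent tuples into $\strucB$ (losing the same factor $r$ in the exponent) where the paper invokes Lemma~\ref{lem:reduct}, and you phrase the highly connected set as a well-linked set with vertex-disjoint paths where the paper works with the absence of a balanced $k$-separator and extracts the matching greedily.
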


\noindent Our Theorem~\ref{thm: tw_struc} improves the exponent to $\Omega(\tw(\strucA))$. This shows that---on classes of bounded arity structures---the algorithm from Olteanu and Z{\'{a}}vodn{\'{y}} is optimal, in that we cannot improve the exponent by more than a constant factor.

When we do not impose a bound on the arity the situation is much less understood; we do not even have a good understanding of when $\CSP(\classA, \classall)$ is polynomial time solvable. We have a better grasp if we allow FPT running times, i.e.\! one that may depend arbitrarily on $\|\strucA\|$ but only polynomially on $\|\strucB\|$. In this setting $\{\cup,  \times\}$-$\CSP(\classA, \classall)$ is tractable for some classes of structures with unbounded fractional hypertree width.

 \begin{theorem}[{\cite{DBLP:journals/jacm/Marx13,DBLP:conf/mfcs/BerkholzS19}}] \label{thm: subw}
   If $\classA$ has bounded submodular width then $\{\cup,
   \times\}$-$\CSP(\classA, \classall)$ is \textup{FPT} time solvable. More
   precisely, for all structures $\mathcal{A}$ and $\mathcal{B}$ we can construct a $\{\cup, \times\}$-circuit computing $\Hom(\strucA, \strucB)$ of size $O(g(\|\mathcal{A}\|) \cdot
   \|\mathcal{B}\|^{(1+\delta)\subw(\strucA)})$ in time $O(h(\|\mathcal{A}\|) 
 \cdot  \|\mathcal{B}\|^{(1+\delta)\subw(\strucA)})$, where $g, h$ are computable
 functions and  $\delta >0$ can be arbitrary small. 
   \end{theorem}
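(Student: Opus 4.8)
The plan is to reduce to the bounded fractional hypertree width case and then appeal to the construction of Olteanu and Z\'avodn\'y. Fix $\strucA$, $\strucB$ and $\delta>0$ and write $w:=\subw(\strucA)$. The engine is Marx's submodular width decomposition~\cite{DBLP:journals/jacm/Marx13}: I would invoke it to compute, in time $h_1(\|\strucA\|)\cdot\|\strucB\|^{(1+\delta)w}$, a family of at most $m\le g_1(\|\strucA\|)$ sub-instances, each given by a structure $\strucB_j$ (whose relations are obtained from those of $\strucB$ by restriction, possibly after materialising a bounded number of auxiliary relations that sit inside bags) together with a tree decomposition $\treeT_j\in\classT(\strucA)$, such that (i) $\Hom(\strucA,\strucB)=\bigcup_{j=1}^m \Hom(\strucA,\strucB_j)$ with each $\Hom(\strucA,\strucB_j)\subseteq\Hom(\strucA,\strucB)$, and (ii) the fractional hypertree width of $\treeT_j$, measured with respect to the relation sizes of $\strucB_j$, is at most $(1+\delta)w$. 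Equivalently, (ii) guarantees that for every bag $t$ of $\treeT_j$ the number of distinct restrictions $h\!\restriction_{\beta(t)}$ of homomorphisms $h\in\Hom(\strucA,\strucB_j)$ is at most $\|\strucB\|^{(1+\delta)w}$ --- this is exactly the quantity that the AGM bound~\cite{DBLP:conf/focs/AtseriasGM08} would otherwise bound only by $\|\strucB\|^{\fec(\beta(t))}$.

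Next I would feed each triple $(\strucA,\strucB_j,\treeT_j)$ into the Olteanu--Z\'avodn\'y construction~\cite{DBLP:journals/tods/OlteanuZ15}. Since $\treeT_j$ has fractional hypertree width at most $(1+\delta)w$ relative to $\strucB_j$, that construction outputs a $\{\cup,\times\}$-circuit $\C_j$ computing $\Hom(\strucA,\strucB_j)$ of size $O(\|\strucA\|^2\,\|\strucB_j\|^{(1+\delta)w})=O(\|\strucA\|^2\,\|\strucB\|^{(1+\delta)w})$, within the same time bound. Note we do not need the refined guarantees here: structuredness and determinism of the $\C_j$ will not survive the gluing step below and are not claimed in the statement.

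Finally I would glue: take the disjoint union of $\C_1,\dots,\C_m$ and connect their sinks by a balanced binary tree of $m-1$ fresh $\cup$-gates. Because $\var(\C_j)=A$ for every $j$, all new $\cup$-gates are smooth, and decomposability is inherited, so the result is a legal $\{\cup,\times\}$-circuit, and by construction it computes $\bigcup_j S(\C_j)=\bigcup_j\Hom(\strucA,\strucB_j)=\Hom(\strucA,\strucB)$. Its size is $m\cdot O(\|\strucA\|^2\,\|\strucB\|^{(1+\delta)w})=O\!\big(g(\|\strucA\|)\,\|\strucB\|^{(1+\delta)w}\big)$ with $g(\|\strucA\|):=g_1(\|\strucA\|)\cdot\|\strucA\|^2$, and the running time is dominated by the decomposition step plus $m$ runs of Olteanu--Z\'avodn\'y, hence $O\!\big(h(\|\strucA\|)\,\|\strucB\|^{(1+\delta)w}\big)$. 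The first sentence of the theorem then follows immediately, since on a class $\classA$ of bounded submodular width the exponent $\subw(\strucA)$ is a constant, so the circuit is polynomial in $\|\strucB\|$ with an arbitrary dependence on $\|\strucA\|$.

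The main obstacle is entirely inside the black box I would invoke: proving that $\Hom(\strucA,\strucB)$ splits into $g_1(\|\strucA\|)$ pieces each carrying a tree decomposition with per-bag solution-projection bounded by $\|\strucB\|^{(1+\delta)\subw(\strucA)}$ is Marx's ``uniform split'' argument built on the polymatroid/LP characterisation of submodular width, and I would not reprove it. The only genuinely new bookkeeping lies at the interface: one must check that Marx's pieces can be presented as pairs $(\strucA_j,\strucB_j)$ whose left-hand side still decomposes along a fixed tree $\treeT_j$ (possibly after adding bag-internal hyperedges to $\strucA$, which does not raise the relevant width), so that the Olteanu--Z\'avodn\'y construction applies essentially verbatim, and that smoothness is preserved by the final $\cup$-gluing. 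Both of these are routine and are precisely the combination already carried out --- in the guise of constant delay enumeration --- in~\cite{DBLP:conf/mfcs/BerkholzS19}.
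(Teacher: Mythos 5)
Your proposal is essentially the same route the paper relies on: the paper does not prove this theorem itself but imports it from Marx and Berkholz--Schweikardt, noting that their algorithms are adapted to output a circuit exactly as you describe (Marx's splitting of the instance into sub-instances with per-bag solution projections bounded by $\|\strucB\|^{(1+\delta)\subw(\strucA)}$, an Olteanu--Z\'avodn\'y-style circuit built along each piece's tree decomposition, and a final smooth $\cup$-gluing, cf.\ Appendix~A of the full version of \cite{berkholz2022dichotomy}). The one bookkeeping slip is that the number of sub-instances produced by the splitting is not bounded by a function of $\|\strucA\|$ alone but carries a polylogarithmic dependence on $\|\strucB\|$; this factor is harmless because it is absorbed into the $\|\strucB\|^{\delta}$ slack in the exponent, which is also why the statement has $(1+\delta)\subw(\strucA)$ rather than $\subw(\strucA)$.
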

   
\noindent Note that \cite{DBLP:journals/jacm/Marx13,DBLP:conf/mfcs/BerkholzS19} do not mention $\{\cup,  \times\}$-circuits; they are instead concerned with enumeration and decision algorithms respectively. But it is easy to adapt their algorithm to output the required circuit, see Appendix~A in the full version of \cite{berkholz2022dichotomy} for more details. 

Our Theorem~\ref{thm: sub_lb} implies that if $\classA$ has unbounded submodular width then there is no FPT time algorithm for $\{\cup,  \times\}$-$\CSP(\classA, \classall)$. However, there is still a $w^{1/4}$ vs $w$ gap in the exponents\confORfull{.}{; we leave the task of closing this gap to future work.} Another thing to note is that since the $\{\cup,  \times\}$-circuits from Theorem~\ref{thm: subw} are \confORfull{not deterministic}{neither structured nor deterministic}, the resulting representations of $\Hom(\strucA, \strucB)$ are not as useful algorithmically. Despite this, they can still be used to achieve constant delay enumeration with FPT preprocessing \cite{DBLP:conf/mfcs/BerkholzS19} and approximate counting in FPT time~\cite{meel2024cfg}.

        \section{Lower Bounds for \texorpdfstring{$\{\cup, \times\}$-}{Union-Product }Circuits via Rectangle Covers} \label{s:rect}

In this section, we introduce a by now standard framework---pioneered in \cite{DBLP:conf/ijcai/BovaCMS16}---for proving lower bounds on $\{\cup,\times\}$-circuits  and their subclasses: analysing rectangle covers. This enables us to reduce reasoning about complicated objects ($\{\cup,\times\}$-circuits) to  simple ones (combinatorial rectangles). For this we need the following definition.

\begin{definition} \label{def:rect}
Let $X$ be a finite set, $(Y,Z)$ a partition of $X$ and $\mathcal{R}$ be a set of functions with domain $X$. Then $\mathcal{R}$ is an \emph{$(X,Y)$-rectangle} iff $\mathcal{R} = \mathcal{R}\restriction_Y \times \, \mathcal{R}\restriction_Z$. 
\end{definition}

\begin{example} \label{ex:defrec}
Let $X= \{y,z\}$, $Y =\{y\}$ and $Z = \{z\}$. Further let $\mathcal{R}_1 = \{f, g \}$, where $f(y) = f(z) = 0$ and $g(y) = g(z) = 1$. Then $\mathcal{R}_1$ is \emph{not} a $(Y,Z)$-rectangle. To see this note that $ \mathcal{R}\restriction_Y \times \mathcal{R}\restriction_Z$ contains the map $\{y \mapsto 0, z \mapsto 1\}$. On the other hand, the set of functions $\mathcal{R}_2 = \{f \, \mid \, f \colon  \{y,z\} \to \{0,1\}\}$ is a $(Y,Z)$-rectangle. 
\end{example}
\noindent Note that trivially every set of functions with domain $X$ is a $(X, \emptyset)$-rectangle. To get a non-trivial notion we introduce the notion of a \emph{balanced} partition; that is a partition $(Y,Z)$ of $X$ such that $\min(|Y|, |Z|) \ge |X|/3$. We then say a set $\mathcal{R}$ is a \emph{balanced} rectangle, if it is a $(Y,Z)$-rectangle for some balanced partition $(Y,Z)$. We say that a set of balanced rectangles $\{\mathcal{R}_i\}_{i \in \mathcal{I}}$ is a \emph{balanced rectangle cover} for $\Hom(\strucA, \strucB)$ if $\bigcup_{i \in \mathcal{I}} \mathcal{R}_i = \Hom(\strucA, \strucB)$. The following lemma shows that if there is a small $\{\cup,\times\}$-circuit computing $\Hom(\strucA, \strucB)$, then there is a small balanced rectangle cover for $\Hom(\strucA, \strucB)$. 

\begin{lemma}[{\cite{DBLP:conf/ijcai/BovaCMS16, DBLP:conf/stacs/BerkholzMW24}}] \label{lem: basic_rect}
Let $\strucA, \strucB$ be structures and let $\C$ be a $\{\cup,\times\}$-circuit of size $s$ computing $\Hom(\strucA, \strucB)$. Then there is a balanced rectangle cover of $\Hom(\strucA, \strucB)$ of size $s$. 
\end{lemma}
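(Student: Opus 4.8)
The plan is to follow the standard ``gate-labelling'' argument from knowledge compilation. Given a $\{\cup,\times\}$-circuit $\C$ of size $s$ computing $\Hom(\strucA,\strucB)$, the goal is to associate to each homomorphism $h\in\Hom(\strucA,\strucB)$ a single gate $g_h$ of $\C$ whose computed set $S(g_h)$ is a \emph{balanced rectangle} containing $h$, in such a way that $S(g_h)\subseteq \Hom(\strucA,\strucB)$. Since there are only $s$ gates, the family $\{S(g_h) \mid h\in\Hom(\strucA,\strucB)\}$ has at most $s$ members, each is a balanced rectangle, and their union is exactly $\Hom(\strucA,\strucB)$ (it is contained in it because each $S(g)$ for $g$ a gate of $\C$ is a subset of $S(\C)=\Hom(\strucA,\strucB)$; actually one has to be slightly careful here — $S(g)$ for an internal gate need not be a subset of $S(\C)$ in general, so the rectangle we extract must be the \emph{trace at the sink} of a subset of $S(g)$; see the third paragraph).

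First I would recall the key structural fact: for a $\times$-gate $g$ with children $g_1,g_2$, decomposability $\var(g_1)\cap\var(g_2)=\emptyset$ gives $S(g)=S(g_1)\times S(g_2)$, so $S(g)$ is a $(\var(g_1),\var(g_2))$-rectangle on the variable set $\var(g)$. So every $\times$-gate computes a rectangle; the issue is only \emph{balance} (the split $\var(g_1),\var(g_2)$ of $\var(g)$ need not be balanced, and moreover $\var(g)$ may be a strict subset of $X=A$). To fix balance one walks down the circuit from the sink. Formally, fix $h\in\Hom(\strucA,\strucB)$. Starting at the sink $s$, which has $\var(s)=X$, repeatedly descend: at a $\cup$-gate move to whichever child's set still ``contains'' $h$ (restricted appropriately), at a $\times$-gate move to the child $g_i$ maximising $|\var(g_i)|$. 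Each step through a $\times$-gate at least halves — more precisely, reduces by a factor — the size of $\var$ of the current gate, while $\cup$-gates and input gates do not change $\var$. Hence there is a first gate $g$ on this path with $|\var(g)|<2|X|/3$; its parent on the path is a $\times$-gate (it cannot be a $\cup$-gate or the sink), say with children $g,g'$, and $\var(g)\cup\var(g')=\var(\text{parent})$ where $|\var(\text{parent})|\ge 2|X|/3$ and $|\var(g)|\ge |\var(g')|$, which forces $|\var(g)|\ge |X|/3$ and $|\var(g')| = |\var(\text{parent})|-|\var(g)| \ge |X|/3$ as well. Wait — I need to double check the constants: with the rule ``descend into the larger child'', if $|\var(g_1)|\ge|\var(g_2)|$ then $|\var(g_1)|\ge |\var(\text{parent})|/2$, so the size can drop by at most a factor $2$ per $\times$-step; thus the first time we go below $2|X|/3$ we are still at size $\ge |X|/3$. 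So $(\var(g),\var(g'))$ — or rather its extension to a partition of $X$ — is balanced, and $S(\text{parent})=S(g)\times S(g')$ is a rectangle whose induced partition on its variable set is balanced; I then set $g_h$ to be this parent $\times$-gate.

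The remaining subtlety is that $\var(g_h)$ may be a proper subset of $X=A$, so $S(g_h)$ is a rectangle on $\var(g_h)$, not on $X$, and a priori is not a subset of $\Hom(\strucA,\strucB)$. The standard fix is to define, for the gate $g_h$, the set of functions of domain $X=A$ obtained as $\{\,h'\in \Hom(\strucA,\strucB) \mid h'\restriction_{\var(g_h)}\in S(g_h)\,\}$; call it $\mathcal R_{g_h}$. One checks $\mathcal R_{g_h}\subseteq\Hom(\strucA,\strucB)$ trivially, that $h\in\mathcal R_{g_h}$, and that $\mathcal R_{g_h}$ is a $(\var(g_h), X\setminus\var(g_h))$-rectangle: this uses a known ``rectangle lifting'' property — that the set of homomorphisms restricting into a rectangle on a subset of coordinates is itself a rectangle (this is where one invokes that $S(g_h)$ is already a product, combined with the fact that $\Hom(\strucA,\strucB)$ is closed under the relevant ``swap'', which in turn follows from $S(g_h)$ being computed by a decomposable sub-circuit and every completion along the path from $g_h$ being available). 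This is the step I expect to be the main obstacle to write cleanly, and is exactly the place where the references \cite{DBLP:conf/ijcai/BovaCMS16, DBLP:conf/stacs/BerkholzMW24} do the work; I would state it as a sub-claim, prove the product structure directly from $S(g_h)=S(g_1)\times S(g_2)$ and the observation that any $h'$ agreeing with $h$ outside $\var(g_h)$ and with an element of $S(g_h)$ inside is again in $\Hom(\strucA,\strucB)$ (because $h'$ can be realised by re-routing the descent at $\cup$-gates), and conclude. Finally, since $g_h$ is one of at most $s$ gates, the family $\{\mathcal R_{g_h}\}_{h}$ has size at most $s$, each member is a balanced rectangle contained in $\Hom(\strucA,\strucB)$, and their union is all of $\Hom(\strucA,\strucB)$ — giving the balanced rectangle cover of size $s$.
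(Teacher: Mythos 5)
Your descent down the circuit is exactly the right move and matches the paper's own argument (there phrased for the weighted generalisation, Lemma~\ref{lem:weightedrect}): at $\cup$-gates $\var$ is unchanged by smoothness, at $\times$-gates it splits disjointly, so descending into the larger child yields a gate $g$ with $|X|/3\le|\var(g)|< 2|X|/3$, and the balanced partition to use is $(\var(g),X\setminus\var(g))$. Two things go wrong after that. First, the bookkeeping around the parent: the claim $|\var(g')|=|\var(g_h)|-|\var(g)|\ge|X|/3$ is false (take $|\var(g_h)|$ just above $2|X|/3$ and $|\var(g)|$ just below it), and if you attach the rectangle to the parent $\times$-gate $g_h$ then $|\var(g_h)|\ge 2|X|/3$, so the partition $(\var(g_h),X\setminus\var(g_h))$ you actually use at the end is in general \emph{not} balanced. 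The gate to keep is the child $g$ itself, with partition $(\var(g),X\setminus\var(g))$.

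Second, and more seriously, the ``rectangle lifting'' sub-claim that carries your final step is false as stated: the set $\mathcal R_g=\{h'\in\Hom(\strucA,\strucB)\mid h'\restriction_{\var(g)}\in S(g)\}$ need not be a rectangle. Concretely, let $\sigma=\{R\}$ with $R$ binary, $R^{\strucA}=\{(x,y)\}$ and $R^{\strucB}=\{(0,0),(1,0),(0,1)\}$, so $\Hom(\strucA,\strucB)=\{00,10,01\}$ (writing $h$ as $h(x)h(y)$), computed by the circuit $\bigl[(x{\mapsto}0\,\cup\,x{\mapsto}1)\times y{\mapsto}0\bigr]\,\cup\,\bigl[x{\mapsto}0\times y{\mapsto}1\bigr]$. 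The inner $\cup$-gate $g$ with $S(g)=\{x\mapsto0,x\mapsto1\}$ is a legitimate stopping gate of your descent (the stopping gate need not be a $\times$-gate), yet $\mathcal R_g$ is all of $\Hom(\strucA,\strucB)$, which is not a $(\{x\},\{y\})$-rectangle. The rectangle the cited proofs (and the paper's gate-deletion argument) use is not ``all homomorphisms whose restriction lies in $S(g)$'' but ``all homomorphisms admitting a proof tree through $g$'': if $h_1,h_2$ both have proof trees using $g$, then decomposability of the $\times$-gates above $g$ together with smoothness lets you splice the sub-tree of $h_1$ at $g$ into the proof tree of $h_2$, so this set equals the product of its $\var(g)$-projection with its complementary projection, is contained in $\Hom(\strucA,\strucB)$, and contains $h$; one rectangle per gate then gives the cover of size $s$. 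Your parenthetical about ``re-routing the descent at $\cup$-gates'' gestures at exactly this splice, but the set you defined is strictly larger than what re-routing can reach, and for it the product structure genuinely fails; this step needs to be redone with the proof-tree-based rectangle.
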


\noindent Therefore, to prove lower bounds on $\{\cup,\times\}$-circuits it suffices to prove lower bounds on the size of balanced rectangle covers. This result was proven for Boolean functions in \cite[Theorem 6]{DBLP:conf/ijcai/BovaCMS16} and extended to general sets of functions in \cite[Lemma 26]{berkholz2023characterization}. 

In fact, we need a slight generalisation of Lemma~\ref{lem: basic_rect} that allows us to balance rectangles with respect to some weighting function $f \colon \mathcal{P}(A) \to \mathbb{R}^{\ge 0}$ defined on subsets of the universe of $\strucA$. To be precise, we say that a partition $(X,Y)$ of $A$ is $f$-balanced if $\min(f(X), f(Y)) \ge f(A)/3$. We define the notions of an $f$-balanced rectangle and an $f$-balanced cover in the obvious way. To ensure the proof goes through we need an $f$ which is sufficiently small on singletons and which is \emph{sub additive on disjoint subsets}. That is on disjoint sets $S,T \subseteq A$, it holds that $f(S \cup T) \le f(S) + f(T)$. \confORfull{See Appendix~\ref{a:weighted} for a proof.}{}

\begin{lemma} \label{lem:weightedrect}
Let $\strucA, \strucB$ be structures, $\C$ be a $\{\cup,\times\}$-circuit  computing $\Hom(\strucA, \strucB)$ and $f \colon \mathcal{P}(A) \to \mathbb{R}^{\ge 0}$ be a function which is sub additive on disjoint subsets and satisfying $f(\{a\}) \le 2f(A)/3$ for all $a \in A$. Then there is a $f$-balanced rectangle cover of $\Hom(\strucA, \strucB)$ of size $|\C|$. 
\end{lemma}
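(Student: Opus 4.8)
The plan is to adapt the proof of Lemma~\ref{lem: basic_rect} (the unweighted case from~\cite{DBLP:conf/ijcai/BovaCMS16, berkholz2023characterization}), replacing the counting measure on $A$ by the weighting function $f$ throughout. Recall the standard argument: given a $\{\cup,\times\}$-circuit $\C$, one shows that each homomorphism $h \in \Hom(\strucA, \strucB)$ can be "charged" to a well-chosen gate $g$ of $\C$ such that the set $S(g)$, suitably extended, forms a balanced rectangle containing $h$; since there are only $|\C|$ gates, this yields a rectangle cover of size $|\C|$.

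First I would set up the charging scheme. Fix $h \in \Hom(\strucA,\strucB) = S(s)$, where $s$ is the sink. Since $\var(s) = A$ and $f(\{a\}) \le 2f(A)/3$ for each $a$, while $f(A)$ is achieved at the root, a walk down from $s$ following, at each gate, a child whose subcircuit still contains (a restriction of) $h$ lets us track the quantity $f(\var(g))$. At an input gate $g$ we have $\var(g) = \{a\}$ with $f(\var(g)) = f(\{a\}) \le 2f(A)/3 < f(A)$; at the sink $f(\var(s)) = f(A)$. At a $\cup$-gate the value $f(\var(g))$ is preserved when passing to either child (since $\var(g_1) = \var(g_2) = \var(g)$); at a $\times$-gate with children $g_1, g_2$ we have $\var(g) = \var(g_1) \,\dot\cup\, \var(g_2)$, so by subadditivity on disjoint subsets $f(\var(g)) \le f(\var(g_1)) + f(\var(g_2))$, hence at least one child $g_i$ has $f(\var(g_i)) \ge f(\var(g))/2$. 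Following the path from $s$ downward, the quantity $f(\var(\cdot))$ starts at $f(A)$ and ends below $f(A)$, and at each step either stays the same ($\cup$-gate) or drops (at a $\times$-gate, to something $\ge$ half of the previous value along the heavy child). So there is a first $\times$-gate $g$ on this path, with heavy child $g_1$ (the one we continue into) and light child $g_2$, such that $f(\var(g_1)) < f(A)/3$...

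— wait, I need to be careful about which threshold is crossed. The right choice: walk down always into a child whose $S$-set contains $h\restriction_{\var(\cdot)}$, and among $\times$-gate children pick the heavier one; let $g$ be the \emph{last} gate on the path with $f(\var(g)) \ge 2f(A)/3$. Then $g$ is a $\times$-gate (inputs and $\cup$-children cannot drop the value, and the value does eventually go below $2f(A)/3$ since it reaches an input gate), with children $g_1$ (heavy, the one we descend into, so $f(\var(g_1)) < 2f(A)/3$) and $g_2$ (light). By heaviness $f(\var(g_1)) \ge f(\var(g))/2 \ge f(A)/3$, and by subadditivity $f(\var(g_2)) \ge f(\var(g)) - f(\var(g_1)) \ge 2f(A)/3 - 2f(A)/3$... this forces me to instead bound $f(\var(g_2))$ from below via $f(\var(g_2)) \ge f(\var(g)) - f(\var(g_1)) > 2f(A)/3 - 2f(A)/3 = 0$, which is not enough. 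The correct fix, exactly as in the unweighted proof, is that one need not split at a single gate: one takes the first gate on the path where $f(\var(g)) < 2f(A)/3$; its parent $g'$ is a $\times$-gate with $f(\var(g')) \ge 2f(A)/3$, heavy child $g$, light child $g''$; then $f(\var(g)) \ge f(\var(g'))/2 \ge f(A)/3$ and $f(\var(g'') ) = f(\var(g')) - $ nothing directly, but we can finish by noting $\var(g'') = \var(g') \setminus \var(g)$ and subadditivity gives $f(\var(g'')) \ge f(\var(g')) - f(\var(g))$, which is $\ge 2f(A)/3 - 2f(A)/3$ again — so actually one sets up the two sides of the rectangle partition not as $(\var(g), \var(g''))$ but as $(\var(g), A \setminus \var(g))$, and checks $f(\var(g)) \in [f(A)/3, 2f(A)/3]$, hence $f(A\setminus\var(g)) \ge f(A) - f(\var(g)) \ge f(A)/3$ by subadditivity again. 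That gives an $f$-balanced partition $(\var(g), A\setminus\var(g))$.

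The second step is to exhibit the rectangle. Let $Y = \var(g)$, $Z = A \setminus Y$, and let $\mathcal{R}_g$ be the set of all $h' \in \Hom(\strucA,\strucB)$ whose path (under the deterministic descent rule, or just: for which there is a descent) passes through $g$ with $h'\restriction_Y \in S(g)$. More simply, following~\cite{DBLP:conf/ijcai/BovaCMS16}, define for each gate $g$ the rectangle $\mathcal{R}_g := \{ h' \in \Hom(\strucA,\strucB) : h'\restriction_{\var(g)} \in S(g)\} \cap (\text{some } (Y,Z)\text{-rectangle closure})$; the key point established in the cited lemmas is that the collection of these sets over all $\times$-gates $g$, restricted to those $h$ charged to $g$, forms genuine combinatorial rectangles with respect to $(\var(g), A \setminus \var(g))$ and covers $\Hom(\strucA,\strucB)$. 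Since this rectangle-structure argument is entirely about the circuit semantics ($\var$, $S(\cdot)$, decomposability) and does not touch the measure, it carries over verbatim; only the balancedness of the partition uses $f$, and that is exactly what the preceding paragraph established. Thus we obtain an $f$-balanced rectangle cover of size at most $|\C|$ (one rectangle per $\times$-gate, plus trivially handling the degenerate cases where $\Hom(\strucA,\strucB)$ is empty or $|A| \le 1$).

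The main obstacle is getting the threshold bookkeeping exactly right so that the chosen split $(\var(g), A\setminus\var(g))$ is provably $f$-balanced: one must verify $f(\var(g)) \ge f(A)/3$ using the "heavy child" inequality $f(\var(g)) \ge f(\var(\text{parent}))/2$ (which needs subadditivity on disjoint subsets at the parent $\times$-gate), verify $f(\var(g)) \le 2f(A)/3$ by the choice of $g$ as the first gate below the threshold together with the fact that input gates have $f(\var(g)) = f(\{a\}) \le 2f(A)/3$ so the threshold is indeed crossed, and finally verify $f(A \setminus \var(g)) \ge f(A) - f(\var(g)) \ge f(A)/3$ again by subadditivity on the disjoint union $A = \var(g) \,\dot\cup\, (A\setminus\var(g))$. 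Everything else is a transcription of the known proof, so I would either reproduce that argument in an appendix or cite~\cite[Theorem 6]{DBLP:conf/ijcai/BovaCMS16} and~\cite[Lemma 26]{berkholz2023characterization} and only spell out the modified balancedness computation.
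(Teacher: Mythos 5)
Your proposal is correct and follows essentially the same route as the paper's proof: a top-down descent that uses subadditivity on disjoint subsets to find a heavy child at each $\times$-gate, the hypothesis $f(\{a\}) \le 2f(A)/3$ to guarantee the threshold $[f(A)/3, 2f(A)/3]$ is eventually hit, and subadditivity once more to bound $f(A \setminus \var(g))$, with the measure-independent rectangle-cover machinery deferred to the cited works. The only difference is presentational (you charge homomorphisms to gates \`a la Bova et al., while the paper iteratively deletes gates), and your final threshold bookkeeping, after the self-corrections, matches the paper's.
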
%

\confORfull{}{%
\begin{proof}[Proof (sketch)]
Every gate $g$ in a $\{\cup,\times\}$-circuit corresponds to a $(\var(g), A \setminus \var(g))$ rectangle, so if we delete the gate from the circuit we remove the corresponding rectangle from the output. Therefore, by iteratively deleting gates we get a rectangle cover and since the number of iterations is at most the number of gates in the original circuit the size bound on the cover follows;  see the proof of \cite[Lemma 26]{berkholz2023characterization}. It remains to show that at each stage we can find a gate to eliminate with an appropriate partition; i.e.\! one that is $f$-balanced. In fact, it is enough to find a gate $\hat{g}$ such that $f(A)/3 \le f(\var(\hat{g})) \le 2f(A)/3$. To see this note that since $f$ is subadditive on disjoint subsets, $f(A \setminus \var(\hat{g})) \ge f(A) - f(\var(\hat{g})) \ge f(A)/3$. We find $\hat{g}$ via a top-down descent in the circuit using the fact that $f$ is subadditive on disjoint subsets. 

In detail, we start at the unique sink $s$ of the circuit, noting that
$f(\var(s)) = f(A)$.
We maintain the invariant that $f(\var(g)) \ge f(A)/3$ for every node
$g$ in our descent. Suppose at some point we are at a node $g$. Then by assumption $f(\var(g)) \ge f(A)/3$. If also $f(\var(g)) \le 2f(A)/3$ we set $\hat{g} = g$ and are done. Otherwise, since $f(\var(g)) > 2f(A)/3$, we know that $g$ is not an input gate, so $g$ has two children $g_1, g_2$. If $g$ is a $\cup$-gate, by smoothness we have $f(\var(g_1)) = f(\var(g)) = f(\var(g_2))$, so we choose one of these children arbitrarily and continue the descent from there. Otherwise, $g$ is a $\times$-gate so by decomposability $\var(g_1)$ and $\var(g_2)$ are disjoint. Therefore, by the sub-additivity of $f$ on disjoint subsets we have $f(\var(g)) \le f(\var(g_1)) + f(\var(g_2))$. In particular, there is some child $u$ of $g$ satisfying $f(\var(u)) \ge f(\var(g))/2 \ge f(A)/3$; so we can continue the descent from $u$.
\end{proof}
}

\noindent Therefore, to prove lower bounds we will define for each $\strucA$ a structure $\strucB$ such that $\Hom(\strucA, \strucB)$ is large but the size of each $f$-balanced rectangle is small, for some appropriate function $f$.\confORfull{}{ In the next section, we will see a simple example of this where $f(S) := |S \cap W|$, where $W$ is the `highly connected set' in $\strucA$.} To upper bound the size of rectangles we use the following simple observation.

\begin{observation} \label{ob:projectub}
Let $\mathcal{R}$ be a set of functions with domain $X$. Then 
$|\mathcal{R}| \le \prod_{x \in X} |\mathcal{R}\restriction_{\{x\}}\!|.$
\end{observation}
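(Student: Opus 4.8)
This is the statement to prove:

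\begin{observation} \label{ob:projectub}
Let $\mathcal{R}$ be a set of functions with domain $X$. Then
$|\mathcal{R}| \le \prod_{x \in X} |\mathcal{R}\restriction_{\{x\}}|.$
\end{observation}

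\textbf{Proof proposal.}

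The plan is to exhibit an injection from $\mathcal{R}$ into the Cartesian product $\prod_{x \in X} \mathcal{R}\restriction_{\{x\}}$; the bound is then immediate by taking cardinalities, since the cardinality of a Cartesian product is the product of the cardinalities of its factors.

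Concretely, I would define the map $\Phi \colon \mathcal{R} \to \prod_{x \in X} \mathcal{R}\restriction_{\{x\}}$ by $\Phi(f) := (f\restriction_{\{x\}})_{x \in X}$. This is well-defined: for each $x \in X$, the function $f\restriction_{\{x\}}$ is by definition an element of $\mathcal{R}\restriction_{\{x\}}$. To see that $\Phi$ is injective, suppose $f, g \in \mathcal{R}$ satisfy $\Phi(f) = \Phi(g)$. Then $f\restriction_{\{x\}} = g\restriction_{\{x\}}$ for every $x \in X$, which means $f(x) = g(x)$ for every $x \in X$; since $f$ and $g$ both have domain $X$, this gives $f = g$. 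Hence $\Phi$ is injective, and so
\[
|\mathcal{R}| \;\le\; \Bigl| \prod_{x \in X} \mathcal{R}\restriction_{\{x\}} \Bigr| \;=\; \prod_{x \in X} \bigl| \mathcal{R}\restriction_{\{x\}} \bigr|.
\]

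There is essentially no obstacle here: the statement is a direct consequence of the fact that a function is determined by its values at each point of its domain, phrased in the language of restrictions. The only thing to be mildly careful about is the degenerate case $X = \emptyset$, where both sides equal $1$ (the empty product, and the fact that there is a unique function with empty domain), and the convention that if some factor $\mathcal{R}\restriction_{\{x\}}$ is empty then $\mathcal{R}$ itself is empty and both sides are $0$; in all cases the inequality holds.
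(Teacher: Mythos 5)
Your proof is correct and is exactly the standard argument the paper implicitly relies on when stating this as an unproved observation: a function is determined pointwise, so $f \mapsto (f\restriction_{\{x\}})_{x\in X}$ injects $\mathcal{R}$ into the product. Nothing further is needed.
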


        \section{Tight Lower Bounds for the Bounded Arity Case} \label{s:tw}

In this section, we prove Theorem~\ref{thm: tw_struc}. We first show the result for graphs and then lift it to relational structures of bounded arity. To be precise, the main part of this section is devoted to the proof of the following theorem. 

\begin{theorem} \label{thm:graph_lb}
Let  $\graphG$ be a graph. Then there exist arbitrarily large graphs $\graphH$ with $n$ vertices and at least $n^2/8$ edges such that any $\{\cup,\times\}$-circuit computing $\Hom(\graphG, \graphH)$ has size 
$\Omega\left(n^{\frac{\tw(\graphG)}{9} - (1 + \delta)}\right) $
for arbitrarily small $\delta>0$. 
\end{theorem}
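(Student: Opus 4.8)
The plan is to combine the rectangle-cover machinery of Lemma~\ref{lem:weightedrect} with the highly-connected-set characterisation of treewidth. First I would invoke the result of Robertson--Seymour / Reed that a graph $\graphG$ with $\tw(\graphG) \ge k$ contains a set $W \subseteq V_{\graphG}$ with $|W| = \Theta(k)$ that is \emph{highly connected} (well-linked): for every partition $(P,Q)$ of $V_{\graphG}$ with both sides containing at least $|W|/3$ vertices of $W$, there are ``many'' edges crossing $(P,Q)$ — more precisely, no small vertex separator splits $W$ evenly, which by Menger translates into a large number of vertex-disjoint $W$-paths, hence many crossing edges. I would fix such a $W$ and let $f(S) := |S \cap W|$; this $f$ is clearly subadditive on disjoint subsets and $f(\{a\}) \le 1 \le 2f(A)/3 = 2|W|/3$ once $|W| \ge 2$, so Lemma~\ref{lem:weightedrect} applies: any $\{\cup,\times\}$-circuit for $\Hom(\graphG,\graphH)$ of size $s$ yields an $f$-balanced rectangle cover of size $s$.

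Next I would choose $\graphH$ to be the random graph $G(n,1/2)$ on $n := n_{\graphH}$ vertices, so that with high probability it has at least $n^2/8$ edges and, crucially, contains no biclique $K_{t,t}$ with $t = O(\log n)$. The main computation is then a counting/union-bound argument showing that no single $f$-balanced rectangle can cover more than an $n^{-(\tw(\graphG)/9 - O(1))}$ fraction of $\Hom(\graphG,\graphH)$, whence the cover — and the circuit — must have the claimed size. For a fixed $f$-balanced partition $(Y,Z)$ of $V_{\graphG}$, a $(Y,Z)$-rectangle $\mathcal R \subseteq \Hom(\graphG,\graphH)$ factors as $\mathcal R\restriction_Y \times \mathcal R\restriction_Z$; by Observation~\ref{ob:projectub} its size is at most $\prod_{v} |\mathcal R\restriction_{\{v\}}| \le n^{|V_{\graphG}|}$, but that bound is far too weak. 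The point is that because $(Y,Z)$ is $f$-balanced, $W$ is split roughly evenly, so there are $m = \Omega(\tw(\graphG))$ edges of $\graphG$ crossing the partition; call their endpoints in $Y$ the set $Y_0$ and in $Z$ the set $Z_0$. Any homomorphism in a $(Y,Z)$-rectangle is determined by a pair (restriction to $Y$, restriction to $Z$) that can be glued, so the images of $Y_0$ and $Z_0$ must be ``rectangular'': for $\mathcal R$ to lie inside $\Hom(\graphG,\graphH)$, every pair of a $Y$-part and a $Z$-part must be jointly consistent across all $m$ crossing edges, i.e. the set of realised $(h\restriction_{Y_0}, h\restriction_{Z_0})$ pairs is a combinatorial rectangle inside the ``consistency relation'' on $B^{Y_0}\times B^{Z_0}$ determined by the crossing edges. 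In a random $\graphH$ with no large biclique, any such rectangle contained in the edge-consistency pattern of $m$ crossing edges has size at most (roughly) $n^{|Y|} \cdot n^{|Z|} / 2^{\Omega(m)}$ — intuitively each crossing edge independently halves the number of compatible pairs, and the absence of $K_{t,t}$ prevents the rectangle from recovering more than a $\poly\log$ factor. Comparing with $|\Hom(\graphG,\graphH)| \ge$ (number of maps extending a fixed partial assignment) $\approx n^{|V_{\graphG}|}/2^{|E_{\graphG}|}$ and summing over the at most $2^{|V_{\graphG}|}$ balanced partitions gives the bound $s \ge n^{\tw(\graphG)/9 - (1+\delta)}$ after absorbing constants; the exact constant $1/9$ comes from the $|W|/3$ split combined with the ratio between $|W|$ and $m$.

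The step I expect to be the main obstacle is making the ``each crossing edge halves the rectangle'' heuristic rigorous: one must show that for $\graphH = G(n,1/2)$, with high probability \emph{simultaneously for every} balanced partition $(Y,Z)$ and \emph{every} rectangle $\mathcal R \subseteq \Hom(\graphG,\graphH)$ respecting it, the projection onto the crossing-edge coordinates is small. This needs a careful union bound: there are only $2^{|V_{\graphG}|}$ partitions (a constant, since $\graphG$ is fixed), but for each one there are many candidate rectangles, so one really wants a statement of the form ``in $G(n,1/2)$ every combinatorial rectangle inside the consistency relation of $m$ vertex-disjoint-ish edges has density $2^{-\Omega(m)}$'', which is where the no-large-biclique property (itself a high-probability event) is used, together with the fact that the $m$ crossing edges can be taken to form a structure close to a matching on $W$ by the well-linkedness. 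Once this concentration statement is in place, the rest is bookkeeping: verifying $|\Hom(\graphG,\graphH)|$ is large (again high probability, since a random map is a homomorphism with probability $2^{-|E_{\graphG}|}$, independent of $n$), checking the hypotheses of Lemma~\ref{lem:weightedrect}, and collecting the exponents.
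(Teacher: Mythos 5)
Your overall architecture matches the paper's proof: a highly connected set $W$ with no small balanced separator, the weighting $f(S)=|S\cap W|$ fed into Lemma~\ref{lem:weightedrect}, a dense random biclique-free $\graphH$, and a family of crossing edges (essentially a matching) across any $W$-balanced partition whose endpoint-projections are then controlled via the absence of large bicliques. However, the central quantitative claim you write down is wrong, and it is exactly the step you defer as ``the main obstacle''. You claim that a rectangle respecting a balanced partition has size at most roughly $n^{|Y|}\cdot n^{|Z|}/2^{\Omega(m)}$, with ``each crossing edge halving the number of compatible pairs''. Since $|\Hom(\graphG,\graphH)|\approx n^{|V_{\graphG}|}2^{-|E_{\graphG}|}$ and both $2^{\Omega(m)}$ and $2^{|E_{\graphG}|}$ are constants independent of $n$ (the query is fixed), this bound only forces the cover to have size $\Omega(1)$ --- it cannot yield any $n^{\Omega(\tw)}$ lower bound. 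The ``halving'' heuristic conflates the average density of $E_{\graphH}$ with the density of a \emph{rectangle} inside $E_{\graphH}$: a biclique is a rectangle of density $1$ on its support, which is precisely why the no-$K_{t,t}$ property is the right tool and why it gives a much stronger saving than a factor $2$ per edge.

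The correct statement is short and needs no union bound over rectangles (another worry you raise that in fact does not arise): fix a matching $\M$ of $m=\lfloor k/3\rfloor$ crossing edges, and for a crossing edge $\{x,y\}$ with $x\in X$, $y\in Y$ consider $P_x=\mathcal R\restriction_{\{x\}}$ and $P_y=\mathcal R\restriction_{\{y\}}$. The rectangle property lets you glue any $h\restriction_X$ with any $g\restriction_Y$, so every pair in $P_x\times P_y$ must be an edge of $\graphH$, i.e.\ $P_x$ and $P_y$ span a complete bipartite subgraph; since $\graphH$ has no $\graphK_{a,a}$ with $a=3\log n_{\graphH}$, one of the two sets has size at most $3\log n_{\graphH}$. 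Doing this for each matching edge and applying Observation~\ref{ob:projectub} gives $|\mathcal R|\le n_{\graphH}^{\,n_{\graphG}-m}\,(3\log n_{\graphH})^{m}$, a saving of $(n_{\graphH}/3\log n_{\graphH})^{m}=n_{\graphH}^{m(1-o(1))}$ over the trivial bound --- this, compared against $|\Hom(\graphG,\graphH)|=\Omega(n_{\graphH}^{n_{\graphG}})$ (which the paper gets from counting $n_{\graphG}$-cliques in $\graphH$ rather than from a second-moment argument), is what produces the exponent $\tw(\graphG)/9-(1+\delta)$. Once $\graphH$ is fixed with the biclique-freeness property, the rectangle bound is deterministic and uniform over all balanced partitions and all rectangles, so no concentration ``simultaneously for every rectangle'' is needed. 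With this replacement your outline becomes the paper's proof; as written, the key inequality is insufficient and its rigorous form is missing.
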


\noindent To prove this result we use that every graph with high treewidth contains a set with no small balanced separator.

\begin{definition}[Balanced Separator] \label{def:bal_sep}
 Let $\graphG = (\vertexG,\edgeG)$ be a graph and $W \subseteq \vertexG$ be a non-empty set of vertices. Then we say that $S \subseteq \vertexG$ is a balanced separator with respect to $W$, if $|W \cap C| \le |W|/2$ for every connected component $C$ of $\vertexG - S$. Further, a balanced separator $S$ with respect to $W$ is a balanced $k$-separator with respect to $W$ if $|S| \le k$.
\end{definition}

\begin{lemma}[\cite{reed1997tree, robertson1995graph}] \label{lem: con}
Let $k \in \mathbb{N}$ and let $\graphG$ be a graph with $\tw(\graphG) > 3k$. Then there is some $W \subseteq \vertexG$ with $|W| = 2k+1$ having no balanced $k$-separator. 
\end{lemma}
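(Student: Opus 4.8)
The plan is to establish the contrapositive. Assume that \emph{every} $W \subseteq \vertexG$ with $|W| = 2k+1$ has a balanced $k$-separator; I will build a tree decomposition of $\graphG$ of width at most $3k$, contradicting $\tw(\graphG) > 3k$ (which means $\tw(\graphG) \ge 3k+1$). This is the classical recursive construction of \cite{robertson1995graph, reed1997tree}, which I sketch.

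I would prove, by induction on $|V(G)|$, the following slightly stronger claim: for every graph $G$ and every $W \subseteq V(G)$ with $|W| \le 2k+1$, there is a tree decomposition of $G$ of width at most $3k$ having some bag that contains $W$ (the lemma is the case $G = \graphG$, $W = \emptyset$). If $|V(G)| \le 2k+1$, the single-bag decomposition with bag $V(G)$ works, since its width is $|V(G)| - 1 \le 2k \le 3k$. Otherwise, first enlarge $W$ to a set $W'$ with $W \subseteq W' \subseteq V(G)$ and $|W'| = 2k+1$, and let $S$ be a balanced $k$-separator for $W'$, so $|S| \le k$ and $|W' \cap C| \le k$ for every connected component $C$ of $G - S$. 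Create a root node with bag $W' \cup S$, of size at most $3k+1$. For each component $C$ of $G - S$, put $G_C := G[C \cup S]$ and $W_C := (W' \cap C) \cup S$; note $|W_C| \le 2k$. Apply the induction hypothesis to $(G_C, W_C)$ and attach the resulting decomposition to the root by an edge from the root to one of its bags containing $W_C$.

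Two things need checking. \emph{Termination}: each $G_C$ is strictly smaller than $G$, because $G - S$ has at least two components. Indeed $|W' \setminus S| \ge (2k+1) - k = k+1 > k \ge |W' \cap C|$ for each component $C$, so $W' \setminus S$ cannot lie inside a single component; hence $|C| < |V(G) \setminus S|$ and $|V(G_C)| = |C| + |S| < |V(G)|$. \emph{Validity}: the glued object is a tree decomposition of $G$ of width $\le 3k$. The width bound is immediate. For the connectivity axiom, a vertex of $S$ lies in the root bag and in the designated attachment bag of every subtree (each of which contains $W_C \supseteq S$), and within each $G_C$-decomposition its occurrences form a connected subtree through that bag; a vertex $v \in W' \setminus S$ lies in the unique component $C$ containing it, is in the root bag, and is in $W_C$, hence in the attachment bag of exactly that one subtree; every other vertex occurs only inside one $G_C$-decomposition. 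For edge coverage, every edge of $G$ either has both endpoints in $S$ (covered by the root bag) or, since there are no edges between distinct components of $G - S$, has both endpoints in $C \cup S$ for some component $C$ and is covered inside the decomposition of $G_C$. This completes the induction.

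The main obstacle is purely the bookkeeping: stating the inductive hypothesis correctly (it must promise a bag containing the \emph{whole} boundary $W$, since otherwise the attachment step breaks the connectivity axiom), checking that $G - S$ genuinely splits into at least two parts so that the recursion makes progress, and verifying the two tree-decomposition axioms after gluing the recursively obtained decompositions to the root.
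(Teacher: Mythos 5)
The paper does not prove this lemma; it is quoted as a known result from the cited references, so there is no in-paper proof to compare against. Your reconstruction is the standard contrapositive argument from those references and it is essentially correct: the base case, the width bound $|W'\cup S|\le 3k+1$, the termination argument via $|W'\setminus S|\ge k+1 > k \ge |W'\cap C|$, and the verification of the two tree-decomposition axioms after gluing are all right.

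One point you should tighten. Your inductive claim is stated ``for every graph $G$,'' but the separator hypothesis is only assumed for the original graph $\graphG$, whereas the recursion needs a balanced $k$-separator for the enlarged set $W'$ \emph{inside the subgraph} $G_C$. This does follow, but it requires a sentence: given $W'\subseteq V(G_C)$ with $|W'|=2k+1$, take a balanced $k$-separator $S'$ for $W'$ in $\graphG$ and use $S'\cap V(G_C)$; every connected component of $G_C - S'$ is contained in a connected component of $\graphG - S'$ (a connected set in a subgraph is connected in the ambient graph), so the bound $|W'\cap C''|\le k$ is inherited, and $|S'\cap V(G_C)|\le k$. Equivalently, run the induction over induced subgraphs of $\graphG$ rather than over all graphs. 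With that clarification the proof is complete.
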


\noindent This lemma is also used in \cite{DBLP:journals/toc/Marx10} 
to prove conditional lower bounds via fairly elaborate reductions. 
We are able to deploy Lemma~\ref{lem: con} more directly and follow a similar proof strategy to \cite{DBLP:conf/icalp/BerkholzV23}. A key difference is that we reason about rectangle covers rather than directly about the circuit. Our bad instances are the graphs given to us by the following lemma.

\begin{lemma}[{\cite[Lemma 8]{DBLP:conf/icalp/BerkholzV23}}] \label{lem: bad_graph}
For every $t \in \mathbb{N}$, there exists some $c_t \in \mathbb{R}^{+}$ such that for every sufficiently large integer $n$ there is a
  graph $\graphH = \graphH(t, n)$ with $n$ vertices, such that
  \begin{enumerate}
  \item $\graphH$ has at least $\frac18 n^2$ edges, \label{label:manyedges1}
  \item $\graphH$ contains no complete bipartite
      subgraph $\graphK_{a,a}$ for $a \geq3\log(n)$, and \label{label:nobipart1}
  \item the number of $t$-cliques in $\graphH$ is at least $c_t \cdot n^{t}$. \label{label:manycliques1}
    \end{enumerate}
  \end{lemma} 
  
\noindent So fix some graph $\graphG = (\vertexG,\edgeG)$ with $\tw(\graphG) = w$, $|\vertexG| = t$ and let $\graphH = \graphH(t,n)$ be as in Lemma~\ref{lem: bad_graph}. Our aim is to show that any $\{\cup,\times\}$-circuit computing $\Hom(\graphG, \graphH)$ has size $n^{\Omega(w)}$. The strategy is to show that any balanced rectangle $\mathcal{R} \subseteq \Hom(\graphG, \graphH)$ must be small. We do this by using the existence of a highly connected set $W$ given to us by Lemma~\ref{lem: con}. To be precise, we choose $W \subseteq \vertexG$ to be a set of size $2k+1$ having no balanced $k$-separator. 

In fact, we need to use a slight variant of the above strategy. For if $\R$ has underlying partition $(X,Y)$ with $W \subseteq X$, then it seems hard to use the properties of $W$ in a lower bound proof. 
To get around this problem we use Lemma~\ref{lem:weightedrect}, which recall is a weighted version of Lemma~\ref{lem: basic_rect}. Using this result we only have to consider rectangles which are balanced \emph{with respect to $W$}. Formally, we say a partition $(X,Y)$ of $\vertexG$ is $W$-balanced if $\min(|X \cap W|, |Y \cap W|) \ge |W|/3$. In turn, a rectangle is $W$-balanced if its underlying partition is $W$-balanced. By Lemma~\ref{lem:weightedrect}, to prove Theorem~\ref{thm:graph_lb} it suffices to show that any $W$-balanced rectangle $\mathcal{R} \subseteq \Hom(\graphG, \graphH)$ is small. The key is that there is a large matching between the two sides of any $W$-balanced partition. To formally state this we need the following notion: given disjoint subsets $X, Y \subseteq \vertexG$ an $(X,Y)$-matching (in $\graphG$) is a set of disjoint edges each of which has one endpoint in $X$ and the other in $Y$. 

\begin{lemma} \label{lem: matching}
Let $k \in \mathbb{N}$, $\graphG = (\vertexG, \edgeG)$ be a graph and $W$ be a set of $2k+1$ vertices of $\graphG$ with no balanced $k$-separator. Suppose $(X,Y)$ is a $W$-balanced partition of $\vertexG$. Then there is an $(X,Y)$-matching $\M$ in $\graphG$
of size $\lfloor k/3 \rfloor$.
\end{lemma}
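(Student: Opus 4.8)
The plan is to derive the matching from the assumption that $W$ has no balanced $k$-separator, via a Menger-type / König-type argument. Set $k' := \lfloor k/3\rfloor$. Since $(X,Y)$ is $W$-balanced we have $|X\cap W|\ge |W|/3 = (2k+1)/3 > k'$ and likewise $|Y\cap W|\ge k'$; call these sets $W_X := X\cap W$ and $W_Y := Y\cap W$. I want a matching of size $k'$ inside the bipartite graph $\graphG[X,Y]$ of edges crossing the partition. Suppose for contradiction that the maximum matching in $\graphG[X,Y]$ has size at most $k'-1$. By König's theorem (applied to the bipartite graph with sides $X$ and $Y$, keeping only crossing edges), there is a vertex cover $S\subseteq \vertexG$ of $\graphG[X,Y]$ with $|S| \le k'-1 \le k$.

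The key claim is then that such an $S$ is a balanced $k$-separator for $W$, contradicting Lemma~\ref{lem: con}. Indeed, once we remove $S$, no edge of $\graphG - S$ crosses the partition $(X\setminus S, Y\setminus S)$ — because $S$ covers every crossing edge — so every connected component $C$ of $\graphG - S$ is entirely contained in $X\setminus S$ or entirely in $Y\setminus S$. Hence for any component $C$ we have either $C\cap W \subseteq W_X$ or $C\cap W\subseteq W_Y$. It remains to bound $|C\cap W|$ by $|W|/2 = k + \tfrac12$, i.e. by $k$. Here is where I have to be slightly careful: a single component could contain all of $W_X$ (or all of $W_Y$), and $|W_X|$ could be as large as $|W| - k' $, which is bigger than $k$. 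So the naive cover argument does not immediately give a balanced separator.

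The fix is to enlarge $S$ so that it also breaks up the $X$-side and the $Y$-side enough. Concretely, after obtaining the cover $S$ of the crossing edges with $|S|\le k'-1$, I will argue as follows: it suffices to find, additionally, a set $S_X$ of at most $k'$ vertices whose removal leaves each component meeting $W_X$ in at most $|W|/2$ vertices, and symmetrically $S_Y$ for $W_Y$, and then take $S\cup S_X\cup S_Y$ of total size at most $3k' \le k$. But a cleaner route — and the one I expect to actually use — is to pick the partition-crossing structure more cleverly: apply König/Menger to get $k'$ vertex-disjoint $W_X$–$W_Y$ paths (a Menger statement), and the *internal* structure of these paths, together with the absence of a $k$-balanced separator, forces enough crossings. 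Actually the most economical argument: if there were no matching of size $k'$ between $X$ and $Y$, take a minimum vertex cover $S$ of the crossing edges, $|S|<k'\le k/3$. Then in $\graphG - S$ every component lies on one side, so $S$ is a separator; the component(s) on the $X$-side together contain $W_X\setminus S$ and those on the $Y$-side contain $W_Y\setminus S$. Now re-balance: since $|S| < k/3$, we still have $|W_X \setminus S|, |W_Y\setminus S| > 0$; the obstruction is a single huge component. To kill it, observe $|W| = 2k+1$ and $|W_X| + |W_Y| = |W|$, so $\min(|W_X|,|W_Y|) \le k$ — hence the *smaller* of the two sides already satisfies the $\le |W|/2$ bound, and only the larger side's component is a problem. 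Recurse / iterate the cover argument inside the larger side to chop its big component using $\le k/3$ more vertices, and combine. I will present this iteration carefully.

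The main obstacle, as flagged above, is precisely this last point: a vertex cover of the crossing edges is a separator of $G$ respecting the partition, but "respecting the partition" is weaker than "balanced", because one side of the partition may carry most of $W$ in one lump. Converting a small separator-respecting-the-partition into a genuine small balanced separator (so as to contradict Lemma~\ref{lem: con}) is the crux, and it costs a constant factor — which is exactly why the bound in the lemma is $\lfloor k/3\rfloor$ rather than $\sim k$. I will handle it by the two- or three-fold iteration of König's theorem sketched above, being careful that the total number of removed vertices stays $\le k$ and that at the end every component of $G$ minus that set meets $W$ in at most $|W|/2$ vertices, yielding the desired contradiction and hence the matching of size $\lfloor k/3\rfloor$.
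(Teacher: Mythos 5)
Your König reformulation of the first half is fine (no matching of size $\lfloor k/3\rfloor$ in the crossing edges gives a vertex cover $S$ of the crossing edges with $|S|<\lfloor k/3\rfloor$, and such an $S$ separates $X$ from $Y$; the paper does the same thing greedily, deleting both endpoints of chosen edges, which is why it budgets $\lfloor 2k/3\rfloor$ for the separator). But the crux you flag --- turning a small $X$--$Y$ separator into a balanced $k$-separator for $W$ when one side of the partition carries up to $(4k+2)/3$ vertices of $W$ --- is exactly the step you leave unresolved, and the fixes you sketch do not work as stated. There is no bipartite structure inside the heavy side, so ``iterate the cover argument / König inside the larger side'' is not a defined move; and a set $S_Y$ of $\le \lfloor k/3\rfloor$ vertices whose removal leaves every component meeting $Y\cap W$ in at most $|W|/2$ vertices need not exist as a \emph{separator}: $W$ is highly connected by hypothesis (e.g.\ $Y\cap W$ could induce a clique), so no small cut chops it. Your proposal therefore has a genuine gap at its central step.

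The missing idea, which is the one the paper uses, is not to separate the heavy side at all but to place $W$-vertices of the heavy side directly into the separator. Assuming w.l.o.g.\ $\frac{2k+1}{3}\le |X\cap W|\le k<|Y\cap W|\le\frac{4k+2}{3}$ (the light side is automatically fine, since the two intersections sum to $2k+1$, so the smaller one is at most $k\le |W|/2$), one adds to the $X$--$Y$ separator $S'$ an arbitrary set $T$ of $\lceil k/3\rceil$ vertices of $(Y\cap W)\setminus S'$. Every component of $\graphG-(S'\cup T)$ lies in $X$ or in $Y\setminus T$, and in the latter case it meets $W$ in at most $\frac{4k+2}{3}-\frac{k}{3}=k+\frac23$ vertices, hence at most $k$ by integrality; so $S'\cup T$ would be a balanced $k$-separator for $W$ provided $|S'|+|T|\le k$, contradicting the hypothesis. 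This deletion-of-$W$-vertices trick is what makes the budget close (and is why the constant $1/3$ appears); with your König bound $|S'|<\lfloor k/3\rfloor$ it closes comfortably, but without some version of it the argument does not go through.
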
 

\begin{proof}
Since $(X,Y)$ is $W$-balanced we may assume w.l.o.g.\!  that 
\[ 
\frac{2k+1}{3} \le |X \cap W| \le k < |Y \cap W| \le \frac{4k+2}{3}.
\]
Further suppose, for a contradiction, that there exists a set $S' \subseteq \vertexG$ of size $\lfloor 2k/3 \rfloor$ such that removing $S'$ from $\graphG$ disconnects $X$ from $Y$. Then pick arbitrarily a set $T$ of $\lceil k/3 \rceil$ elements of $(Y \cap W) \setminus S'$ and add these to $S'$ to form a set $S$. Then $|S| = k $. We claim that $S$ is a balanced separator with respect to $W$. 

To see this, let $C$ be a connected component of $\graphG \setminus S$. Then either $C \subseteq X$ or $C \subseteq Y \setminus T$, by the assumption on $S'$. If $C \subseteq X$ then $|C \cap W| \le |X \cap W| \le k$. On the other hand if $C \subseteq Y \setminus T$ then
\begin{align*}
|C \cap W| &\le  |(Y \setminus T) \cap W| 
=|Y \cap W| - |T| 
\le \frac{4k +2}{3} - \frac{k}{3} 
= k + \frac{2}{3}
\end{align*} 
\noindent Since $|C \cap W|$ is an integer it follows that $|C \cap W| \le k$. Therefore, $S$ is indeed a balanced $k$-separator, a contradiction. So there is no such set $S'$.

To construct $\M$ we first choose any edge between $X$ and $Y$ and delete the two endpoints. We then choose an edge between $X$ and $Y$ in this new graph and repeat until $X$ and $Y$ become disconnected. By the above this leads to an $(X,Y)$-matching of size at least $\lfloor k/3 \rfloor$.
\end{proof}

\noindent By exploiting this matching it is now not too hard to show that every $W$-balanced rectangle is small; the key is that one endpoint of every edge in the matching must have small image.  

\begin{lemma} \label{lem: big_rect}
Let 
$\graphG = (\vertexG, \edgeG)$ be a $t$-vertex graph and $W \subseteq \vertexG$ be a set of vertices of size $2k+1$ with no balanced $k$-separator. Further, let $\graphH = \graphH(t,n)$ be 
the graph given by Lemma~\ref{lem: bad_graph}, for some sufficiently big 
$n$ and let $\mathcal{R} \subseteq \Hom(\graphG, \graphH)$ be a $W$-balanced rectangle. Then 
$ |\mathcal{R}| \le n^{t - \lfloor \frac{k}{3} \rfloor } \cdot [3\log(n)]^{\lfloor \frac{k}{3} \rfloor}.$
\end{lemma}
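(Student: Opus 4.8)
The plan is to combine the matching from Lemma~\ref{lem: matching} with the biclique-freeness of $\graphH$ (property~\ref{label:nobipart} of Lemma~\ref{lem: bad_graph}) and the rectangle structure. Let $(X,Y)$ be the $W$-balanced partition underlying $\mathcal{R}$, so $\mathcal{R} = \mathcal{R}\restriction_X \times \mathcal{R}\restriction_Y$. By Lemma~\ref{lem: matching} there is a matching $\M = \{e_1, \dots, e_m\} \subseteq \edgeG[X,Y]$ with $m = \lfloor k/3 \rfloor$; write $e_i = \{u_i, v_i\}$ with $u_i \in X$, $v_i \in Y$. The first step is the pointwise bound from Observation~\ref{ob:projectub}: $|\mathcal{R}| \le \prod_{x \in \vertexG} |\mathcal{R}\restriction_{\{x\}}|$, and trivially $|\mathcal{R}\restriction_{\{x\}}| \le n_{\graphH}$ for every $x$ since homomorphisms map into $V_{\graphH}$. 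So the only thing to do is to show that for each matching edge $e_i$, at least one of its two endpoints $u_i, v_i$ satisfies $|\mathcal{R}\restriction_{\{u_i\}}| \le 3\log(n_{\graphH})$ or $|\mathcal{R}\restriction_{\{v_i\}}| \le 3\log(n_{\graphH})$. Granting that, we bound the $2m$ factors coming from matched vertices by picking the small endpoint of each edge — contributing at most $[3\log(n_{\graphH})]^{m}$ — and bound the remaining $n_{\graphH}$ factors trivially (including the large endpoints), but we must be slightly careful: we should assign to each matched edge one ``small'' factor of $3\log(n_{\graphH})$ and one ``large'' factor of $n_{\graphH}$, and the remaining $n_{\graphG} - 2m$ unmatched vertices each contribute $n_{\graphH}$, giving $n_{\graphH}^{n_{\graphG} - 2m} \cdot n_{\graphH}^{m} \cdot [3\log(n_{\graphH})]^{m} = n_{\graphH}^{n_{\graphG} - m}\cdot[3\log(n_{\graphH})]^m$, which is exactly the claimed bound with $m = \lfloor k/3\rfloor$.

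The heart of the argument is the claim that for each $i$, $\min\bigl(|\mathcal{R}\restriction_{\{u_i\}}|, |\mathcal{R}\restriction_{\{v_i\}}|\bigr) < 3\log(n_{\graphH})$. Suppose not: then the set $A_i := \mathcal{R}\restriction_{\{u_i\}} = \{\, h(u_i) : h \in \mathcal{R}\,\} \subseteq V_{\graphH}$ and $B_i := \mathcal{R}\restriction_{\{v_i\}} \subseteq V_{\graphH}$ both have size at least $a := \lceil 3\log(n_{\graphH})\rceil$. Now I would use the rectangle property together with the fact that $u_i$ and $v_i$ are adjacent in $\graphG$: because $\mathcal{R}$ is an $(X,Y)$-rectangle and $u_i \in X$, $v_i \in Y$, for any $p \in A_i$ and any $q \in B_i$ there are $h_1, h_2 \in \mathcal{R}$ with $h_1(u_i) = p$, $h_2(v_i) = q$, and then $h_1\restriction_X \times h_2\restriction_Y \in \mathcal{R} \subseteq \Hom(\graphG,\graphH)$ is a homomorphism mapping $u_i \mapsto p$ and $v_i \mapsto q$; since $\{u_i,v_i\} \in \edgeG$, this forces $\{p,q\} \in E_{\graphH}$. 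Hence every vertex of $A_i$ is adjacent in $\graphH$ to every vertex of $B_i$. Taking $a$-element subsets of each (and handling the case $p = q$ by noting $\graphH$ is simple, so actually we take disjoint $a$-subsets, which is fine since $|A_i|,|B_i| \ge 2a$ would be needed — better to just observe that a complete bipartite graph with both sides of size $a$ arises even allowing overlap, by choosing $A_i' \subseteq A_i$, $B_i' \subseteq B_i$ disjoint of size $a$ provided $|A_i \cup B_i| \ge 2a$; if not, $A_i$ and $B_i$ heavily overlap and then the common vertices would have self-loops, contradicting simplicity) we obtain a $\graphK_{a,a}$ subgraph of $\graphH$ with $a \ge 3\log(n_{\graphH})$, contradicting property~\ref{label:nobipart}.

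Let me reconsider the overlap subtlety, since that is the one genuinely fiddly point. If $p \in A_i \cap B_i$, the conclusion $\{p,p\} \in E_{\graphH}$ is impossible in a simple graph, so in fact $A_i \cap B_i = \emptyset$ automatically. Therefore $A_i$ and $B_i$ are disjoint, each of size $\ge 3\log(n_{\graphH})$, and $A_i \times B_i \subseteq E_{\graphH}$ literally gives a complete bipartite subgraph $\graphK_{a,a}$ of $\graphH$ with $a = \lceil 3\log(n_{\graphH})\rceil \ge 3\log(n_{\graphH})$ after restricting to $a$-element subsets of each side. This contradicts Lemma~\ref{lem: bad_graph}(\ref{label:nobipart}). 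So the claim holds, and the counting in the first paragraph completes the proof. The main obstacle is really just getting the bookkeeping of the product over vertices right — making sure each matched edge is charged exactly one $3\log(n_{\graphH})$ factor and not overcounting — together with the simple-graph observation that makes the disjointness of $A_i, B_i$ free; neither is deep, so I expect the proof to be short.
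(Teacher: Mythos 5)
Your proposal is correct and follows essentially the same route as the paper's proof: extract the matching from Lemma~\ref{lem: matching}, use the rectangle property plus the $\graphK_{a,a}$-freeness of $\graphH$ to force one endpoint of each matching edge to have projection of size at most $3\log(n_{\graphH})$, and conclude via Observation~\ref{ob:projectub}. Your extra remark that simplicity of $\graphH$ forces the two projection sets to be disjoint is a detail the paper leaves implicit, and it is handled correctly.
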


\begin{proof}
Let $(X,Y)$ be the underlying partition of $\mathcal{R}$. By Lemma~\ref{lem: matching}, there is an $(X,Y)$-matching $\M$ of size $\lfloor k/3 \rfloor$. Let $\{x,y\} \in \M$ with $x\in X, y \in Y$. Further define,
\confORfull{\begin{equation*}
    \begin{aligned}
        P_x &:= \{u \in \vertexH \mid h(x) = u, \text{ for some } h \in \mathcal{R} \}, \quad
        P_y := \{u \in \vertexH \mid h(y) = u, \text{ for some } h \in \mathcal{R} \}.
    \end{aligned}
\end{equation*}
}
{\begin{align*}
P_x &:= \{u \in \vertexH \, \mid \, h(x) = u, \textit{ for some } h \in \mathcal{R} \} \\
P_y &:= \{v \in \vertexH \, \mid \, h(y) = v, \textit{ for some } h \in \mathcal{R} \}.
\end{align*}}
Then for every $u \in P_x$ and $v \in P_y$, there is an edge in $\graphH$ between $u$ and $v$. To see this, let $h, g \in \mathcal{R}$ with $h(x) = u$ and $g(y) = v$. Hence 
$ h\! \restriction_{X} \times \, g\!\restriction_{Y} \, \in \mathcal{R} \subseteq \Hom(\graphG, \graphH)$,
 since $\mathcal{R}$ is an $(X,Y)$-rectangle. Therefore, since $\{x,y\} \in \edgeG$, we have that $\{h(x), g(y)\} = \{u, v\} \in \edgeH$. It follows that either $|P_x| \le 3\log(n)$ or $|P_y| \le 3\log(n)$, since $\graphH$ contains no $\graphK_{3\log(n), 3\log(n)}$ subgraph by Lemma~\ref{lem: bad_graph}(\ref{label:nobipart1}). Moreover, by definition $|P_{x}| = |\mathcal{R}\!\restriction_{\{x\}}\!|$ and $|P_{y}| = |\mathcal{R}\!\restriction_{\{y\}}\!|$.

Since the above holds for every $\{x,y\} \in \M$, there are at least $\lfloor k/3 \rfloor$ vertices $v \in \vertexG$ such that  $|\mathcal{R}\!\restriction_{\{v\}}\!| \le 3\log(n)$. The result follows by Observation~\ref{ob:projectub}.
\end{proof}

\confORfull{\noindent To infer Theorem~\ref{thm:graph_lb}, first observe that by Lemma~\ref{lem: bad_graph}(\ref{label:manycliques1}), $|\!\Hom(\graphG, \graphH)| = \Omega(n^{t})$. Here we use the fact that any mapping $h \colon \vertexG \to \vertexH$ whose image is a clique is a homomorphism. Therefore, by Lemmas~\ref{lem:weightedrect} and \ref{lem: big_rect} the result follows; see Appendix~\ref{a:tw} for full details.
}
{\noindent We can now prove our lower bound.}

\confORfull{}{
        \begin{proof}[Proof of Theorem~\ref{thm:graph_lb}]

	Fix $\delta >0$, suppose that $\graphG$ has $t$ vertices and let $\graphH = \graphH(t, n)$ be the graph given by Lemma~\ref{lem: bad_graph}. Firstly, if $w:= \tw(\graphG) \le 18$ then the bound is trivial; this follows since any $\{\cup, \times\}$-circuit computing $\Hom(\graphG, \graphH)$ must have $\Omega(n)$ input gates by Observation~\ref{ob:projectub} and Lemma~\ref{lem: bad_graph}(\ref{label:manycliques1}). It remains to prove the bound for $w \ge 19$. 
	
	In this case let $3k$ be the maximum multiple of three such that $w > 3k$; then $k \ge 6$ and $3k+3 \ge w$. By Lemma~\ref{lem: con}, we know that there is some $W \subseteq \vertexG$ with $|W| = 2k+1$ having no balanced $k$-separator. Therefore, by Lemma~\ref{lem: big_rect}, every $W$-balanced rectangle $\mathcal{R}$ with $\mathcal{R} \subseteq \Hom(\graphG, \graphH)$ has size at most 
	\[
	n^{t - \lfloor \frac{k}{3} \rfloor } \cdot [3\log(n)]^{\lfloor \frac{k}{3} \rfloor}. 
	\]
	On the other hand, the number of $t$-cliques in $\graphH$ is at least $c_{t} n^{t}$ by Lemma~\ref{lem: bad_graph}(\ref{label:manycliques1}). Therefore, there are at least $t!c_{t} n^{t}$ homomorphisms from $\graphG$ to $\graphH$, since any mapping $h \colon \vertexG \to \vertexH$ whose image is a clique is a homomorphism. Therefore, applying Lemma~\ref{lem:weightedrect}, we obtain that any $\{\cup,\times\}$-circuit computing $\Hom(\graphG, \graphH)$ must have size at least
	\[
	\frac{t!c_{t} n^{t}}{n^{t- \lfloor \frac{k}{3} \rfloor} \cdot [3\log(n)]^{ \lfloor \frac{k}{3} \rfloor}} = \frac{t!c_{t} n^{ \lfloor \frac{k}{3} \rfloor}}{[3\log(n)]^{ \lfloor \frac{k}{3} \rfloor}}.
	\]
	The result follows by the maximality of $k$. In detail, since $3k+3 \ge w$ we obtain that 
	\[ 
	\left\lfloor \frac{k}{3} \right\rfloor \ge \frac{w}{9} - 1 > \frac{w}{9} - (1 + \delta)
	\] 
	The asymptotic bound follows by taking $n$ to be sufficiently big.
\end{proof} %
}

\confORfull{}{\noindent}We next lift this to relational
structures. We could do this similarly to the above. However, it is
convenient to instead use the following reduction result from
\cite{DBLP:conf/icalp/BerkholzV23}. \confORfull{}{Theorem~\ref{thm: tw_struc} then
follows immediately.}

\begin{lemma}[{\cite[Lemmas~14-16]{DBLP:conf/icalp/BerkholzV23}}] \label{lem:reduct}
        Let $a, r \in \mathbb{N}$ and let $\strucA$ be a $\sigma$-structure with arity at most $r$. Let $\graphG$ be the Gaifman graph of $\strucA$ and suppose that there exists a graph $\graphH$ such that any $\{\cup,\times\}$-circuit computing $\Hom(\graphG, \graphH)$ has size at least $\|\graphH\|^{a}$. Then there is a $\sigma$-structure $\strucB$ with $\|\strucB\| \ge \|\graphH\|$ and a function $g$ such that any $\{\cup,\times\}$-circuit computing $\Hom(\strucA, \strucB)$ has size at least $g(\|\strucA\|) \cdot \|\strucB\|^{2a/r}$.
\end{lemma}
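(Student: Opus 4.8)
The statement to prove is Lemma~\ref{lem:reduct}, which is a reduction result from the bounded-arity graph case to the general bounded-arity structure case. Let me think about how I would approach this.

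\textbf{The plan.} The goal is to transfer a circuit lower bound for $\Hom(\graphG,\graphH)$, where $\graphG$ is the Gaifman graph of $\strucA$, to a lower bound for $\Hom(\strucA,\strucB)$ for a suitably chosen $\sigma$-structure $\strucB$. The natural strategy is to build $\strucB$ from $\graphH$ in such a way that homomorphisms $\strucA\to\strucB$ are essentially in bijection (or at least in a controlled many-to-one or one-to-many correspondence) with homomorphisms $\graphG\to\graphH$, and then argue that a small $\{\cup,\times\}$-circuit for $\Hom(\strucA,\strucB)$ would yield a small $\{\cup,\times\}$-circuit for $\Hom(\graphG,\graphH)$, contradicting the hypothesis. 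Since $\graphG$ and $\strucA$ have the same universe $A$, one wants the homomorphism sets over the same variable set, so that circuit manipulations (relabelling inputs, projecting) are transparent.

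\textbf{Key steps.} First I would define $\strucB$ on universe $B := V_{\graphH}$ (or a slightly padded version) by putting, for each $R \in \sigma$ of arity $r'$, the relation $R^{\strucB} := \{(b_1,\dots,b_{r'}) \mid \{b_i,b_j\}\in E_{\graphH} \text{ or } b_i=b_j \text{ for all } i,j\}$ — that is, the set of tuples whose underlying set is a clique in $\graphH$ (allowing repeats). The point of this choice, exactly as in the clique-image argument used for Theorem~\ref{thm:graph_lb}, is that $h\colon A\to V_{\graphH}$ is a homomorphism $\strucA\to\strucB$ if and only if $h$ maps every Gaifman edge of $\strucA$ to an edge of $\graphH$ (or collapses it), i.e. if and only if $h$ is a homomorphism $\graphG\to\graphH$ up to the treatment of identified endpoints; modulo a clean-up (e.g. restricting to injective-on-edges maps, or observing $\graphH$ has no loops so we can add loops to a fresh copy), one gets $\Hom(\strucA,\strucB) = \Hom(\graphG,\graphH)$ as sets of functions $A \to B$. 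Then any $\{\cup,\times\}$-circuit $\C$ computing $\Hom(\strucA,\strucB)$ is literally a $\{\cup,\times\}$-circuit computing $\Hom(\graphG,\graphH)$, so $|\C| \ge \|\graphH\|^{a}$, and since $\|\strucB\|$ is $\|\graphH\|$ raised to roughly the arity $r$ (each clique-tuple of length $\le r$ over an edge contributes, so $\|\strucB\| = \Theta(\|\graphH\|^{r})$ up to constants and lower-order terms, giving $\|\graphH\| \ge \|\strucB\|^{1/r}$ roughly), we recover $|\C| \ge \|\strucB\|^{a/r}$ — and the paper's statement has $2a/r$, so presumably the construction in \cite{DBLP:conf/icalp/BerkholzV23} is a bit more economical (e.g. only listing edge-pairs, so $\|\strucB\| = \Theta(\|\graphH\|^{2})$ per relation, and $\|\graphH\|\ge \|\strucB\|^{1/2}$ roughly up to the number-of-relations factor, which gets absorbed into $g(\|\strucA\|)$), which is why the $g(\|\strucA\|)$ slack and the factor $2$ appear. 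I would also need to pad $\strucB$ if necessary to guarantee $\|\strucB\|\ge\|\graphH\|$, which is immediate since cliques of size two already number at least $|E_{\graphH}|$.

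\textbf{Main obstacle.} The delicate point is getting the exponent right: controlling $\|\strucB\|$ precisely as a polynomial in $\|\graphH\|$ while keeping $\Hom(\strucA,\strucB)$ in exact correspondence with $\Hom(\graphG,\graphH)$. Listing \emph{all} clique-tuples of arity up to $r$ blows the size up by a factor that depends on $r$ and on the structure of $\graphH$, and one has to be careful that this blow-up is only polynomial in $\|\graphH\|$ with a controlled exponent (hence the $2a/r$ rather than $a/r$), and that the dependence on $\|\strucA\|$ — the number of relation symbols, their arities, etc. — is quarantined into the function $g$. A secondary subtlety is the handling of tuples with repeated entries and of isolated behaviour: one must ensure that a map $A\to B$ is a homomorphism of structures exactly when its the underlying vertex map respects $\graphH$-adjacency, which requires that $\graphH$ (or the modified target) treats identified pairs correctly — this is a routine but genuine case analysis. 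Since the paper explicitly cites \cite[Lemmas~14--16]{DBLP:conf/icalp/BerkholzV23} for this, I would expect the actual proof to chain three lemmas: one building $\strucB$ and establishing the homomorphism correspondence, one pushing the circuit through that correspondence, and one doing the size bookkeeping to extract the $2a/r$ exponent and the $g(\|\strucA\|)$ factor.
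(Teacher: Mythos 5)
Your high-level plan (encode cliques of $\graphH$ as relation tuples, identify the two homomorphism sets, then do size bookkeeping) is the right spirit, but as written the correspondence step fails, and this is precisely the non-trivial content of the cited Lemmas~14--16. With universe $V_{\graphH}$ and $R^{\strucB}$ consisting of all tuples whose entries are pairwise adjacent \emph{or equal}, the set $\Hom(\strucA,\strucB)$ is in general strictly larger than $\Hom(\graphG,\graphH)$: a map that collapses a Gaifman edge $\{t_i,t_j\}$ to a single vertex of $\graphH$ is accepted by the ``equal'' option but is not a graph homomorphism, since $\graphH$ is loop-free. A circuit computing the larger set is not a circuit computing the smaller one, so the hypothesis cannot be invoked. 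Dropping the equality option repairs this only when no tuple of $\strucA$ repeats a coordinate; in general one relation symbol may contain tuples with different equality patterns (say $(x,y)$ and $(z,z)$ both in $R^{\strucA}$), and then no uniform choice of $R^{\strucB}$ over the uncoloured universe works --- admitting the diagonal readmits collapsing maps on $\{x,y\}$, forbidding it leaves no homomorphisms at all. The actual proof resolves this by first \emph{individualising} $\strucA$ (a unary colour per element, with the universe of $\strucB$ partitioned into per-variable domains, each a copy of $V_{\graphH}$) and splitting relations per tuple, so that the pattern of each tuple is pinned down and $\Hom(\strucA^{\textrm{id}},\strucB)$ is in a value-relabelling bijection with $\Hom(\graphG,\graphH)$; removing the colours is then a separate reduction (this is the same machinery the present paper redoes in Section~5.2.1 for the submodular case). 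Your proposed clean-ups do not substitute for this: adding loops changes the target graph, so the assumed lower bound for $\graphH$ no longer applies, and one cannot ``restrict'' the set that a given circuit computes.

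The second gap is the exponent. Your own accounting, $\|\strucB\|=\Theta(\|\graphH\|^{r})$ and hence a bound of roughly $\|\strucB\|^{a/r}$, does not give the stated $2a/r$, and your guess that the factor $2$ comes from ``listing edge-pairs'' is not where it comes from. The point is that the number of clique tuples can be bounded in terms of the number of \emph{edges}: a graph with $m$ edges has $O(m^{s/2})$ cliques of size $s$ (and in the concrete application the hard graphs additionally have $\Theta(n_{\graphH}^2)$ edges), so $\|\strucB\|\le g'(\|\strucA\|)\cdot\|\graphH\|^{r/2}$, which is what turns the hypothesis $\|\graphH\|^{a}$ into $g(\|\strucA\|)\cdot\|\strucB\|^{2a/r}$. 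Without this clique-counting (or density) step, and without a brief argument that $\|\strucB\|\ge\|\graphH\|$, the claimed bound is not established.
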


\begin{proof}[Proof of Theorem~\ref{thm: tw_struc}]
        Let $\graphG$ be the Gaifman graph of $\strucA$  and let $N'$ be an integer. Then since 
        the treewidth of $\graphG$ is the same as the treewidth of $\strucA$ we may apply 
        Theorem~\ref{thm:graph_lb}, to obtain a graph $\graphH$ with $n \ge N'$ vertices 
 and at least $n^2/8$ many edges such that any $\{\cup,\times\}$-circuit computing 
 $\Hom(\graphG, \graphH)$ has size $\Omega(n^{\frac{w}{9}-2}) = 
 \|\graphH\|^{\Omega(w)}$. Therefore, by Lemma~\ref{lem:reduct}, there is a 
 $\sigma$-structure $\strucB$ with $\|\strucB\| \ge \|\graphH\| \ge N'$ such that any 
 $\{\cup,\times\}$-circuit for $\Hom(\strucA, \strucB)$ has size $\|\strucB\|^{\Omega(w)}$, 
 since $r$ is a constant and by choosing $n$ (and hence $\|\strucB\|$) to be sufficiently 
 big.
\end{proof}

\noindent Note that some form of dependence on the arity is necessary since there are classes of structures $\classA$ of unbounded treewidth such that solving $\{\cup,\times\}$-$\Hom(\classA)$ is in polynomial time, e.g. classes of unbounded treewidth but bounded fractional hypertree width. Such classes must have unbounded arity; we next turn to this setting.

\section{Submodular Width Lower Bound} \label{s:sw}
In this section we prove Theorem~\ref{thm: sub_lb}. Our strategy is analogous to the one used in Section~\ref{s:tw}: we use the fact that high submodular width implies the existence of a `highly connected set'. However, the exact notion of highly connectedness is more complicated and we need to define our bad instances more carefully. As such we first make some preparations.

\subsection{Preliminaries to the Proof} \label{ss:prelim_subw}

\confORfull{F}{So f}ix a $\sigma$-structure $\strucA$ and let $\graphH =
\graphH(\strucA) = (\vertexH, \edgeH)$ be the hypergraph of $\strucA$. We use similar notation to
  \cite{DBLP:journals/jacm/Marx13}. 
A \emph{path} $P$ in $\graphH$ is a sequence of vertices forming a path in the Gaifman graph of $\graphH$. The \emph{endpoints} of the path are the first and last vertex in this sequence. For $A, B \subseteq \vertexH$ we say a path is an $(A,B)$-path if it starts in $A$ and ends in $B$. Let $\mathcal{P}$ denote the set of paths in $\graphH$. A \emph{flow} of $H$ is an assignment $F \colon \mathcal{P} \to \mathbb{R}^{\ge 0}$ such that for every $e \in \edgeH$,
\[
\sum_{\substack{P \in \mathcal{P} \\ P \cap e \neq \emptyset}} F(P) \le 1.
\]
If $F(P) > 0$, then we say that $P$ appears in $F$ or is a path of $F$. The value of $F$ is $\sum_{P \in \mathcal{P}} F(P)$. For $A, B \subseteq \vertexH$ a flow $F$ is an $(A,B)$-flow if only $(A,B)$-paths appear in $F$. If $F_1, \dots, F_k$ are flows then their sum is the mapping which assigns weight $\sum_{i=1}^k F_i(P)$ to each path $P$. If this mapping happens to also be a flow then we say that the $F_i$ are compatible. 

\confORfull{}{The following gives a characterisation of what it means to have high submodular width.

\begin{theorem}[{\cite[Theorem 5.1.]{DBLP:journals/jacm/Marx13}}] \label{thm: sub_char}
Let $\lambda >0$ be a constant and $\strucA$ be a relational structure with $\subw(\strucA) > w$ and hypergraph $\graphH = (\vertexH, \edgeH)$. Then there exists a fractional independent $\mu$ of $\graphH$ and a set $W \subseteq \vertexH$ such that:
\begin{enumerate}
\item $\mu(W) \ge \frac{2}{3} w -1$, and
\item for every disjoint $A, B \subseteq W$ there is an $(A,B)$-flow of value $\lambda \cdot \min(\mu(A), \mu(B))$.
\end{enumerate}  
\end{theorem}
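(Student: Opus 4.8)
The plan is to derive this from the definition of $\subw$ in two stages: first extract a set $W$ that is \emph{highly connected} in the sense that every balanced separator of $W$ is expensive with respect to an associated submodular cost function, and then convert that ``expensive separator'' property into the ``large flow'' property of item~(2) by linear programming duality. This is essentially the skeleton of Marx's argument, and I would present it as such.

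\emph{Stage 1 (a highly connected set).} The hypothesis $\subw(\strucA) > w$ unpacks, by the definition of submodular width, to the existence of a function $f \in \mathbf{S}(\graphH)$ — monotone, edge-dominated ($f(e) \le 1$ for $e \in \edgeH$), submodular, with $f(\emptyset) = 0$ — such that \emph{every} tree decomposition $(\treeT, \beta) \in \classT(\graphH)$ has some bag $\beta(t)$ with $f(\beta(t)) > w$. I would then replay the classical duality between treewidth and brambles/well-linked sets, but measuring the ``cost'' of a vertex set by $f$ instead of by cardinality: since no tree decomposition has all bags $f$-cheap, an inclusion-maximal family of pairwise-touching connected subsets that cannot all be hit by a single cheap bag yields a set $W \subseteq \vertexH$ such that for \emph{every} $S \subseteq \vertexH$ separating $W$ in a balanced way (every component of $\graphH - S$ contains at most, say, $\tfrac23$ of $W$ in the measure $\mu$ fixed below) one has $f(S) = \Omega(w)$. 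A cleaning step using edge-domination of $f$ ensures the ``elements'' of the bramble can be taken small (e.g.\ single edges), which is what makes $W$ \emph{large} rather than merely nonempty and is also the source of the additive slack in item~(1). Finally take $\mu$ to be a fractional independent set of $\graphH$ associated with $f$ via the standard correspondence between edge-dominated submodular functions and fractional independent sets; this gives $\mu(W) \ge \tfrac23 w - 1$, i.e.\ item~(1).

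\emph{Stage 2 (from separators to flows).} Fix disjoint $A, B \subseteq W$. The maximum value of an $(A,B)$-flow is the optimum of a linear program whose nontrivial constraints are the capacity constraints $\sum_{P\colon P \cap e \neq \emptyset} F(P) \le 1$; its dual minimises $\sum_{e \in \edgeH} y_e$ over weightings $y \colon \edgeH \to \bbR^{+}$ for which every $(A,B)$-path carries total $y$-weight at least $1$ (a ``fractional $(A,B)$-edge cut''). So it suffices to show every such $y$ has $\sum_e y_e = \Omega(\min(\mu(A), \mu(B)))$. Given $y$, a ball-growing/threshold-rounding argument around $A$ produces a vertex set $S$ separating $A$ from $B$ with $f(S) = O(\sum_e y_e)$, using submodularity and edge-domination of $f$ to bound the cost of the boundary by a constant times the fractional cut weight. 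Were $\sum_e y_e$ too small, then — after moving $O(1)$ worth of $\mu$-measure from $A$ (or $B$) into $S$ to rebalance it, exactly the separator-rebalancing trick used in the proof of Lemma~\ref{lem: matching} — $S$ would be a balanced separator of $W$ with $f(S) = o(w)$, contradicting Stage~1. Hence $\sum_e y_e = \Omega(\min(\mu(A), \mu(B)))$, and LP duality gives an $(A,B)$-flow of value $\lambda \cdot \min(\mu(A), \mu(B))$; the admissible $\lambda$ is an absolute constant coming from the rounding, and to meet a prescribed larger $\lambda$ one simply runs Stage~1 with a proportionally larger threshold, which only changes the constant implicit in item~(1).

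\emph{Main obstacle.} The real work is all in Stage~1: one must lift the bramble/well-linked-set duality from plain treewidth to an arbitrary edge-dominated monotone submodular cost $f$ \emph{and} guarantee that the resulting highly connected set $W$ is large in $\mu$-measure — $\mu(W) \ge \tfrac23 w - 1$ — rather than just nonempty. Getting the measure right forces the careful choice and subsequent ``cleaning'' of the bramble, and this is the only place where edge-domination of $f$ is genuinely used. Stage~2 is comparatively routine, though one must check the rounding loses only a constant factor and that rebalancing the separator costs only $O(1)$ in $\mu$-measure, so that the constants from the two stages compose cleanly.
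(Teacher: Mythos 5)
You should first be aware that the paper contains no proof of this statement: it is quoted directly from Marx \cite{DBLP:journals/jacm/Marx13} (Theorem~5.1 there), and immediately after stating it the authors explain that they do \emph{not} use it in this form, precisely because the flow in item~(2) depends on the pair $(A,B)$; their lower bound instead goes through the uniform concurrent flow characterisation of Theorem~\ref{thm: sub_con}, likewise imported from Marx. So what you have written is a reconstruction of Marx's argument rather than something comparable to an in-paper proof, and it has to be judged on those terms.

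As such a reconstruction, your two-stage outline (a set $W$ whose balanced separators are expensive with respect to the submodular witness $f$, then LP duality between $(A,B)$-flows and fractional $(A,B)$-separators) does track the broad shape of Marx's proof, but the steps you flag as routine are exactly where the content lies, and two of them do not go through as written. First, there is no off-the-shelf ``standard correspondence between edge-dominated submodular functions and fractional independent sets'' that hands you $\mu$ with $\mu(W)\ge\tfrac23 w-1$ and with connectivity measured against $\mu$; extracting such a $\mu$ from $f$ is the main technical effort in Marx's Section~5, not a black box, and without it Stage~1 only gives expensiveness in the $f$-measure. Second, and relatedly, your rebalancing step mixes the two measures: the separator obtained from a cheap fractional cut is cheap in $f$, but you propose to rebalance it by adding ``$O(1)$ worth of $\mu$-measure'' from $W$ (in the style of Lemma~\ref{lem: matching}); the $f$-cost of those added vertices is not controlled by their $\mu$-measure unless the $f$-versus-$\mu$ relation on subsets of $W$ has already been established, which is again the missing correspondence --- the argument is circular at this point. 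The conversion of a fractional hyperedge cut into a vertex separator of comparable $f$-cost also needs a genuine argument (this is Marx's ``submodular separation'' step), since region-growing arguments do not automatically respect an arbitrary submodular cost. Finally, the remark that a larger prescribed $\lambda$ ``only changes the constant implicit in item~(1)'' cannot be right as stated: item~(1) fixes the constant $2/3$, and already for the $n\times n$ grid any admissible $(\mu,W)$ admits a split into $A,B$ with $\min(\mu(A),\mu(B))=\Omega(n)$ but maximum $(A,B)$-flow $O(n)$, so the conclusion is only attainable for $\lambda$ below an absolute constant; in Marx's formulation the quantification of $\lambda$ and the size bound on $W$ are tied together, and a correct write-up must respect that trade-off.
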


\noindent The problem with the above result is that the flow depends on the exact subsets $A, B \subseteq W$ under consideration. We want to use these flows to define our bad instances, but this is difficult to do since we do not have one fixed flow but many flows. To get around this we use a different characterisation of submodular width.}

\confORfull{For the characterisation of submodular width we will deploy the following notion is crucial.}{For this the following notion is crucial.} Let $X_1, \dots, X_k$ be subsets of $\vertexH$. Then a uniform concurrent flow of value $\epsilon$ on $(X_1, \dots, X_k)$ is a collection of compatible flows $F_{i,j}$, with $1 \le i < j \le k$ such that $F_{i,j}$ is an $(X_i, X_j)$ flow of value exactly $\epsilon$. If $\bigcup X_i$ is connected in $\graphH$ then by choosing one path connecting each pair $(X_i, X_j)$ and assigning this path weight $1/\binom{k}{2}$ we obtain a uniform concurrent flow with value $\Omega(k^{-2})$. The next theorem says that if $\graphH$ has high submodular width then there is a `large' set which can be partitioned into cliques admitting a uniform concurrent flow with value beating the trivial $k^{-2}$ lower bound. This characterisation is obtained and used by Marx in \cite{DBLP:journals/jacm/Marx13} and follows from Theorem~5.1 and Lemmas~6.5 and 6.6 in that paper.

\begin{theorem}[\cite{DBLP:journals/jacm/Marx13}] \label{thm: sub_con}
Let $\strucA$ be a relational structure with $\subw(\strucA) = w$ and hypergraph $\graphH = (\vertexH, \edgeH)$. Then there exists a set $W \subseteq \vertexH$ which can be partitioned into $(K_1, \dots, K_k)$ where each $K_i \subseteq e$ for some $e \in \edgeH$ and $k  = \Omega(\sqrt{w})$, such that there is a uniform concurrent flow of value $\Omega(k^{-3/2})$ on $(K_1, \dots, K_k)$.  
\end{theorem}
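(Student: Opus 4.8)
The plan is to obtain Theorem~\ref{thm: sub_con} the way Marx does, by feeding the highly-connected-set statement into two further lemmas from \cite{DBLP:journals/jacm/Marx13}: a \emph{clustering} step that converts the fractional independent set into a clique partition, and an \emph{averaging} step that converts the pairwise flow guarantee into a single uniform concurrent flow. Concretely the three ingredients are Theorem~\ref{thm: sub_char} (Marx's Theorem~5.1), his Lemma~6.5, and his Lemma~6.6.

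First I would apply Theorem~\ref{thm: sub_char} to the hypergraph $\graphH = \graphH(\strucA)$, for a suitable absolute constant $\lambda$. Since $\subw(\strucA) = w$ this yields a fractional independent set $\mu$ of $\graphH$ and a set $W_0 \subseteq \vertexH$ with $\mu(W_0) \ge \tfrac{2}{3}w - 1$ such that every disjoint pair $A, B \subseteq W_0$ admits an $(A,B)$-flow of value $\lambda \cdot \min(\mu(A), \mu(B))$.

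Second, I would pass from $W_0$ to a subset $W$ equipped with a partition $W = K_1 \cup \dots \cup K_k$ into cliques, each $K_i$ contained in a single hyperedge. The natural construction is greedy: while the current remainder of $W_0$ still carries positive $\mu$-weight, pick a vertex $v$ in it, pick an edge $e \ni v$, and peel off the clique $e \cap (\text{remainder})$; since $\mu$ is a fractional independent set this automatically gives $\mu(K_i) \le \mu(e) \le 1$. One then has to clean up: discard parts of negligible $\mu$-weight, regroup so that the surviving $\mu(K_i)$ all lie within a constant factor of one another, and count carefully how much $\mu$-mass (hence how many parts) can be retained while still keeping each part heavy enough that the pairwise flows below have value $\Omega(1)$. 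Balancing ``each part must carry enough $\mu$-mass'' against ``there must be few enough parts that the final concurrent-flow value does not drop below $k^{-3/2}$'' is what pins the number of parts at $k = \Theta(\sqrt{w})$; this is precisely the bookkeeping of \cite[Lemma~6.5]{DBLP:journals/jacm/Marx13}, and I expect reproducing this trade-off exactly --- in particular the appearance of $\sqrt{w}$ rather than a larger number of parts --- to be the most delicate point.

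Third, with the clique partition $(K_1, \dots, K_k)$ fixed, I would apply Theorem~\ref{thm: sub_char} once to each of the $\binom{k}{2}$ disjoint pairs $(K_i, K_j) \subseteq W$ to obtain a single-commodity $(K_i, K_j)$-flow of value $\lambda \min(\mu(K_i), \mu(K_j)) = \Omega(1)$. These flows are in general mutually incompatible, so the remaining task is to rescale them by one common factor and superimpose them into a uniform concurrent flow on $(K_1,\dots,K_k)$; that the common value can be kept as large as $\Omega(k^{-3/2})$, thereby beating the trivial $\Omega(k^{-2})$ noted before the theorem, is the content of \cite[Lemma~6.6]{DBLP:journals/jacm/Marx13} --- it uses that each single-commodity flow already obeys the edge capacities, that there are only $\binom{k}{2}$ commodities, and the cut bound implicit in Theorem~\ref{thm: sub_char} (any cut separating $\bigcup_{i\in S}K_i$ from $\bigcup_{i\notin S}K_i$ has capacity $\Omega(\min(|S|, k-|S|))$). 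Assembling the three steps gives the statement. The main obstacle is exactly this reconciliation: manufacturing one uniform concurrent flow out of many incompatible, individually-large single-commodity flows while keeping its value safely above the trivial $k^{-2}$ bound, which is where the submodular-width hypothesis really bites and where the exponent $3/2$ is fixed.
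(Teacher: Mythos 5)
Your proposal matches the paper exactly: the paper does not reprove this statement but simply cites it as following from Marx's Theorem~5.1 together with Lemmas~6.5 and 6.6 of \cite{DBLP:journals/jacm/Marx13}, which are precisely the three ingredients (highly-connected-set characterisation, clique-partition/clustering step, and uniform-concurrent-flow step) you assemble. Your outline of how they combine, including deferring the $k=\Theta(\sqrt{w})$ bookkeeping and the $\Omega(k^{-3/2})$ flow value to Marx's lemmas, is consistent with that attribution.
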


\noindent Marx uses these flows to show that hypergraphs of high submodular width have high `embedding power'. 
We argue differently and more directly. 

\subsection{Lower Bound}

\subsubsection{Simplifying the Structures} \label{ss:simple}

\noindent In this section we show that in order to prove Theorem~\ref{thm: sub_lb} it suffices to prove lower bounds for left-hand-side structures $\strucA$ of a simple form. \confORfull{This is straightforward so we just describe the simplification here and defer the proof of correctness until Appendix~\ref{ass:simp}.}{To do this we use a number of reduction results, similar to Lemma~\ref{lem:reduct}.}

        \confORfull{
         We first need some definitions. Let $\tau$ be a signature and $\strucX, \strucY$ be $\tau$-structures. We say that $\strucY$ is \emph{coordinate respecting (relative to $\strucX$)} if for every $h, h' \in \Hom(\strucX, \strucY)$ and every distinct $x, x' \in X$ it holds that $h(x) \neq h'(x')$. In this case we define 
         \[ 
         \dom_{\strucY}(x) := \{y\in Y \, \mid \, h(x) = y \textit{ for some } h \in \Hom(\strucX, \strucY)\}.
         \]
         Note that $\dom_{\strucY}(x) \cap \dom_{\strucY}(x')$ is empty for all $x \neq x'$. 
         Since $\strucY$ is normally clear from the context we often write $\dom(x)$ for $\dom_{\strucY}(x)$.
         
         Next let $\strucX$ be a $\tau$-structure such that every relation contains exactly one tuple. Let $\leq$ be a total order on $X$ and 
$\bar{t}$ be some $r$-tuple with coordinates in $X$. Moreover, suppose
$\bar{t}$ has $i$ distinct coordinates. Then we define
$\bar{t}_{\leq}$ to be the unique $i$-tuple with $\tilde{t} =
\tilde{t}_{\leq}$ whose coordinates are ordered by $\leq$. Then if
$\bar{t}$ is the unique tuple in $R^{\strucX}$ and if $\bar{t}_{\leq}$
is an $i$-tuple we define $R_{\leq}$ to be a fresh $i$-ary relation
symbol and define $\tau_{\leq}:= \{R_{\leq} \, \mid \, R \in
\sigma\}$. Now we define the $\tau_{\leq}$-structure $\strucX^{\leq}$
to have universe $X$ and such that $R_{\leq}^{\strucX^{\leq}} =
\{\bar{t}_{\leq}\}$ where $\bar{t}$ is the unique tuple in
$R^{\strucX}$. Moreover, we say that $\strucX$ is \emph{order respecting} if $\strucX = \strucX^{\leq}$ for some total order $\leq$. 
         
        Let $\strucY$ be a  $\tau_{\leq}$-structure which is coordinate respecting 
         relative to $\strucX^{\leq}$. Then $\leq$ induces a partial order $\leq_Y$ on $Y$ as 
         follows. For $y_1, y_2 \in Y$, $y_1 \leq_Y y_2$ iff  there are distinct $x_1, x_2 \in 
         X$ with $y_1 \in \dom(x_1)$, $y_2 \in \dom(x_2)$ and $x_1 \leq x_2$. We say that $
         \strucY$ is order respecting relative to $\strucX^{\leq}$ if for every $R \in \tau_{\leq}$ and every $\bar{t} 
         \in R_{\leq}^{\strucY}$, we have that $t_i \leq_{Y} t_j$ for all $i<j$.     
         
To prove Theorem~\ref{thm: sub_lb} it suffices to prove the following result, see Appendix~\ref{ass:simp} for details. 
         
         \begin{theorem} \label{thm:sub_lb_reduced}
                Let $\strucA$ be an order respecting structure where every relation contains exactly one tuple. Then there exist arbitrarily large structures $\strucB$ which are coordinate and order respecting relative to $\strucA$ and such that any $\{\cup,\times\}$-circuit computing $\Hom(\strucA, \strucB)$ has size $\|\strucB\|^{\Omega(\subw(\strucA)^{1/4})}$. 
         \end{theorem}
         
         \noindent In our proof of Theorem~\ref{thm:sub_lb_reduced}, we often want to go between sets and tuples. Of course, for every tuple $\bar{a} = (a_1, \dots, a_r)$ there is a naturally corresponding set $\tilde{a} := \{a_1, \dots, a_r\}$. But in the other direction given a set there are many possible corresponding tuples.
         This is why it is nice to work with $
         (\strucA, \strucB)$ respecting an order $\leq$ on $A$. Now for any
         $S \subseteq A$, we define $\overrightarrow{S}$ to be the 
         unique tuple respecting the order $\leq$ whose coordinates are the elements of $S$. We can similarly deploy this notation 
         for any subset of $B$ which is `colourful', in the sense that 
         no two of its elements appear in the same $\dom(a)$ set. 
}{}

\confORfull{}{We first carry out a simple trick, which we also deployed in \cite{DBLP:conf/icalp/BerkholzV23}. For this we introduce the \emph{individualisation} of a $\sigma$-structure $\strucA$, which is obtained by giving every element in the universe a unique colour. To be precise we extend $\sigma$ with a unary relation $P_a$ for every $a \in A$. Let $\sigma_{\!A} = \{P_a \, \mid \, a \in A\}$ and let $\strucA^{\textrm{id}}$ be the $\sigma \cup \sigma_{\!A}$ expansion of $\strucA$ defined by $P_a^{\strucA^{\textrm{id}}} = \{a\}$. We call a structure \emph{individualised} if every element in the universe has a unique colour. The next lemma shows that it suffices to prove lower bounds for individualised structures, provided the elements of the universe of the `bad instances' are partitioned by $\sigma_{\!A}$. 

\begin{lemma}[{\cite[Lemma 14]{DBLP:conf/icalp/BerkholzV23}}] \label{lem:any_to_individual}
        Let $\strucA$ be a $\sigma$-structure and let $\strucB'$ be a $\sigma \cup \sigma_{\!A}$ structure such that $\{P_a^{\strucB} \, \mid \, a \in A\}$ is a partition of the universe. Let $\strucB$ be the $\sigma$-reduct of $\strucB$. Then if there is a $\{\cup,\times\}$-circuit computing $\Hom(\strucA, \strucB)$ of size $s$, then there is a $\{\cup,\times\}$-circuit computing $\Hom(\strucA^{\textrm{id}}, \strucB')$ of size at most $s$.
        \end{lemma}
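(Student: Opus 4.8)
The plan is to obtain the circuit for $\Hom(\strucA^{\textrm{id}}, \strucB)$ from a given circuit $\C$ for $\Hom(\strucA, \strucB')$ by a \emph{conditioning} argument: delete the input gates that are incompatible with the colour partition and propagate the deletions upwards. The starting point is the identity
\[
\Hom(\strucA^{\textrm{id}}, \strucB) \;=\; \{\, h \in \Hom(\strucA, \strucB') \;\mid\; h(a) \in P_a^{\strucB} \text{ for every } a \in A \,\},
\]
which is immediate from $P_a^{\strucA^{\textrm{id}}} = \{a\}$ and the fact that $\strucB'$ is the $\sigma$-reduct of $\strucB$. The empty case is degenerate (no $\{\cup,\times\}$-circuit computes $\emptyset$), so we may assume this set is non-empty.

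Given $\C$ over $(A,B)$, I would process its gates in topological order and declare a gate \emph{dead} by the rules: an input gate $a \mapsto b$ is dead iff $b \notin P_a^{\strucB}$; a $\times$-gate is dead iff it has a dead child; a $\cup$-gate is dead iff both children are dead. Each $\cup$-gate with exactly one dead child is short-circuited into its live child, and finally all dead gates, together with any gates left without a directed path to the output, are removed. Call the result $\C'$; since this only deletes gates and reroutes along existing wires, $|\C'| \le |\C| = s$. The correctness invariant, proved by induction on the gates of $\C$ (treating a dead gate as one computing $\emptyset$), is
\[
S_{\C'}(g) \;=\; S_{\C}(g) \cap \{\, h \mid \dom(h) = \var_{\C}(g) \text{ and } h(a) \in P_a^{\strucB} \text{ for all } a \in \var_{\C}(g) \,\}.
\]
The input case is the dead-gate rule; the $\times$-case uses that $\var_{\C}(g)$ is the disjoint union $\var_{\C}(g_1) \cup \var_{\C}(g_2)$, so the colour constraint splits into the two sub-constraints; the $\cup$-case uses smoothness of $\C$, i.e. $\var_{\C}(g_1)=\var_{\C}(g_2)$, together with distributivity of $\cap$ over $\cup$ and the short-circuiting convention when one child died. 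Evaluating the invariant at the output gate $s$, where $\var_{\C}(s) = A$, gives $S(\C') = \Hom(\strucA, \strucB') \cap \{h \mid h(a) \in P_a^{\strucB}\ \forall a\} = \Hom(\strucA^{\textrm{id}}, \strucB)$, and $s$ survives precisely because this set is non-empty.

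The remaining point — the one requiring care rather than routine bookkeeping — is to check that $\C'$ is a legitimate $\{\cup,\times\}$-circuit, i.e. that decomposability, smoothness, and $\var(\C') = A$ still hold after the deletions. The crux is the claim that every surviving gate $g$ satisfies $\var_{\C'}(g) = \var_{\C}(g)$. To see this, take any $h \in S_{\C'}(g)$; unfolding the semantics, $h$ is built from a set of input gates containing, for each $a \in \var_{\C}(g)$, exactly one input gate with label $a \mapsto h(a)$ (one per variable, by decomposability of $\C$), and the invariant forces $h(a) \in P_a^{\strucB}$ for all such $a$, so none of these input gates is dead; hence each $a \in \var_{\C}(g)$ still occurs below $g$ in $\C'$, giving $\var_{\C}(g) \subseteq \var_{\C'}(g) \subseteq \var_{\C}(g)$. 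With this in hand, a surviving $\times$-gate still has children with disjoint variable sets; a surviving $\cup$-gate has both children alive ($\cup$-gates with a dead child were short-circuited away) and hence children with equal variable sets; and $\var(\C') = \var_{\C'}(s) = A$. I expect this $\var_{\C'}(g)=\var_{\C}(g)$ observation to be the main obstacle; everything else parallels the reasoning already used for Lemma~\ref{lem: basic_rect}.
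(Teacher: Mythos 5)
Your proposal is correct: the identity $\Hom(\strucA^{\textrm{id}},\strucB)=\{h\in\Hom(\strucA,\strucB')\mid h(a)\in P_a^{\strucB}\ \forall a\in A\}$ holds, and your conditioning construction (kill incompatible input gates, propagate deadness, short-circuit $\cup$-gates with one dead child) preserves size, decomposability, smoothness and, via the $\var_{\C'}(g)=\var_{\C}(g)$ observation, yields a legitimate circuit computing exactly the conditioned set; the empty-set caveat you note is harmless in all uses of the lemma, where the homomorphism set is large.  The present paper does not reprove this statement but quotes it from \cite{DBLP:conf/icalp/BerkholzV23}, where the argument is essentially this same restriction-and-propagation construction, so your route matches the source.
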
 

\noindent Our next step reduces from individualised structures to structures where every  relation contains only a single tuple, provided that our bad instances have certain properties. To this end we define from $\strucA$---an individualised $\sigma$-structure---a related structure $\strucA^{\textrm{sp}}$ where every relation contains only a single tuple. To this end, define $\sigma_{\!A}:= \{P_a \, \mid \, a\in A\} \subseteq \sigma$ to be the unary relations in $\sigma$ which give each element a unique colour. Further, let
\[ 
\sigma^{\textrm{sp}} := \{R_{\bar{t}} \, \mid \, R \in \sigma \setminus \sigma_{\!A}, \, \bar{t} \in R^{\strucA} \}.
 \]
Then we define $\strucA^{\textrm{sp}}$ to be the $\sigma^{\textrm{sp}} $-structure with universe $A$ such that for every $R \in \sigma \setminus \sigma_{\!A}$ and every $\bar{t} \in R^{\strucA}$ we have that $R_{\bar{t}}^{\strucA^{\textrm{sp}}} = \{\bar{t}\}$. 

Before stating our next reduction lemma we need to introduce the two properties required of our bad instances. So let $\strucX, \strucY$ be $\tau$-structures, for some signature $\tau$. Then we say that $\strucY$ is \emph{coordinate respecting (relative to $\strucX$)} if for every $h, h' \in \Hom(\strucX, \strucY)$ and every distinct $x, x' \in X$ it holds that $h(x) \neq h'(x')$. In this case we define 
\[ 
\dom_{\strucY}(x) := \{y\in Y \, \mid \, h(x) = y \textit{ for some } h \in \Hom(\strucX, \strucY)\}.
\]
Note that $\dom_{\strucY}(x) \cap \dom_{\strucY}(x')$ is empty for all $x \neq x'$. 
When $\strucY$ is clear from the context we write $\dom(x)$ instead of $\dom_{\strucY}(x)$.

We need one further concept in the proof. We say that $\strucY$ is \emph{reduced (relative  to $\strucX$)} if its relations contain no tuples which are irrelevant to $\Hom(\strucX, \strucY)$. Formally, this means that for every $R \in \sigma$ and $\bar{y} \in R^{\strucY}$ there exists $\bar{x} \in R^{\strucX}$ and $h \in \Hom(\strucX, \strucY)$ with $h(\bar{x}) = \bar{y}$. If $(R, \bar{y})$ violates this property, then we may remove $\bar{y}$ from $R^{\strucY}$ without affecting $\Hom(\strucX, \strucY)$. This leads to the following observation.

\begin{observation} \label{ob:reduced}
For any $\sigma$-structure $\strucY$, there exists a reduced structure $\strucY'$ with $\|\strucY'\| \le \|\strucY\|$ such that $\Hom(\strucX, \strucY) = \Hom(\strucX, \strucY')$. 
\end{observation}

\noindent We now formally state the next step of our reduction.

\begin{lemma} \label{lem:individual_to_sparse}
Let $\strucA$ be an individualised $\sigma$-structure and let $\strucD$ be a $\sigma^{\textrm{sp}}$-structure which is coordinate respecting and reduced relative to $\strucA^{\textrm{sp}}$. Then there exists a $\sigma$-structure $\strucB$ with $\|\strucB\| = \|\strucD\| + \sum_{a \in A} |\dom_{\strucD}(a)|$ such that if $\Hom(\strucA, \strucB)$ admits a {\ctc} of size $s$ then so does $\Hom(\strucA^{\textrm{sp}}, \strucD)$.
\end{lemma}

\begin{proof}
We define $\strucB$ to have the same universe as $\strucD$. Then for each $a \in A$ and $R \in \sigma \setminus \sigma_{\! A}$ we define
\begin{align*}
R^{\strucB} &:= \bigcup_{\bar{t} \in R^{\strucA}} R_{\bar{t}}^{\strucD} \\
P_a^{\strucB} &:= \dom_{\strucD}(a) 
\end{align*}
Then $\|\strucB\| = \|\strucD\| + \sum_{a\in A} |\dom_{\strucD}(a)|$. Therefore, to prove the lemma it suffices to show that $\Hom(\strucA, \strucB) = \Hom(\strucA^{\textrm{sp}}, \strucD)$. 

First, let $h \in \Hom(\strucA, \strucB)$ and consider a 
relation $R_{\bar{t}} \in \sigma^{\textrm{sp}}$ with arity $r$. Then the 
unique tuple in this relation in $\strucA^{\textrm{sp}}$ is $\bar{t}$ which is also in $R^{\strucA}$. Since $h 
\in \Hom(\strucA, \strucB)$ it follows that $h(\bar{t}):= (h(t_1), \dots, h(t_{r})) 
\in R^{\strucB}$ and that $h(t_i) \in P_{t_i}^{\strucB}$ for all $i 
\in [r]$. By the definition of $R^{\strucB}$, we obtain that $h(\bar{t}) \in R^{\strucD}_{\bar{s}}$ for some $\bar{s} \in R^{\strucA}$. Then since $\strucD$ is reduced relative to $\strucA^{\textrm{sp}}$, there is some $\bar{a} \in R_{\bar{s}}^{\strucA^{\textrm{sp}}}$ and a homomorphism $h' \in \Hom(\strucA^{\textrm{sp}}, \strucD)$ with $h'(\bar{a}) = h(\bar{t})$. But since $R_{\bar{s}}^{\strucA^{\textrm{sp}}} = \{\bar{s}\}$, it follows that $\bar{a} = \bar{s}$. This implies that $h(t_i) \in \dom_{\strucD}(s_i)$ for all $i \in [r]$. But also $h(t_i) \in P_{t_i}^{\strucB} = \dom_{\strucD}(t_i)$. Since $\strucD$ is coordinate respecting it follows that $t_i = s_i$ for all $i \in [r]$, so $h(\bar{t}) \in R^{\strucD}_{\bar{t}}$ and $h \in \Hom(\strucA^{\textrm{sp}}, \strucD)$. 

Conversely, let  $h \in \Hom(\strucA^{\textrm{sp}}, \strucD)$ and 
let $\bar{t} \in R^{\strucA} $ for some $R \in \sigma \setminus \sigma_{\!A}$. 
Then by construction $\bar{t} \in R^{\strucA^{\textrm{sp}}}_{\bar{t}}$. Therefore, since $h \in \Hom(\strucA^{\textrm{sp}}, \strucD)
$, we obtain $h(\bar{t}) \in R^{\strucB}_{\bar{t}} \subseteq R^{\strucB}$. Therefore, $h$ respects every relation in $\sigma \setminus \sigma_{\!A}$. Moreover, for all $i \in [r]$ we have $h(t_i) \in \dom_{\strucD}(t_i) = P_{t_i}^{\strucB}$ by the definition of $\dom_{\strucD}(t_i)$. The result follows, since $\strucA$ is connected by assumption. In detail, for every $a \in A$ there exists---by connectedness---some $\bar{t} \in R^{\strucA} \in \sigma \setminus \sigma_A$ such that $t_i = a$ for some $i \in [r]$. Therefore, $h(a) \in P_{a}$ and so $h$ also respects every relation in $\sigma_{\!A}$.
\end{proof}

\noindent We end this subsection by making a further simplification designed to circumvent a technical annoyance. Namely, that in our proofs we often want to go between sets and tuples. Of course, for every tuple $\bar{a} = (a_1, \dots, a_r)$ there is a naturally corresponding set $\tilde{a} := \{a_1, \dots, a_r\}$. But, in the other direction, given a set there are many possible corresponding tuples, and it can be messy to specify which we mean in our proofs. The following simplifications are therefore not strictly necessary but save us some notational headaches later on. To get to this point we need to jump through some technical hoops but everything we do is easy; just a little annoying to formally state.

Let $\strucX$ be a $\tau$-structure such that every relation
contains exactly one tuple. Let $\leq$ be a total order on $X$, the
universe of $\strucX$. Let $\bar{t}$ be some $r$-tuple with
coordinates in $X$. Moreover, suppose $\bar{t}$ has $i$ distinct
coordinates. Then we define $\bar{t}_{\leq}$ to be the unique
$i$-tuple with $\tilde{t} = \tilde{t}_{\leq}$ whose coordinates are
ordered by $\leq$. Then if $\bar{t}$ is the unique tuple in
$R^{\strucX}$ and if $\bar{t}_{\leq}$ is an $i$-tuple we define
$R_{\leq}$ to be a fresh $i$-ary relation symbol and define
$\tau_{\leq}:= \{R_{\leq} \, \mid \, R \in \sigma\}$. Now we define
the $\tau_{\leq}$-structure $\strucX^{\leq}$ to have universe $X$ and
such that $R_{\leq}^{\strucX^{\leq}} = \{\bar{t}_{\leq}\}$ where
$\bar{t}$ is the unique tuple in $R^{\strucX}$. Moreover, we say that
$\strucX$ is \emph{order respecting} if $\strucX = \strucX^{\leq}$ for some total order $\leq$. 

 Let $\strucY$ be a  $\tau_{\leq}$-structure which is coordinate respecting 
         relative to $\strucX^{\leq}$. Then $\leq$ induces a partial order $\leq_Y$ on $Y$ as 
         follows. For $y_1, y_2 \in Y$, $y_1 \leq_Y y_2$ iff  there are distinct $x_1, x_2 \in 
         X$ with $y_1 \in \dom(x_1)$, $y_2 \in \dom(x_2)$ and $x_1 \leq x_2$. We say that $
         \strucY$ is order respecting relative to $\strucX^{\leq}$ if for every $R \in \tau_{\leq}$ and every $\bar{t} 
         \in R_{\leq}^{\strucY}$, we have that $t_i \leq_{Y} t_j$ for all $i<j$. We can now state our 
final simplification lemma; note that we now need to also assume that our bad instances 
respect an order.

\begin{lemma} \label{lem:order}
Let $\strucA$ be a $\sigma$-structure where every relation contains exactly one tuple, let $\leq$ be a total order on $A$ and let $\strucD$ be a $\sigma_{\leq}$-structure which is reduced, coordinate respecting, and order respecting relative to $\strucA^{\leq}$. Then there exists a $\sigma$-structure $\strucB$ with $\|\strucB\| = \|\strucD\|$ which is reduced and coordinate respecting relative to $\strucA$, and such that if $\Hom(\strucA,\strucB)$ admits a {\ctc} of size $s$ then so does $\Hom(\strucA^{\leq}, \strucD)$. 
\end{lemma}

\begin{proof}
We define $\strucB$ to have the same universe as $\strucD$ and define its relations by---in some sense---reversing the tuple-ordering process. Formally, for each $r$-ary relation $R \in \sigma$ let $\bar{t}$ be the unique $r$-tuple in $R^{\strucA}$. Suppose that $R^{\strucA^{\leq}}$ is an $i$-ary relation. By construction the unique tuple in  $R^{\strucA^{\leq}}_{\leq}$ is  $\bar{t}_{\leq}$. Then 
 define $f \colon [r] \to [i]$ to be the map such that the $j$-th coordinate of $\bar{t}$ is equal to the $f(j)$-th coordinate of $\bar{t}_{\leq}$. Then for each $i$-tuple $\bar{s} \in R^{\strucD}_{\leq}$, we add an $r$-tuple to $R^{\strucB}$ whose $j$-th coordinate is $s_{f(j)}$ (recall this denotes the $f(j)$-th coordinate of $\bar{s}$). The fact that $\strucB$ is reduced and coordinate respecting relative to $\strucA$ follows from the fact that $\strucD$ is reduced and order respecting relative to $\strucA^{\leq}$. Moreover, it is easy to verify that $\Hom(\strucA, \strucB) = \Hom(\strucA^{\leq}, \strucD)$ and so the result follows.
\end{proof}

\noindent Why is this simplification useful? Well, if $\strucA$ is
order respecting, then for any
$S \subseteq A$, we define $\overrightarrow{S}$ to be the 
unique tuple respecting the order $\leq$ whose coordinates are the elements of $S$. Moreover, if $\strucB$ is order respecting relative to $\strucA$, then we can similarly deploy this notation 
for any subset of $B$ which is `colourful', in the sense that 
no two of its elements appear in the same $\dom(a)$ set. 

To summarise, we have shown that in order to prove Theorem~\ref{thm: sub_lb} it suffices to prove the following. 

\begin{theorem} \label{thm:sub_lb_reduced}
Let $\strucA$ be an order respecting structure where every relation contains exactly one tuple. Then there exist arbitrarily large structures $\strucB$ which are coordinate and order respecting relative to $\strucA$ and such that any $\{\cup,\times\}$-circuit computing $\Hom(\strucA, \strucB)$ has size $\|\strucB\|^{\Omega(\subw(\strucA)^{1/4})}$. 
\end{theorem}

\noindent To see how this implies Theorem~\ref{thm: sub_lb}, let $\strucA$ be an 
arbitrary $\sigma$-structure. Now let $\leq$ be an arbitrary order on $A$ and  consider 
$\strucA' := ((\strucA^{\textrm{id}})^{\textrm{sp}})^{\leq}$. Then by construction $
\strucA'$ is an order respecting structure where every relation contains exactly one 
tuple. Therefore, we can apply Theorem~\ref{thm:sub_lb_reduced} to obtain an arbitrarily 
large structure $\strucB'$ which is  coordinate and order respecting relative to $\strucA$ and such that any $\{\cup,\times\}$-circuit computing 
$\Hom(\strucA', \strucB')$ has size $\|\strucB'\|^{\Omega(\subw(\strucA)^{1/4})}$. Further, we can apply Observation~\ref{ob:reduced} to obtain a reduced, coordinate respecting, and order respecting structure such that any $\{\cup,\times\}$-circuit computing 
$\Hom(\strucA', \strucB'')$ has size $\|\strucB''\|^{\Omega(\subw(\strucA)^{1/4})}$. Moreover, by taking $\strucB'$ big enough we can ensure $ \strucB''$ is arbitrarily big. Therefore, we can first apply Lemma~\ref{lem:order}, then 
Lemma~\ref{lem:individual_to_sparse} and finally Lemma~\ref{lem:any_to_individual} to obtain a 
$\sigma$-structure $\strucB$ with $\|\strucB\| \ge \|\strucB'\|$ such that any $\{\cup,
\times\}$-circuit computing $\Hom(\strucA, \strucB)$ has size $\|\strucB\|
^{\Omega(\subw(\strucA)^{1/4})}$; which implies Theorem~\ref{thm: 
sub_lb}. It remains to prove  Theorem~\ref{thm:sub_lb_reduced}.} 

\subsubsection{Defining the Bad Instances}
 
Towards proving Theorem~\ref{thm:sub_lb_reduced}, let $\sigma$ be a signature and let $\strucA$ be an order respecting $\sigma$-structure such that every relation in $\strucA$ contains exactly one tuple. Let $\graphH = (\vertexH, \edgeH)$ be the hypergraph of $\strucA$. Further suppose $\subw(\strucA) = w$ so that, by Theorem~\ref{thm: sub_con}, there exists some $W \subseteq \vertexH$ which can be partitioned into $(K_1, \dots, K_k)$ for some $k = \Omega(\sqrt{w})$, where each $K_i$ is a clique such that there is a uniform concurrent flow $F = \{ F_{i,j} \, \mid \, 1 \le i < j \le k\}$ of value $\varepsilon = \Omega(k^{-3/2})$ on $(K_1, \dots, K_k)$. The aim of this subsection is to define appropriate bad instances $\strucB$. Analogously to in the treewidth case we will use random instances. However, we need to carefully control how each vertex contributes to the number of homomorphisms from $\strucA$. A crucial idea is to use the flow $F$ to do this. Essentially we use $F$ to define a fractional independent set which then allows us to define an instance $\strucB$ with relations bounded by some constant $N$. This is similar in spirit to the upper bound of the `AGM bound' \cite[Lemma 4]{DBLP:conf/focs/AtseriasGM08} but we also sprinkle in some randomness to ensure hardness. 

Concretely, let $N$ be a sufficiently big integer, $\graphH = (\vertexH, \edgeH)$ be the hypergraph of $\strucA$ and for $v \in \vertexH$, let $\mathcal{P}_v$ be the set of 
paths in $\graphH$ that contain the vertex $v$. Define a map $\mu 
\colon \vertexH \to [0,1]$ by $\mu(v) := \frac{1}{2}\sum_{P \in 
\mathcal{P}_{v}} F(P)$. Then 
for each $v \in \vertexH$ we define disjoint sets of elements $
\dom(v)$ such that $|\!\dom(v)| = N^{\mu(v)}$. The idea is that the size of $\log(|\!\dom(v)|)$ is proportional to the weight of paths in $F$ passing through $v$. We define the 
\emph{random structure $\strucB$ induced by $F$ with relations bounded by $N$} as follows. For each 
relation $R \in \sigma$, let $\bar{v} = (v_1, \dots, v_r)$ be the unique tuple 
in $R^{\strucA}$. Then $R^{\strucB}$ is a random subset of $
\bigtimes_{i=1}^r \dom(v_i)$, where each tuple is included 
independently with probability a half. Note that,
\[
|R^{\strucB}| \le \prod_{i=1}^r|\!\dom(v_i)| = N^{\alpha}, \quad \text{where } \alpha := \sum_{i=1}^r \mu(v_i).
\]
Moreover, $\tilde{v}$ is an edge of $\graphH$ by definition. Since $F$ is a flow we obtain that 
\[
\sum_{v \in e} \mu(v) = \frac{1}{2}\sum_{v \in e} \sum_{P \in \mathcal{P}_{v}} F(P) \le  \sum_{\substack{P \in \mathcal{P}, P \cap e \neq \emptyset}} F(P) \le 1.
\] 
Here we assume w.l.o.g.\! that each path appearing in $F$ is minimal and therefore intersects each edge at most twice. Using this (non-standard) model of a random structure we can obtain an analogue to Lemma~\ref{lem: bad_graph}. There we showed that our bad instances did not contain any large complete bipartite subgraphs. The natural suggestion would be to look at the Gaifmann graph of $\strucB$ and show that it (with high probability) has no large complete bipartite subgraphs. However, the Gaifmann graph of $\strucB$ is very dense so this does not work. Instead the following definition gives us the analogous `badness' property in the relational setting. 

\begin{definition} \label{def:scattered}
Let $N$ be a positive integer, $\strucB$ a $\sigma$-structure and $R \in \sigma$ with $R^{\strucB} \subseteq\bigtimes_{i \in [r]} \dom(v_i)$ for some $\bar{v}$ an $r$-tuple of vertices of $\graphH$. Then we say that $R$ is \emph{$N$-scattered} if for every partition of $\tilde{v}$ $(U,V)$, the following holds. For every $S \subseteq \bigtimes_{v \in U} \dom(v)$ and $T \subseteq \bigtimes_{v \in V} \dom(v)$, such that 
\[
\left\{\overrightarrow{\left(\tilde{s} \cup \tilde{t} \right)} \, \mid \, s \in S, \, t \in T \right\} \subseteq R^{\strucB} 
\]
either $|S| \le 3\log(N)$ or $|T| \le 3\log(N)$. 
\end{definition}

\noindent The following lemma shows that the bad instances we need exist\confORfull{; it is proved by a simple probabilistic argument, see Appendix~\ref{ass:rest}.}{.} 

\begin{lemma} \label{lem: bad_hypergraph}
There exists a constant $c = c(\|A\|)$ such that for every sufficiently large integer $N$ there exists a $\sigma$-structure $\strucB = \strucB(N)$ which is coordinate and order respecting relative to $\strucA$. Moreover, $\strucB$ has universe $\bigcup_{v \in \vertexH} \dom(v)$,  size $\|\strucB\| \le \|A\| \cdot N$ and satisfies the following properties:
\begin{enumerate}
\item $|\!\Hom(\strucA, \strucB)| \ge cN^{t}$, where $t := \sum_{v \in \vertexH} \mu(v)$ and \label{big_instance}
\item for every $R \in \sigma, R^{\strucB}$ is $N$-scattered. \label{nonuni}
\end{enumerate}
\end{lemma}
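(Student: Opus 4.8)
The plan is to take $\strucB$ to be exactly the random structure induced by $F$ with relations bounded by $N$, and to show that for every sufficiently large $N$ this $\strucB$ satisfies both properties with positive probability; the probabilistic method then yields the desired instance. Several requirements hold for \emph{every} outcome of the random experiment. Since the sets $\dom(v)$ are pairwise disjoint and, by connectedness, every $v\in\vertexH$ occurs in some relation tuple of $\strucA$, any $h\in\Hom(\strucA,\strucB)$ satisfies $h(v)\in\dom(v)$ for all $v$; hence $\strucB$ is coordinate respecting relative to $\strucA$ and has universe $\bigcup_{v\in\vertexH}\dom(v)$. Because $\strucA$ is order respecting with no repeated coordinates, every tuple of $R^{\strucB}\subseteq\bigtimes_i\dom(v_i)$ is automatically sorted with respect to the order $\le_B$ induced on $B$, so $\strucB$ is order respecting relative to $\strucA$. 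Finally $|R^{\strucB}|\le\prod_i|\dom(v_i)|=N^{\alpha_R}$ where $\alpha_R:=\sum_{v\in\tilde v_R}\mu(v)\le 1$ since $\tilde v_R$ is an edge and $F$ is a flow, whence $\|\strucB\|\le|\sigma|\cdot N=\|\strucA\|\cdot N$ (each relation of $\strucA$ carries exactly one tuple). It remains to establish properties~(\ref{big_instance}) and~(\ref{nonuni}).

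For property~(\ref{nonuni}) I would use a union bound, mirroring the biclique-free property of Lemma~\ref{lem: bad_graph}. A violation of $N$-scatteredness for a triple $(R,U,V)$ restricts to a pair of sets $S,T$ of size exactly $m:=\lfloor 3\log N\rfloor+1$, and for such a fixed pair the $m^2$ pairwise distinct tuples $\overrightarrow{\tilde s\cup\tilde t}$ all lie in $R^{\strucB}$ with probability $2^{-m^2}$. As $\bigtimes_{v\in U}\dom(v)$ and $\bigtimes_{v\in V}\dom(v)$ each have size at most $N$ (their exponents $\mu(U),\mu(V)$ are bounded by $\mu(\tilde v_R)\le 1$), there are at most $N^m$ choices for each of $S$ and $T$ and at most $\|\strucA\|\cdot 2^{\ar(\strucA)}$ choices of $(R,U,V)$; since $m>2\log N$, the total failure probability is $\|\strucA\|\cdot 2^{\ar(\strucA)}\cdot N^{2m}\cdot 2^{-m^2}=2^{-\Omega(\log^2 N)}=o(1)$.

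The substantive part is property~(\ref{big_instance}). Write $X:=|\Hom(\strucA,\strucB)|$. The candidate maps are exactly the $N^{t}$ functions $h$ with $h(v)\in\dom(v)$ for all $v$, and such an $h$ is a homomorphism iff for each of the $|\sigma|$ relations $R$ the single tuple $\overrightarrow{h(\tilde v_R)}$ lies in $R^{\strucB}$ --- $|\sigma|$ independent events of probability $\tfrac12$ --- so $\mathbb E[X]=2^{-|\sigma|}N^{t}$. I would then invoke the Paley--Zygmund inequality, which reduces the goal to a second-moment bound $\mathrm{Var}(X)=O_{\strucA}((\mathbb E[X])^2)$. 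For candidates $h,h'$ one has $\Pr[h,h'\in\Hom(\strucA,\strucB)]=2^{-|\sigma|-d(h,h')}$, where $d(h,h')$ counts the relations on which $h$ and $h'$ disagree; summing over ordered pairs, grouping by the set $\mathcal R_0$ of relations of disagreement, bounding $\#\{(h,h'):D(h,h')=\mathcal R_0\}\le N^{t+\mu(U(\mathcal R_0))}$, and noting that the $\mathcal R_0=\sigma$ term cancels against $(\mathbb E[X])^2$, one obtains
\[
\mathrm{Var}(X)\;\le\;2^{-|\sigma|}N^{t}\sum_{\mathcal R_0\subsetneq\sigma}2^{-|\mathcal R_0|}\,N^{\mu(U(\mathcal R_0))},
\]
where $U(\mathcal R_0)$ is the set of vertices occurring only in relations from $\mathcal R_0$ and $\mu(U(\mathcal R_0))=\sum_{v\in U(\mathcal R_0)}\mu(v)$.

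Controlling this sum is where I expect the real difficulty, since distinct homomorphisms agreeing on large parts of $\strucA$ are positively correlated. The key point is that the scale is set by the \emph{active} relations $\sigma^{+}:=\{R\in\sigma:\mu(\tilde v_R)>0\}$: the number $\eta:=\min_{R\in\sigma^{+}}\mu(\tilde v_R)$ is a strictly positive constant depending only on $F$ (not on $N$), so whenever $\mathcal R_0$ omits some $R^{*}\in\sigma^{+}$ we have $\mu(U(\mathcal R_0))\le t-\mu(\tilde v_{R^{*}})\le t-\eta$, while the at most $2^{|\sigma|}$ sets $\mathcal R_0\supseteq\sigma^{+}$ contribute at most $2^{|\sigma|}N^{t}$ (using only $\mu(U(\mathcal R_0))\le t$). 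Hence the sum is $O_{\strucA}(N^{t})$, so $\mathrm{Var}(X)=O_{\strucA}(2^{-|\sigma|}N^{2t})=O_{\strucA}((\mathbb E[X])^2)$ and Paley--Zygmund gives $\Pr[X\ge 2^{-|\sigma|-1}N^{t}]\ge\delta$ for some constant $\delta=\delta(\strucA)>0$. Taking $c:=2^{-\|\strucA\|-1}$ and combining this with the $o(1)$ failure probability of property~(\ref{nonuni}), both properties hold simultaneously with positive probability once $N$ is large, which yields the structure $\strucB$.
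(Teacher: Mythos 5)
Your construction and the scatteredness argument coincide with the paper's: both take $\strucB$ to be the random structure induced by $F$, both verify the deterministic properties (you are actually more explicit about coordinate/order respecting than the paper, which is a plus), and both kill the non-scattered event by the same $\binom{N}{a}^2 2^{-a^2}$ union bound. Where you genuinely diverge is property~(\ref{big_instance}). You prove concentration of $X=|\!\Hom(\strucA,\strucB)|$ via a second-moment computation and Paley--Zygmund; the paper instead uses a one-line reverse-Markov argument: since $X\le N^{t}$ holds deterministically (there are only $N^{t}$ candidate maps) and $\mathbb{E}[X]=2^{-|\sigma|}N^{t}$, writing $p:=\Pr[X>\mathbb{E}[X]/8]$ gives $\mathbb{E}[X]\le(1-p)\mathbb{E}[X]/8+pN^{t}$, hence $p\ge\tfrac{7}{8}2^{-\|A\|}$, which is all that is needed to beat the $o(1)$ failure probability of scatteredness. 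Your route is correct but buys more than is required (two-sided concentration rather than a constant-probability lower bound), at the cost of the pair-counting analysis. Moreover, the step you flag as ``the real difficulty'' is in fact trivial: since $U(\mathcal{R}_0)\subseteq\vertexH$ gives $\mu(U(\mathcal{R}_0))\le t$ for \emph{every} $\mathcal{R}_0$ and the sum has only $2^{|\sigma|}=O_{\strucA}(1)$ terms, the bound $\operatorname{Var}(X)=O_{\strucA}((\mathbb{E}[X])^2)$ follows immediately, and the whole discussion of active relations and the constant $\eta$ can be deleted.
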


\confORfull{}{
\begin{proof}
	We proceed by showing that the probability that $\strucB$, the random structure induced by $F$ with relations bounded by $N$, satisfies the lemma is greater than zero. We begin with the following claim; although the setting is different the core probability calculation is the same as in the proof of Lemma~\ref{lem: bad_graph}. 
	\begin{claim}
		Let $N>0$, $R \in \sigma$ and $\strucB$ be the random structure induced by $F$ with relations bounded by $N$. Then the probability that $R$ is $N$-scattered tends to one as $N \to \infty$. 
	\end{claim}
	
	\begin{claimproof}
		Suppose that $R^{\strucA} = \{\bar{v}\}$ and let $(U,V)$ be a partition of $\tilde{v}$. Then the number of pairs $(S,T)$ such that $S\subseteq \bigtimes_{v \in U} \dom(v)$, $T \subseteq \bigtimes_{v \in V} \dom(v)$ and $|S| = |T| = 3\log(N):= a$ is at most $\binom{N}{a}^{\! 2}$. Let $p$ be the probability that there exists such a pair with
		\[\left \{\overrightarrow{\left(\tilde{s} \cup \tilde{t} \right)} \, \mid \, s \in S, \, t \in T \right \} \subseteq R^{\strucB}. \] 
		By the union bound and the definition of $\strucB$ we obtain
		\[
		p \le \binom{N}{a}^{\! \! 2} 2^{-a^2} \le N^{2a}2^{-a^2} = 2^{2a\log(N)-a^2} = 2^{-3\log^2(N)}.
		\]
		Moreover, the number of partitions of $\tilde{v}$ is only a function of $r$ which in particular is at most $|\vertexH|$. Since this number is independent of $N$ the claim follows. 
	\end{claimproof}
	
	\noindent As $|\sigma| = \|A\|$ is independent of $N$ the probability that (\ref{nonuni}) holds for $\strucB$ tends to one as $N \to \infty$. Further, $\|\strucB\| \le |\sigma| \cdot N = \|A\| \cdot N$ and $\strucB$ is coordinate and order respecting relative to $\strucA$ by construction. 
	
	It remains to analyse  the probability that (\ref{big_instance}) holds in $\strucB$. So let $X = |\!\Hom(\strucA, \strucB)|$.  Consider a map $h \colon \vertexH \to B$, where $B$ is the universe of $\strucB$, such that $h(v) \in \dom(v)$ for all $v \in \vertexH$. Then the probability that $h \in \Hom(\strucA, \strucB)$ is exactly the probability that for every $R \in \sigma$ with $R^{\strucA} = \left\{\bar{t} \right\}$, $h(\bar{t}) \in R^{\strucB}$. This occurs with probability $2^{-|\sigma|} = 2^{-\|A\|}:= \delta$. Since the total number of such maps $h$ is equal to $\prod_{v \in \vertexH} |\!\dom(v)| = N^{t}$, we obtain that $\mathbb{E}[X] = \delta N^t$. Let $P$ be the probability that $X \ge \mathbb{E}[X]/8$. Then by a straightforward calculation we obtain.
	\[
	\delta N^t = \mathbb{E}[X] \le (1-P) \cdot \frac{\mathbb{E}[x]}{8} + P \cdot N^t.
	\]
	Rearranging, it follows that $P \ge \frac{7\delta}{8-\delta}$ so since $\delta = 2^{-\|A\|}$ it follows that $P = 2^{-\Omega(\|A\|)}$. This completes the proof since $P$ is positive and independent of $N$. 
	\end{proof} %
 }

\subsubsection{Analysis} 

To argue about these instances we---similarly to in the treewidth case---reason about combinatorial rectangles with partitions balanced relative  to the highly connected nature of $\strucA$. To do this we use Lemma~\ref{lem:weightedrect}, for some appropriately chosen~$f$. The natural choice seems to be to set $f = \mu$\confORfull{.}{, recalling that we extend functions defined on $\vertexH$ to functions defined on subsets of $\vertexH$ in the natural way.} However, for technical reasons we instead use a closely related function $\alpha$ which is defined by
\[
\alpha(v) := \begin{cases}
0 & v \not \in W \\
\frac{1}{2}\sum_{j \in [k]\setminus \{i\}} \sum_{\substack{P \in \mathcal{P}_v}} F_{i,j}(P) & v \in K_i \textit{ for some } i \in [k]
\end{cases}
\]

\noindent The idea is twofold. First, we only take vertices of $W$ into account for our balancing since this is the part of $\graphH$ we know is `highly connected'. And second, we weight each vertex $v \in W$ according to the sum of the weights of paths in $F$ for which $v$ is an endpoint. Trivially, $\alpha(v) \le \mu(v)$ for every $v \in V$. We next observe that every $K_i$ has equal weight under $\alpha$.

\begin{observation} \label{ob: clique_weight}
For every $i \in [k]$, $\alpha(K_i) := \delta = \varepsilon/2 \cdot (k-1)$. Moreover, $\alpha(W) = k \cdot \delta$.  
\end{observation}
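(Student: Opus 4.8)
The plan is to unfold the definition of $\alpha$ on the block $K_i$, interchange the (finite) sums, and then use that each $F_{i,j}$ is an $(K_i,K_j)$-flow of value exactly $\varepsilon$. Writing $F_{i,j}:=F_{j,i}$ when $j<i$ (an $(K_j,K_i)$-flow is an $(K_i,K_j)$-flow read backwards), I would first compute
\[
\alpha(K_i)=\sum_{v\in K_i}\alpha(v)=\frac12\sum_{v\in K_i}\sum_{j\in[k]\setminus\{i\}}\sum_{P\in\mathcal{P}_v}F_{i,j}(P)=\frac12\sum_{j\in[k]\setminus\{i\}}\sum_{P}\bigl|P\cap K_i\bigr|\,F_{i,j}(P),
\]
where the last equality just regroups the double sum $\sum_{v\in K_i}\sum_{P\in\mathcal{P}_v}$ by path: a path $P$ is counted once for each vertex of $K_i$ it visits, i.e.\ with multiplicity $|P\cap K_i|$.

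The crux is then the claim that every path $P$ with $F_{i,j}(P)>0$ satisfies $|P\cap K_i|=1$. Since $F_{i,j}$ is an $(K_i,K_j)$-flow, each such $P$ is a $(K_i,K_j)$-path, hence starts in $K_i$ and ends in $K_j$; because $K_i$ and $K_j$ are distinct blocks of the partition $(K_1,\dots,K_k)$ of $W$, and because we may assume w.l.o.g.\ that every path appearing in $F$ is minimal — in particular internally disjoint from its two endpoint cliques — such a $P$ meets $K_i$ only in its first vertex. Consequently $\sum_{P}|P\cap K_i|\,F_{i,j}(P)=\sum_P F_{i,j}(P)$, which is precisely the value of $F_{i,j}$, namely $\varepsilon$. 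Substituting back,
\[
\alpha(K_i)=\frac12\sum_{j\in[k]\setminus\{i\}}\varepsilon=\frac{\varepsilon}{2}\,(k-1)=\delta ,
\]
so $\alpha(K_i)$ is indeed independent of $i$ and equals $\varepsilon/2\cdot(k-1)$. For the second assertion, since $W=K_1\sqcup\dots\sqcup K_k$ and $\alpha$ vanishes outside $W$, additivity of $\alpha$ on disjoint sets gives $\alpha(W)=\sum_{i=1}^{k}\alpha(K_i)=k\delta$.

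The only genuinely delicate point is the intersection claim $|P\cap K_i|=1$ for the flow paths of $F_{i,j}$; this is exactly where the ``minimal path'' convention is needed, so I would make sure that convention (every path appearing in the uniform concurrent flow is a minimal $(K_i,K_j)$-path, hence internally disjoint from $K_i\cup K_j$) is stated and in force. Everything else is routine rearrangement of finite sums and the defining property ``value exactly $\varepsilon$'' of a uniform concurrent flow.
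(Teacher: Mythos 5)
Your computation is exactly the intended justification: the paper states this as an observation without proof, and your unfolding of $\alpha$ over $K_i$, regrouping the double sum by paths, and invoking that each $F_{i,j}$ is an $(K_i,K_j)$-flow of value exactly $\varepsilon$ (together with $W=K_1\cup\dots\cup K_k$ being a partition) is precisely the argument the definitions supply. The minimality convention you flag is already in force in the paper (it assumes w.l.o.g.\ that every path appearing in $F$ is minimal just before constructing the random structure), and, as you note, minimality of an $(K_i,K_j)$-path gives $|P\cap K_i|=1$, which is what makes the equality $\alpha(K_i)=\delta$ exact.
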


\noindent So let $(X,Y)$ be an $\alpha$-balanced partition. This means that globally the weight of $W$ under $\alpha$ is (up to a constant factor) evenly split between $X$ and $Y$. However, this is not necessarily the case for each $K_i$. It could, for instance, happen that every $K_i$ is either a subset of $X$ or a subset of $Y$. We first analyse the case where there are many values of $i$ such that the weight of $K_i$ under $\alpha$ is roughly evenly split between $X$ and $Y$. To formalise this we say that $K_i$ is \emph{$\alpha$-balanced} for $(X,Y)$ if 
\[ 
\min( \alpha(K_i \cap X), \alpha(K_i \cap Y)) \ge \frac{\alpha(K_i)}{10}  =  \frac{\delta}{10}.
\] 
\confORfull{}{Note that the multiplicative constant $1/10$ differs from the $1/3$ that appears in the definition of an $\alpha$-balanced partition; this is unfortunately a technical necessity.} It turns out that if a lot of the $K_i$ are balanced then we can argue very similarly to in the proof of Theorem~\ref{thm: tw_struc}\confORfull{, see Appendix~\ref{ass:rest} for details.}{.}

\begin{lemma} \label{lem: easy_part}
Let $(X,Y)$ be an $\alpha$-balanced partition of $\vertexH$ and let $\lambda > 0$\confORfull{}{ be a constant}. Suppose that $K_i$ is $\alpha$-balanced for at least $\lambda \cdot k$ values of $i$. Then there is a constant $c = c(\lambda) >0$, such that every rectangle $\mathcal{R} \subseteq \Hom(\strucA, \strucB)$ with underlying partition $(X,Y)$ has size at most $N^{t - c \sqrt{k}}$.
\end{lemma}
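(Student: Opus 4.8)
The plan is to mirror the bounded‑arity argument of Lemmas~\ref{lem: matching}--\ref{lem: big_rect}, with the $N$-scattered property of Lemma~\ref{lem: bad_hypergraph}(\ref{nonuni}) taking the role played there by the absence of large bicliques, and the weighting $\alpha$ taking the role of the highly connected set $W$. The goal is to exhibit a set $Z \subseteq W$ of vertices with $|\mathcal{R}\restriction_{\{v\}}| \le 3\log N$ for every $v \in Z$ and with $\mu(Z) = \Omega(\lambda\sqrt{k})$. Granting this, Observation~\ref{ob:projectub} together with the bound $|\mathcal{R}\restriction_{\{v\}}| \le |\dom(v)| = N^{\mu(v)}$ for $v \notin Z$ gives
\[
|\mathcal{R}| \;\le\; \prod_{v \in \vertexH}\bigl|\mathcal{R}\restriction_{\{v\}}\bigr| \;\le\; (3\log N)^{|Z|}\cdot N^{\,t-\mu(Z)};
\]
since $|Z| \le |\vertexH|$ is a constant independent of $N$, the factor $(3\log N)^{|Z|}$ is $N^{o(1)}$, so for $N$ large it is dominated by $N^{\mu(Z)/2}$, and we conclude $|\mathcal{R}| \le N^{t - c\sqrt{k}}$ with $c = c(\lambda) > 0$.

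To construct $Z$, fix an index $i$ with $K_i$ $\alpha$-balanced for $(X,Y)$; since $\delta > 0$ this forces $K_i \cap X \ne \emptyset$ and $K_i \cap Y \ne \emptyset$. Choose $e_i \in \edgeH$ with $K_i \subseteq e_i$ and let $R_i$ be a relation symbol of $\strucA$ whose (unique) tuple $\bar{v}^i$ satisfies $\tilde{v}^i = e_i$, so $\bar v^i = \overrightarrow{e_i}$; this exists by definition of the hypergraph of $\strucA$, and $R_i^{\strucB}$ is $N$-scattered. As $\mathcal{R}$ is an $(X,Y)$-rectangle, $\mathcal{R}\restriction_{e_i} = \mathcal{R}\restriction_{e_i\cap X} \times \mathcal{R}\restriction_{e_i\cap Y}$, so setting $S_i := \{\overrightarrow{g(e_i\cap X)} : g \in \mathcal{R}\restriction_{e_i\cap X}\}$ and $T_i := \{\overrightarrow{g(e_i\cap Y)} : g \in \mathcal{R}\restriction_{e_i\cap Y}\}$, every pair $(s,t) \in S_i \times T_i$ arises from some $h \in \Hom(\strucA,\strucB)$ with $h(\bar{v}^i) = \overrightarrow{\tilde{s}\cup\tilde{t}} \in R_i^{\strucB}$. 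Hence $N$-scatteredness, applied to the partition $(e_i\cap X,\, e_i\cap Y)$ of $\tilde{v}^i$, yields $|S_i| \le 3\log N$ or $|T_i| \le 3\log N$; pick a side $X_i \in \{X,Y\}$ realising this, so that $|\mathcal{R}\restriction_{\{v\}}| \le |\mathcal{R}\restriction_{K_i\cap X_i}| \le |\mathcal{R}\restriction_{e_i\cap X_i}| \le 3\log N$ for all $v \in K_i \cap X_i$.

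Now set $Z := \bigcup\{K_i\cap X_i : K_i \text{ is } \alpha\text{-balanced for }(X,Y)\}$, a disjoint union since the $K_i$ partition $W$. Using $\alpha \le \mu$ pointwise, the hypothesis that $K_i$ is $\alpha$-balanced for at least $\lambda k$ indices $i$, the defining inequality $\alpha(K_i\cap X_i) \ge \alpha(K_i)/10$, Observation~\ref{ob: clique_weight}, and $\varepsilon = \Omega(k^{-3/2})$, we get
\[
\mu(Z) \;\ge\; \alpha(Z) \;=\; \sum_{i:\,K_i\text{ bal.}} \alpha(K_i\cap X_i) \;\ge\; \lambda k\cdot\frac{\delta}{10} \;=\; \frac{\lambda k(k-1)\varepsilon}{20} \;=\; \Omega\!\bigl(\lambda\sqrt{k}\bigr),
\]
which feeds into the first paragraph and finishes the proof. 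The only delicate point is the bookkeeping in the second paragraph: one must check that $\overrightarrow{\,\cdot\,}$ is well defined on the colourful sets $\tilde{s}$, $\tilde{t}$, $\tilde{s}\cup\tilde{t}$, and that $h(\bar{v}^i)$, viewed as a tuple of $R_i^{\strucB}$, really equals $\overrightarrow{\tilde{s}\cup\tilde{t}}$ -- which is exactly what the order‑ and coordinate‑respecting reductions of Section~\ref{ss:simple} (and the construction of $\strucB^{\le}$) were arranged to guarantee. No new probabilistic content is needed here, since everything of that nature is already packaged into Lemma~\ref{lem: bad_hypergraph}.
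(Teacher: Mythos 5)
Your proposal is correct and follows essentially the same route as the paper's proof: for each $\alpha$-balanced $K_i$ you pick an edge $e_i \supseteq K_i$ and its single-tuple relation, use the rectangle property plus $N$-scatteredness to force one side's restriction to have size at most $3\log N$, lower-bound the collected $\mu$-weight by $\Omega(\lambda\sqrt{k})$ via $\alpha \le \mu$ and Observation~\ref{ob: clique_weight}, and finish with Observation~\ref{ob:projectub}, absorbing the polylogarithmic factor for large $N$. The only cosmetic difference is that you track restrictions vertex-by-vertex on $Z$ whereas the paper restricts to the sets $s_i$ directly; this changes nothing of substance.
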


\confORfull{}{ %
\begin{proof}
	Pick any $K_i$ which is $\alpha$-balanced for $(X,Y)$. Then we know there is some edge $e \in \edgeH$ with $K_i \subseteq e$. Therefore, there is some $R \in \sigma$, with $R^{\strucA} = \{\overrightarrow{e}\}$. Recall that every member of $\mathcal{R}$ is a mapping $h\colon \vertexH \to B$. Let,
	\begin{align*}
		U_i &:= \{h(e \cap X) \, \mid \, h \in \mathcal{R} \} \textit{ and}  \\
		V_i &:= \{h(e \cap Y) \, \mid \, h \in \mathcal{R} \}.
	\end{align*}
	
	\begin{claim}\label{claim:rect}
		Let $u \in U_i$ and $v \in V_i$. Then $\overrightarrow{u \cup v} \in R^{\strucB}$. 
	\end{claim}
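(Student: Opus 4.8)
The plan is to prove Claim~\ref{claim:rect} by exploiting that $\mathcal{R}$ is an $(X,Y)$-rectangle, exactly mirroring the key step in the proof of Lemma~\ref{lem: big_rect}. First I would fix $u \in U_i$ and $v \in V_i$. By definition of $U_i$ there is some $h \in \mathcal{R}$ with $h(e \cap X) = u$ (where we interpret $h(e\cap X)$ as the tuple $\overrightarrow{h(e\cap X)}$, which is well-defined since $\strucB$ is coordinate respecting), and similarly there is some $g \in \mathcal{R}$ with $g(e \cap Y) = v$. Since $\mathcal{R}$ is an $(X,Y)$-rectangle, the map $h' := h\restriction_X \times\, g\restriction_Y$ lies in $\mathcal{R} \subseteq \Hom(\strucA, \strucB)$. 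Now $(U,V) := (e \cap X, e \cap Y)$ is a partition of $e = \tilde{v}'$, where $\overrightarrow{e}$ is the unique tuple of the relation $R$ in $\strucA$; applying the homomorphism $h'$ to this tuple and using that $h'$ agrees with $h$ on $e \cap X$ and with $g$ on $e \cap Y$, we get $h'(\overrightarrow{e}) = \overrightarrow{u \cup v} \in R^{\strucB}$, which is the claim.

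The only subtlety is bookkeeping around tuples versus sets: one has to check that $\overrightarrow{u \cup v}$ is the same tuple as $h'(\overrightarrow{e})$. This is where the simplifications from Section~\ref{ss:simple} pay off: because $\strucA$ is order respecting with no repeated coordinates and $\strucB$ is order respecting relative to $\strucA$, applying $h'$ coordinate-wise to $\overrightarrow{e}$ yields precisely the tuple whose underlying set is $h'(e) = u \cup v$ arranged according to $\leq_B$, i.e.\ $\overrightarrow{u \cup v}$. Since the image of $\mathcal{R}$ under any fixed vertex is colourful (contained in a single $\dom$ set), $u \cup v$ is colourful and $\overrightarrow{u\cup v}$ is well-defined.

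I do not expect this step to be the main obstacle of Lemma~\ref{lem: easy_part}; rather, the claim is the analogue of the ``rectangle forces a biclique'' observation, and the real work comes afterwards. After the claim, I would invoke the $N$-scattered property (Lemma~\ref{lem: bad_hypergraph}(\ref{nonuni})) with the partition $(e\cap X, e\cap Y)$ of $\tilde v'$: since $\{\overrightarrow{u\cup v} \mid u \in U_i, v \in V_i\} \subseteq R^{\strucB}$, either $|U_i| \le 3\log N$ or $|V_i| \le 3\log N$; say w.l.o.g.\ the side of smaller $\alpha$-weight. This means that for a set of variables $e\cap X$ (or $e\cap Y$) whose total $\alpha$-weight is at least $\delta/10$, the joint projection $|\mathcal{R}\restriction_{e\cap X}| \le 3\log N$, forcing a saving relative to the trivial bound $\prod_v |\dom(v)| = N^{\sum_v \mu(v)}$. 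Since we have $\ge \lambda k$ such balanced $K_i$'s, each contributing a disjoint chunk of $\alpha$-weight $\ge \delta/10 = \Omega(\varepsilon) = \Omega(k^{-3/2})$, the total weight saved is $\Omega(\lambda k \cdot k^{-3/2}) = \Omega(\sqrt k)$, and Observation~\ref{ob:projectub} then yields $|\mathcal{R}| \le N^{t - c\sqrt k}$ for an appropriate $c = c(\lambda)$. The genuinely delicate points will be (i) ensuring the ``saved'' variable sets across different $K_i$ are disjoint so the weight savings add up — this holds because the $K_i$ partition $W$ — and (ii) correctly accounting for the $\log N$ factors, which are absorbed since $N^{-\Omega(\sqrt k)}$ dominates $(\log N)^{O(k)}$ for $N$ large; but in the claim itself there is no real obstacle.
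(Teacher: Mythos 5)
Your proof of the claim is correct and follows essentially the same argument as the paper: take witnesses $h_u, h_v \in \mathcal{R}$ for $u$ and $v$, combine them via the rectangle property into $h_u\restriction_X \times h_v\restriction_Y \in \mathcal{R} \subseteq \Hom(\strucA,\strucB)$, and apply this homomorphism to $\overrightarrow{e}$, using the order- and coordinate-respecting setup to identify $h(\overrightarrow{e})$ with $\overrightarrow{u \cup v}$. The additional remarks about how the claim feeds into the $N$-scattered argument also match the paper's subsequent reasoning.
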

	
	\begin{claimproof}
		By the definition of $U_i$ and $V_i$ there exists $h_u, h_v \in \mathcal{R} \subseteq \Hom(\strucA, \strucB)$  such that $h_u(e \cap X) = u$ and $h_v(e \cap Y) = v$. Moreover, since $
		\mathcal{R}$ is a rectangle $h:= h_u\!\restriction_X \times h_v\!\restriction_Y \in \mathcal{R}$. Since $h \in \Hom(\strucA, \strucB)$ we know $h(\overrightarrow{e}) \in R^{\strucB}$. But $h(\overrightarrow{e})  = \overrightarrow{u \cup v}$, since $h(e) = u \cup v$ by construction. The claim follows.
	\end{claimproof}
	
	\noindent 
	On the other hand, by Lemma~\ref{lem: bad_hypergraph}(\ref{nonuni}), $R^{\strucB}$ is $N$-scattered. It follows that either $|U_i| \le 3\log(N)$ or $|V_i| \le 3\log(N)$. Let $\mathcal{I}$ be the set of $i \in [k]$ such that $K_i$ is balanced for $(X,Y)$. By assumption $|\mathcal{I}| \ge \lambda \cdot k$. For every such $i$, let $e_i$ be an edge of $\graphH$ such that $K_i \subseteq e_i$. We then define  
	\[
	s_i := \begin{cases}
		e_i \cap X & \textit{ if } |U_i| \le |V_i| \\
		e_i \cap Y & \textit{ otherwise.} 
	\end{cases}
	\]
	Let $S := \bigcup_{i \in \mathcal{I}} s_i $. It follows that 
	\begin{align*}
		\mu(S) \ge \alpha(S) &\ge \sum_{i \in \mathcal{I}} \min[\alpha(K_i \cap X, K_i \cap Y)] \\
		&\ge  \lambda \cdot k \cdot \frac{\delta}{10} \ge \hat{c} \cdot \sqrt{k},
	\end{align*} 
	for some constant $\hat{c}>0$, depending on $\lambda$, since $\delta = \Omega(k^{-1/2})$.

	Moreover, by definition the restriction of $\mathcal{R}$ to $\{s_i\}$ has size at most $3\log(N)$ for every $i \in \mathcal{I}$. Therefore, since $|\!\Hom(\strucA, \strucB)| = \Theta(N^t)$ and as we take $N$ to be sufficiently large, the result follows by applying Observation~\ref{ob:projectub}.
	\end{proof} %
}

\noindent To finish the proof we have to deal with the case where almost every $K_i$ is not $\alpha$-balanced relative to $(X,Y)$. The key is to argue that then there are many paths in $F$ which cross from $X$ into $Y$ (or vice-versa). This is the core of the argument as it is where the flow is deployed. Note that in the extreme case where every $K_i$ is either a subset of $X$ or a subset of $Y$ the argument is simple. For suppose $K_i \subseteq X$ and $K_j \subseteq Y$, then in particular every path with non-zero weight under $F_{i,j}$ must at some point cross from $X$ into $Y$. Moreover, since $\alpha(K_i) = \delta$ for every $i$ and since the partition $(X,Y)$ is $\alpha$-balanced we know that the number of $K_i$ contained in $X$ differs only by a constant factor from the number of $K_i$ contained in $Y$. Therefore, the total weight of the paths crossing from $X$ into $Y$ is $\Theta(k \cdot \delta) = \Theta(k^{1/2})$; the result then follows by a similar calculation to in Lemma~\ref{lem: easy_part}. In the more general case, where $K_i$ can intersect both $X$ and $Y$, a similar (if slightly more involved) argument works. \confORfull{We give the details in Appendix~\ref{ass:rest}}{Observe that, by Lemma~\ref{lem: easy_part}, we may assume that the weight of many $K_i$ is overwhelmingly concentrated either in $X$ or in $Y$. So we just have to argue that---as in the proof sketch above---the weight of paths intersecting both partitions is large.}

\begin{lemma} \label{lem:hardpart}
Let $\lambda > 0$ be sufficiently small and $(X,Y)$ be an $\alpha$-balanced partition of $\vertexH$, such that the number of $\alpha$-balanced $K_i$ is less than $\lambda \cdot k$. Then there is a constant $d = d(\lambda)>0$ such that every $\mathcal{R} \subseteq \Hom(\strucA, \strucB)$ with underlying partition $(X,Y)$ has size $N^{t - d\sqrt{k}}$.
\end{lemma}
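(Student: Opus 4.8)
The plan is to show that any rectangle $\mathcal{R}\subseteq\Hom(\strucA,\strucB)$ with underlying partition $(X,Y)$ has enough ``thin'' coordinates---variables $v$ with $|\mathcal{R}\restriction_{\{v\}}|\le3\log N$---that Observation~\ref{ob:projectub} alone forces $|\mathcal{R}|\le N^{t-d\sqrt{k}}$. Write $x_i:=\alpha(K_i\cap X)$ and $y_i:=\alpha(K_i\cap Y)$, so $x_i+y_i=\alpha(K_i)=\delta$ by Observation~\ref{ob: clique_weight}; call $K_i$ \emph{$X$-heavy} if $y_i<\delta/10$ and \emph{$Y$-heavy} if $x_i<\delta/10$. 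These classes are disjoint, and since fewer than $\lambda k$ of the $K_i$ are $\alpha$-balanced, more than $(1-\lambda)k$ of them are heavy; let $I_X,I_Y$ be the two index sets and $a:=|I_X|$, $b:=|I_Y|$. First I would check that $a,b=\Omega(k)$: as $(X,Y)$ is $\alpha$-balanced and $\sum_i(x_i+y_i)=k\delta$ we get $\sum_i x_i=\alpha(X)\in[k\delta/3,\,2k\delta/3]$, while $x_i>9\delta/10$ for $i\in I_X$ forces $a<\tfrac{20}{27}k$ and, symmetrically, $b<\tfrac{20}{27}k$; combined with $a+b>(1-\lambda)k$ this gives $a,b\ge(\tfrac{7}{27}-\lambda)k$, hence $a,b\ge\gamma k$ for an absolute constant $\gamma>0$ once $\lambda$ is small.

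The core step is to bound from below the total $F$-weight carried by \emph{crossing} paths, those having a vertex in $X$ and a vertex in $Y$. Fix $i\in I_X$ and $j\in I_Y$. A path of $F_{i,j}$ that is not crossing lies entirely in $X$, in which case its endpoint in $K_j$ lies in $K_j\cap X$, or entirely in $Y$, in which case its endpoint in $K_i$ lies in $K_i\cap Y$. Since a minimal path meets $K_i$ exactly once, the total $F$-weight of paths leaving $K_i$ through a vertex of $K_i\cap Y$ equals $2\alpha(K_i\cap Y)=2y_i<\delta/5$ when $i\in I_X$, and symmetrically $2x_j<\delta/5$ when $j\in I_Y$. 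Summing, for each $(i,j)\in I_X\times I_Y$, the estimate ``non-crossing weight of $F_{i,j}$ $\le$ (its weight with $K_j$-endpoint in $X$) $+$ (its weight with $K_i$-endpoint in $Y$)'' yields
\[
\sum_{(i,j)\in I_X\times I_Y}\bigl(\text{non-crossing }F_{i,j}\text{-weight}\bigr)\;\le\;\sum_{j\in I_Y}2x_j+\sum_{i\in I_X}2y_i\;<\;\frac{(a+b)\delta}{5}.
\]
The $ab$ flows $F_{i,j}$, $(i,j)\in I_X\times I_Y$, are distinct and each have value $\varepsilon$, so the crossing weight of $F$ is at least $ab\,\varepsilon-(a+b)\delta/5$. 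Using $\delta=\tfrac12\varepsilon(k-1)$, $a+b\le k$, and that over the admissible region $ab$ is minimised at a lopsided split where it is at least $(\tfrac{140}{729}-\tfrac{20}{27}\lambda)k^2$, this difference is at least $c_1\varepsilon k^2$ for an absolute constant $c_1>0$ (here one uses $\tfrac{140}{729}>\tfrac1{10}$ and takes $\lambda$ small); since $\varepsilon=\Omega(k^{-3/2})$ by Theorem~\ref{thm: sub_con}, the crossing weight of $F$ is $\Omega(\sqrt k)$.

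Finally I would turn crossing weight into thin coordinates. Let $B:=\{\,v\in\vertexH\mid|\mathcal{R}\restriction_{\{v\}}|\le3\log N\,\}$. I claim every crossing path $P$ meets $B$: along $P$ there are consecutive vertices $v\in X$ and $v'\in Y$, lying in a common edge $e$ of $\graphH$, which is the unique tuple of some relation $R$, so $R^{\strucB}$ is $N$-scattered by Lemma~\ref{lem: bad_hypergraph}. Applying scatteredness to the partition $(e\cap X,e\cap Y)$ with $S:=\mathcal{R}\restriction_{e\cap X}$, $T:=\mathcal{R}\restriction_{e\cap Y}$---which satisfy the required containment because $\mathcal{R}$ is an $(X,Y)$-rectangle---gives $|S|\le3\log N$ or $|T|\le3\log N$; since further projection only shrinks the image, $v\in B$ in the first case and $v'\in B$ in the second. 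Hence, with $\mu(v)=\tfrac12\sum_{P\in\mathcal{P}_v}F(P)$,
\[
\mu(B)=\tfrac12\sum_{P}F(P)\,\bigl|\{v\in B:v\in P\}\bigr|\;\ge\;\tfrac12\!\!\sum_{P\text{ crossing}}\!\!F(P)=\Omega(\sqrt k).
\]
By Observation~\ref{ob:projectub}, $|\mathcal{R}|\le\prod_{v\in B}|\mathcal{R}\restriction_{\{v\}}|\cdot\prod_{v\notin B}|\dom(v)|\le(3\log N)^{|\vertexH|}N^{t-\mu(B)}$, and since $|\vertexH|$ is independent of $N$ this is at most $N^{t-d\sqrt k}$ for $N$ large and a suitable $d=d(\lambda)>0$.

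The step I expect to be the real obstacle is the middle one: forcing the crossing-weight bound to be strictly positive is exactly why the threshold $\tfrac1{10}$ in the definition of $\alpha$-balancedness (and a sufficiently small $\lambda$) is needed, and one must track the constants honestly, since $a$ and $b$ can be as lopsided as roughly $\tfrac{20}{27}k$ versus $\tfrac7{27}k$ while the loss $O((a+b)\delta)$ from flow ``leaking'' out of the heavy cliques on the wrong side is of the same order $\Theta(k^2\varepsilon)$ as the main term $ab\,\varepsilon$. Steps~1 and~3 are essentially bookkeeping, modulo checking that the minimality convention on paths makes the displayed identities for $\alpha$ and $\mu$ hold.
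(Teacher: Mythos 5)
Your proposal is correct and follows essentially the same route as the paper's proof: classify the cliques as heavy on one side, use $\alpha$-balance to get the $20k/27$ bounds and hence $ab\gtrsim 0.19k^2$, subtract the non-crossing ("leaked") weight of at most $(a+b)\delta/5\le k^2\varepsilon/10$ to get crossing flow weight $\Omega(\sqrt k)$, and convert crossing edges into coordinates with small projection via $N$-scatteredness before finishing with Observation~\ref{ob:projectub}. Your bookkeeping at the end (the explicit factor $\tfrac12$ and absorbing the $(3\log N)^{|\vertexH|}$ factor) is if anything slightly more careful than the paper's.
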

\confORfull{}{
\begin{proof}
	For $Z \in \{X,Y\}$, let $\mathcal{K}_Z$ be the set of $i$ such that $\alpha(K_i \cap Z) < \delta/10 $. Then by assumption 
	\[
	|\mathcal{K}_X \cup \mathcal{K}_Y| \ge k - \lambda \cdot k.
	\]
	We first give a lower bound for $|\mathcal{K}_X|\cdot |\mathcal{K}_Y|$. Observe that
	\[
	\frac{2k\delta}{3} = \frac{2\alpha(W)}{3} \ge \alpha(Y \cap W) \ge |\mathcal{K}_X| \cdot \frac{9 \delta}{10},
	\]
	where the equality uses Observation~\ref{ob: clique_weight}, the first inequality that $(X,Y)$ is $\alpha$-balanced and the second that for every $i \in \mathcal{K}_X$, $\alpha(K_i \cap Y) = \alpha(K_i) - \alpha(K_i \cap X) \ge \delta - \delta/10 = 9\delta/10$. Rearranging, we obtain that $|\mathcal{K}_X| \le  20k/27$. Analogously, we know that $|\mathcal{K}_Y| \le 20k/27$.
	We also know that $|\mathcal{K}_X| + |\mathcal{K}_Y| \ge k - \lambda \cdot k$, so to give a lower bound for $|\mathcal{K}_X|\cdot |\mathcal{K}_Y|$ it suffices to find the minimum of the function 
	\begin{align*}
		&x \mapsto x(k - \lambda \cdot k -x)  & x \in \left[\frac{7k}{27} - \lambda \cdot k, \frac{20k}{27}\right].
	\end{align*} 
	This occurs for the extremal values of $x$, so we obtain that
	\[
	|\mathcal{K}_X|\cdot |\mathcal{K}_Y| \ge \frac{20k}{27} \cdot \left(\frac{7k}{27} - \lambda \cdot k\right) \ge 0.19k^2,
	\]
	where the last inequality follows by choosing $\lambda$ sufficiently small.
	
	Let $F'$ be the union of $F_{i,j}$, such that $i \in \mathcal{K}_X$  and $j \in \mathcal{K}_Y$. Observe that the value of $F'$ is $|\mathcal{K}_X| \cdot |\mathcal{K}_Y| \cdot \varepsilon$. Let $F'_g$ be the subflow of $F'$ containing exactly those paths from $F'$ which intersect both $X$ and $Y$. We aim to show this also has large value. To see this we upper bound the value of $F'_b := F' \setminus F'_g$.
	
	So let $P \in F'_b$, then $P$ is fully contained in $Z$, for some $Z \in \{X,Y\}$. Then there is an endpoint $s(P)$ of $P$ which lies in $K_i$ for some $i \in \mathcal{K}_Z$.   Then note that 
	\[
	\sum_{P \in F'_b} F(P) = 2 \sum_{P \in F'_b} \alpha(s(P)) < 2 (|\mathcal{K}_X|+|\mathcal{K}_Y|) \cdot \frac{\delta}{10} \le \frac{k \cdot \delta}{5} \le \frac{k^2 \cdot \varepsilon}{10},
	\]
	where the last inequality follows since $\delta:= \frac{\varepsilon}{2}\cdot (k-1)$.  It follows that the value of $F'_g$ is at least
	\[
	|\mathcal{K}_X| \cdot |\mathcal{K}_Y| \cdot \varepsilon - \frac{k^2 \cdot \varepsilon}{10} \ge 0.19k^2 \cdot \varepsilon - \frac{k^2 \cdot \varepsilon}{10}  =  \Omega(k^{1/2}),
	\]
	where the asymptotic bound follows since $\varepsilon =\Omega(k^{-3/2})$.
	
	Let $P \in F'_g$. Then we know that at some point $P$ crosses from $X$ to $Y$, i.e. there exist successive vertices  $x_P, y_P$ in $P$ such that $x_P \in X$ and $y_P \in Y$. Therefore, there is some edge $e_P \in \edgeH$ such that $e_P \supseteq \{x,y\}$. It is easy to see by a similar argument to Lemma~\ref{lem: easy_part} that either the restriction of $\mathcal{R}$ to $e_P \cap X$ has size at most $3\log(N)$ or the restriction of $\mathcal{R}$ to $e_P \cap Y$ has size at most $3\log(N)$. Moreover, 
	\[ \mu(e_P \cap X) \ge \mu(x_P) \ge F'_g(P).   \]
	Similarly $\mu(e_P \cap Y) \ge F'_g(P)$. So let $s_P$ equal $e_P \cap X $  if the size of the restriction of $\mathcal{R}$ to $e_P \cap X$ is less than the size of the restriction of $\mathcal{R}$ to $e_P \cap Y$. Otherwise let $s_P$ equal $e_P \cap Y$. Then $\mu(\cup_{P \in F'_g} s_P)$ is at least the value of $F'_g$ which is $\Omega(k^{1/2})$. The result follows by deploying Observation~\ref{ob:projectub}.
\end{proof} %
 }

\noindent Theorem~\ref{thm:sub_lb_reduced}---and therefore also Theorem~\ref{thm: sub_lb}---is immediate.

\begin{proof}[Proof of Theorem~\ref{thm:sub_lb_reduced}]
Take $\strucB$ from Lemma~\ref{lem: bad_hypergraph}. Then $|\!\Hom(\strucA, \strucB)| \ge cN^t$, where $N$, $c$ and $t$ are defined as in the lemma. On the other hand, by Lemmas~\ref{lem: easy_part} and \ref{lem:hardpart} we know that every $\alpha$-balanced rectangle $\mathcal{R}$ with $\mathcal{R} \subseteq \Hom(\strucA, \strucB)$ has size at most $N^{t- \varepsilon\sqrt{k}}$, for some positive constant $\varepsilon$ and some $k = \Omega(\sqrt{w})$; the result follows by Lemma~\ref{lem:weightedrect}.
\end{proof}

        \section{Conclusions}\label{sec:conclusions}
Our main results are both proved using a framework which combines tools from communication complexity with structural (hyper)graph theory. Theorem~\ref{thm: tw_struc}---an unconditional  $N^{\Omega(\tw(\strucA))}$ lower bound on $\{\cup, \times\}$-$\CSP(\classA, \classall)$ for bounded arity $\classA$---shows that this framework can yield tight lower bounds. Moreover, our methods yield a much simpler proof than that of \confORfull{the}{our previous} (inferior) $N^{\tw(\strucA)^{\varepsilon}}$ lower bound \confORfull{of Berkholz and Vinall-Smeeth}{}~\cite{DBLP:conf/icalp/BerkholzV23}. Moreover, Theorem~\ref{thm: sub_lb}---an exact characterisation of when $\{\cup, \times\}$-$\CSP(\classA, \classall)$ is FPT-time solvable---shows that our methods are robust enough to tackle more challenging (higher arity) settings.

This opens up a number of future directions. Firstly, there is the
obvious task of closing the gap between the lower bounds of
Theorem~\ref{thm: sub_lb} and the upper bounds of Theorem~\ref{thm:
subw}. Secondly, can we use our methods to shed light on the
polynomial time solvability of $\{\cup, \times\}$-$\CSP(\classA,
\classall)$? We know this is possible if $\classA$ has bounded
fractional hypertree width and impossible if $\classA$ has bounded
submodular width: can we give an exact characterisation?\confORfull{}{ An answer to
this question could be a step towards establishing a (conditional)
dichotomy on which join queries admit constant delay
enumeration.} Thirdly, it would be interesting to study this problem
for other representation formats.
\confORfull{}{A concrete problem in this direction is to investigate for what
classes $\classA$ can we always produce a \emph{deterministic} $\{\cup,
\times\}$-circuit for $\Hom(\strucA, \strucB)$ in FPT time, which can
be used for exact counting. This is a particularly intriguing problem,
because no conditional dichotomy for the counting variant of
$\CSP(\classA, \classall)$ is known, although there have been some recent
partial results in this direction
\cite{DBLP:journals/tods/KhamisCMNNOS20,
DBLP:journals/corr/abs-2412-06189}.}
        \bibliography{../extras/biblio}
        
               \newpage
               \appendix

\section{On The Relation To Semiring Circuits} \label{a:algebraic}

In this section, we clarify the relationship between our work and a line of work which investigates the smallest circuit computing the \emph{sum-product polynomial} of a conjunctive query \cite{DBLP:conf/icalp/Komarath0R22,DBLP:journals/pacmmod/FanKZ24,DBLP:journals/pacmmod/FanKZ25,DBLP:conf/mfcs/BhargavCC025}. To do this we need some definitions, which we essentially take from \cite{DBLP:journals/pacmmod/FanKZ24,DBLP:journals/pacmmod/FanKZ25} but adapted to our notation.

Given a semiring $\mathbb{S} = (S, \oplus, 
\otimes, \bf{0}, \bf{1})$, a signature $\sigma$ and two $\sigma$-structures $\strucA, \strucB$, we define the \emph{sum-product polynomial} of $\mathbb{S}, \strucA, \strucB$ as:
\[
p_{\strucB}^{\strucA} := \bigoplus_{h \in \Hom(\strucA, \strucB)} \bigotimes_{
\substack{\bar{x}  \in R^{\strucA}  \\  R  \in \sigma }} x_{h(\bar{x})}^R. 
\]
Here each $x_{h(\bar{x})}^R$ is a distinct variable which will take values in $S$, the domain of the semiring $\mathbb{S}$. If we assign each variable to a semiring element then the polynomial evaluates to an element of the semiring. 

The aim of this line of work is it to investigate the size of the smallest circuit over $\mathbb{S}$ which computes a given sum-product polynomial. Formally, a circuit over $\mathbb{S}$ is a fan-in two labelled DAG with a unique sink (output gate) whose internal nodes are labelled by $\oplus$ or $\otimes$ and whose leaves (inputs) are labeled by a variables of the form  $x_{h(\bar{x})}^R$ or one of the constants $\bf{0}$, $\bf{1}$. Moreover, such a circuit is \emph{multilinear} if for every $\otimes$-gate, the two child subcircuits are defined on disjoint sets of variables. Conceptually, this is similar to our notion of decomposability. The semantics of the circuits defined in the obvious way; we refer to \cite{DBLP:journals/pacmmod/FanKZ24,DBLP:journals/pacmmod/FanKZ25} for this and other formal details. If a circuit agrees with a polynomial on every assignment to its variables we say that the circuit \emph{computes} the polynomial. 

Note that we may also regard a circuit as a syntactic object and associate with each circuit $C$ a unique polynomial $p$ formed by `expanding out' the circuit.  We say that $p$ is \emph{produced} by $C$. Note that if a circuit produces a polynomial it also computes this polynomial but not necessarily vice-versa. 

Various upper and lower bounds on the size of these circuits have been acquired in the 
aforementioned literature. For example, Komarath, Pandey and Rahul showed that the smallest circuit computing
$p_{\graphK_n}^{\graphG}$ over the counting semiring has size  $\Theta(n^{\tw(\graphG)+1})$ \cite{DBLP:conf/icalp/Komarath0R22}. This is perhaps reminiscent of our Theorem~\ref{thm: tw_struc}, however the models 
used are fundamentally different.

A crucial conceptual difference is that semiring circuits are
monotone in the following sense: from a semiring circuit computing
$p_{\strucB}^{\strucA}$ one can obtain a semiring circuit
computing $p_{\strucB'}^{\strucA}$ for every substructure $\strucB'\subseteq
\strucB$ by replacing the variables
$x^{R}_t$ for every missing tuple $t\in
R^{\strucB}\setminus R^{\strucB'}$ by $\bf{0}$. This monotonicity does not hold for {\ctc}s:
while there is a small \ctc{} for $\Hom(\graphK_k, \graphK_n)$ this
cannot be transformed to a small circuit representing $\Hom(\graphK_k,
\graphH)$ for every
$\graphH\subseteq \graphK_n$ (by Theorem~\ref{thm: tw_struc}).
One consequence of this phenomenon is that the `hard-instances' used
in
\cite{DBLP:conf/icalp/Komarath0R22,DBLP:journals/pacmmod/FanKZ24,DBLP:journals/pacmmod/FanKZ25}
for lower bound arguments corresponds to easy instances in our model. 

As a result of this lower bounds on the size of semiring circuits computing $p_{\strucB}^{\strucA}$ do not in general imply lower bounds on the size of {\ctc}s computing $\Hom(\strucA, \strucB)$. The next lemma shows this formally for circuits over the Boolean semiring $\mathbb{B} = (\{0,1\}, \vee, \wedge, \bot, \top)$.

\begin{lemma}\label{lem:separation}
        Let $\epsilon>0$ and $k$ be an integer. Then for any sufficiently large integer $n$, there exists a $k$-vertex graph $\graphG$ and a graph $\graphH$ with at least $n$-vertices such that:
        \begin{enumerate}
                \item any circuit computing $p_{\graphH}^{\graphG}$ over the Boolean semiring  has size $\Omega(n^{k-\varepsilon})$, and
                \item there exists a {\ctc} of size $\Theta(n)$ computing $\Hom(\graphG, \graphH)$.
        \end{enumerate}
\end{lemma}     

\begin{proof}
        Let $\graphG = \graphK_k$ and  $\graphH$ be the complete $k$-partite graph with partitions of size $n$. That is the vertex set of $\graphH$ is the disjoint union of $k$ sets of vertices $V_1, \dots, V_k$, each of size $n$, and $\{u,v\} \in \edgeH$ iff $u \in V_i$, $v \in V_j$ for some $i \neq j$. Then Fan, Koutris and Zhao~\cite[Lemma 3.2]{DBLP:journals/pacmmod/FanKZ25} showed that there is no circuit computing $p_{\graphH}^{\graphG}$ over the Boolean semiring of size $O(n^{k-\varepsilon})$. This completes the proof of (1).

        For (2) intuitively we just take a product of all the $V_i$. Formally, the following {\ctc} computes $\Hom(\graphG, \graphH)$ where $S_k$ denotes the set of permutations on $[k]$:
        \[
        \bigcup_{\sigma \in S_k} \bigtimes_{i=1}^k 
        \left( \bigcup_{v \in V_i} x_{\sigma(i)} \mapsto v  \right).
        \]
        The result follows.
\end{proof}

\noindent Moreover, it is well-known that lower bounds on circuits over the Boolean semiring imply lower bounds on circuits over any semiring of characteristic zero \cite[Lemma 11]{DBLP:journals/mst/Jukna15}. Therefore, an analogue of Lemma~\ref{lem:separation} holds for any semiring of characteristic zero.

 Interestingly, in the other direction lower bounds on the size of {\ctc}s computing $\Hom(\strucA, \strucB)$ imply lower bounds on the size of multilinear circuits computing $p_{\strucB}^{\strucA}$ over the Boolean semiring.

\begin{lemma}\label{lem:simulation}
There is a fixed polynomial $q$ such that for every pair of relational structures $(\strucA, \strucB)$ and any multilinear circuit $C$ over the Boolean semiring computing $p_{\strucB}^{\strucA}$, there exists a {\ctc} $\C$ computing $\Hom(\strucA, \strucB)$ of size at most $q(|C|)$.
\end{lemma}      

\begin{proof}
We can propagate out constants from $C$ in a standard way, so we assume that no inputs to $C$ are constants. Further, since $p_{\strucB}^{\strucA}$ is homogeneous and $C$ is multilinear by \cite[Theorem 9]{DBLP:journals/mst/Jukna15}, we may assume that $C$ produces $p_{\strucB}^{\strucA}$. From this the following simple but important claim follows. 

\begin{claim} \label{claim:weak_dec}
Let $g$ be a $\wedge$-gate in $C$ and let $c_1, c_2$ be the two children of $g$. Let $x_{h(\bar{x})}^R$ be a variable occurring in the subcircuit rooted at $c_1$ and $x_{h'(\bar{y})}^S$ be a variable occurring in the subcircuit rooted at $c_2$. Further let $z$ be an element occurring as a coordinate of $\bar{x}$ and $\bar{y}$. Then $h(z) = h'(z)$.
\end{claim}

\noindent Now we show how to convert $C$ into a {\ctc} computing $\Hom(\strucA, \strucB)$. First relabel every $\vee$-gate by $\cup$ and every $\wedge$-gate by $\times$. Every input gate is of the form $x_{h(\bar{x})}^R$. We replace this by a $\times$ gate with inputs labelled by $x \mapsto h(x)$ for every $x \in \bar{x}$. This creates a $\{\cup,
        \times\}$-labelled {\upshape DAG} which we call $\C$. Moreover, it is easy to see that  $\C$ is smooth (i.e. it satisfies condition (2) from Definition~\ref{def:circuit}). However it may not satisfy decomposability  (condition (1) from Definition~\ref{def:circuit}).
        
        This is where Claim~\ref{claim:weak_dec} comes in. Due to the claim $\C$ is \emph{weakly decomposable}. This is a notion from the knowledge compilation literature. Formally a $\times$-gate $g$ with children $g_1, g_2$ is weakly decomposable if for every variable $x$ whenever there is a node in the sub-DAG rooted at $g_1$ with a label of the form $x \mapsto d_1$ and a node  in the sub-DAG rooted at $g_2$ with a label of the form $x \mapsto d_2$ then $d_1 = d_2$. Since it is known \cite[Lemma 80]{de2022hard} that any smooth, weakly decomposable $\{\cup,
                \times\}$-labelled {\upshape DAG} can be converted into an equivalent $\{\cup,
                        \times\}$-circuit with only a polynomial size blow-up, the result follows. 
\end{proof}
\noindent This means that lower bounds on {\ctc}s computing sets of homomorphisms imply lower bounds on the size of multilinear circuits computing sum-product polynomials over the Boolean semiring. In fact, by \cite[Theorem 9]{DBLP:journals/mst/Jukna15} if $p_{\strucB}^{\strucA}$ is multilinear (for instance if it corresponds to computing the answer to a self-join free conjunctive query) we also obtain lower bounds in the counting and tropical semirings.

\end{document}